 


\documentclass[11pt]{article}

\usepackage{amsfonts,amssymb,amsmath,latexsym,xfrac,ae,aecompl}

\textheight           	9in
\textwidth                	6.5in
\oddsidemargin         0pt
\evensidemargin     	0pt
\topmargin            	0pt
\marginparwidth   	0pt
\marginparsep         	0pt
\headheight          	0pt
\headsep           	0pt

\newcommand{\F}{\vspace*{\smallskipamount}}
\newcommand{\FF}{\vspace*{\medskipamount}}
\newcommand{\FFF}{\vspace*{\bigskipamount}}
\newcommand{\B}{\vspace*{-\smallskipamount}}

\newcommand{\BBB}{\vspace*{-\bigskipamount}}

\newcommand{\cA}{\mathcal{A}}

\newcommand{\cE}{\mathcal{E}}

\newcommand{\cI}{\mathcal{I}}
\newcommand{\cJ}{\mathcal{J}}

\newcommand{\cO}{\mathcal{O}}

\newcommand{\mE}{\mathbb{E}\,}

\newcommand{\Item}{\B\item}
\newcommand{\Paragraph}[1]{\BBB\paragraph{#1}}
\newcommand{\remove}[1]{}

\newlength{\pagewidth}
\setlength{\pagewidth}{\textwidth}
\addtolength{\pagewidth}{-6em}

\newlength{\captionwidth}
\setlength{\captionwidth}{\textwidth}
\addtolength{\captionwidth}{-6em}

\newcommand{\qed}{\hfill $\square$ \smallbreak}
\newenvironment{proof}{\noindent{\bf Proof:}}{\qed}

\newtheorem{theorem}{Theorem}
\newtheorem{lemma}{Lemma}

\newtheorem{corollary}{Corollary}
\newtheorem{proposition}{Proposition}

\begin{document}

\baselineskip           	3ex
\parskip                	1ex

\title{Anonymous  Processors with Synchronous Shared Memory \footnotemark[1]\FFF\FFF\FFF}

\author{	Bogdan S. Chlebus\,\footnotemark[2]   	
		\and
		Gianluca De Marco\,\footnotemark[3]
		\and
		Muhammed Talo\,\footnotemark[2]}

\footnotetext[1]{The work of the first author was supported by the National Science Foundation under Grant 1016847.}

\footnotetext[2]{Department of Computer Science and Engineering, 
			University of Colorado Denver, 
			Denver, Colorado 80217, USA.}

\footnotetext[3]{Dipartimento di Informatica,
                  Universit\`a degli Studi di Salerno,
                  Fisciano, 84084 Salerno, Italy.}

\date{}

\maketitle

\vfill


\begin{abstract}
We consider synchronous distributed systems in which anonymous processors communicate by shared read-write variables.
The goal is to have all the processors assign unique names to themselves.
We consider the instances of this problem determined by whether the number $n$ is known or not, and whether concurrently attempting to write distinct values into the same  memory cell is allowed or not, and whether the number of shared variables is a constant independent of~$n$ or it is unbounded.
For known~$n$, we give Las Vegas algorithms that operate in the optimum expected time, as determined by the amount of available shared memory,  and use the optimum $\cO(n\log n)$ expected number of random bits.
For unknown~$n$, we give Monte Carlo algorithms that produce correct output upon termination with probabilities that are $1-n^{-\Omega(1)}$, which is best possible when terminating almost surely and using $\cO(n\log n)$  random bits.

\vfill

~

\noindent
\textbf{Key words:}
anonymous processors,
synchrony,
shared memory,
read-write registers,
naming,
randomized algorithm,
Las Vegas algorithm,
Monte Carlo algorithm,
lower bound.
\end{abstract}

\vfill

\thispagestyle{empty}

\setcounter{page}{0}											

\newpage

\section{Introduction} 

\label{sec:introduction}

We consider a distributed system in which some $n$ processors communicate using read-write shared memory.
It is assumed that operations performed on shared memory occur synchronously, in that executions of algorithms are structured as sequences of globally synchronized rounds.
The model of synchronous systems with read-write registers is known as the Parallel Random Access Machine (PRAM).
It is a generalization of the Random Access Machine model of sequential computation~\cite{Boas90} to the realm of synchronous concurrent processing.

We study the problem of assigning unique integer names from the interval $[1,n]$ to the $n$  processors of a PRAM, when originally the processors do not have distinct identifiers.
This task is called \emph{naming} and is understood such that all the processors cooperate  by executing a distributed algorithm to assign unique names to themselves.
We assume that the original anonymous processors do not have any feature facilitating identification or distinguishing one from another.
When processors of a distributed/parallel system are anonymous then the task of assigning  unique identifiers to all processors is a key step in making the system fully operational, because names are needed for executing deterministic algorithms.

The task to assign unique names to anonymous processes by themselves in distributed systems can be considered as a stage in either building such systems or making them fully operational.
Correspondingly, this may be categorized as either an architectural challenge or an algorithmic one.
For example, tightly synchronized message passing systems are typically considered under the assumption that processors are already equipped with unique identifiers.
This is because such systems impose strong demands on the architecture and the task of assigning  identifiers to processors is modest when compared to providing synchrony.
Similarly, when synchronous parallel machines are designed, then processors may be identified by how they are attached to the underlying communication network.
In contrast to that, PRAM is a virtual model in which processors communicate via shared memory; see an exposition of PRAM as a programming environment given by Keller et al.~\cite{KellerKL-book-2001}.
This model does not assume any relation between the shared memory and the processors that would be conducive to identifying processors.

Distributed systems with shared read-write registers are usually considered to be asynchronous.
Synchrony in such environments can be added by simulation rather than by a supportive architecture or an underlying communication network.
Processes do not need to be hardware nodes, instead, they can be virtual computing agents.
When a synchronous PRAM is considered, as  obtained by a simulation, then the underlying system architecture does not facilitate identifying processors, and so we do not necessarily expect that processors are equipped with distinct identifiers at the start of a simulation.

We view PRAM as an abstract construct which provides a  distributed environment to develop algorithms with multiple agents/processors working concurrently; see Vishkin~\cite{Vishkin11} for a comprehensive exposition of PRAM as a vehicle facilitating parallel programing and harnessing the power of multi-core computer architectures.
Assigning  names to processors by themselves in a distributed manner is a plausible stage in an algorithmic development of such environments, as it cannot be delegated to the stage of building hardware of a parallel machine.

We consider two categories of naming problems depending on how much shared memory is available for a PRAM.
In one case, the memory is bounded, in that just a constant number of memory cells is available.
This means that the amount of memory is independent from the number of processors~$n$ but as large as needed in  an algorithm's design.
In the other case, the number of shared memory cells is unbounded, in the sense that it unlimited in principle but how much is actually used by an algorithm depends on~$n$.
When it is assumed that an unbounded amount of memory cells is available, then the expected number of  memory cells that are actually used may be considered as a performance metric.

Independently of the amount of shared memory available, we consider two versions of the naming problem, determined by the semantics of concurrent writing in the underlying model of computation.
This is represented  by the corresponding  PRAM variants, which are either the Arbitrary PRAM or the Common PRAM.
The Arbitrary PRAM allows to attempt to write concurrently distinct values  into a register, and an arbitrary one of them gets written.
The Common PRAM variant allows only equal values to be concurrently written into a register.

Randomized algorithms are typically categorized as either Las Vegas or Monte Carlo; this categorization is understood as follows.
A randomized algorithm is Las Vegas when it terminates almost surely and the algorithm returns a correct output upon termination.
A randomized algorithm is Monte Carlo when it terminates almost surely and an incorrect output may be produced upon termination, but the probability of error converges to zero with the number of processors growing unbounded.

\newcommand{\RB}{\raisebox{3ex}{~}}
\newcommand{\LB}{\raisebox{-1.5ex}{~}}

\begin{table}[t]
\begin{center}
\begin{tabular}{|l| c |c |l |}
\hline
\RB \LB
PRAM Model&Memory&Time&Algorithm\\
\hline
\hline
\RB \LB
Arbitrary&$\cO(1)$&$\cO(n)$&\textsc{Arbitrary-Bounded-LV} in Section~\ref{sec:arbitrary-bounded-LV}\\
\hline
\RB \LB
Arbitrary&$\cO(n/\log n)$&$\cO(\log n)$&\textsc{Arbitrary-Unbounded-LV} in Section~\ref{sec:arbitrary-unbounded-LV}\\
\hline
\RB \LB
Common&$\cO(1)$&$\cO(n\log n)$&\textsc{Common-Bounded-LV} in Section~\ref{sec:common-bounded-LV}\\
\hline
\RB \LB
Common&$\cO(n)$&$\cO(\log n)$&\textsc{Common-Unbounded-LV} in Section~\ref{sec:common-unbounded-LV}\\
\hline
\end{tabular}
\parbox{\pagewidth}{\FF\caption{\label{tab:summary-LV} 
Four naming problems for known $n$, as determined by the PRAM model and the available amount of shared memory, with the respective performance bounds. 
All algorithms are Las Vegas.
}}
\end{center}
\end{table}

We say that a parameter of an algorithmic problem is \emph{known} when it can be used in a code of an algorithm.
When the number of processors $n$ is known, then we give Las Vegas algorithms for each of the four cases of naming determined by the kind of PRAM model and the amount of shared memory.
When the number of processors is unknown, then we give Monte Carlo algorithms for each of the respective four  cases of naming.

\Paragraph{The summary of the results.}

We consider randomized algorithms executed by anonymous processors that operate in a synchronous manner using read-write shared memory with the goal to assign unique names to the processors. 
The algorithms have to be randomized (no deterministic exist) and when the number of processors is unknown then they need to be Monte Carlo (no Las Vegas exist).

We show that naming algorithms for $n$ processors using $C>0$ shared memory cells need to operate in $\Omega(n/C)$ expected time on an Arbitrary PRAM, and in $\Omega(n\log n/C)$ expected time on a Common PRAM.
We prove additionally that any naming algorithm needs to work in the expected time $\Omega(\log n)$; this bound is relevant only when there is unbounded supply of shared memory.
We show that, for unknown $n$, a Monte Carlo naming algorithm that uses $\cO(n\log n)$ random bits has to  fail to assign unique names with probability that is~$n^{-\Omega(1)}$.

We consider eight specific naming problems for PRAMs.
They are determined by the following independent specifications: whether $n$ is known or not, what is the amount of shared memory (constant versus  unbounded),  and by the PRAM variant (Arbitrary versus Common).

For the case of known $n$, the naming algorithms we give are all Las Vegas.
The naming problems' specifications and the corresponding algorithms with their performance bounds are summarized in Table~\ref{tab:summary-LV}.
These algorithms operate in asymptotically optimal times, for a given amount of shared memory, and use the optimum expected number $\cO(n\log n)$ of random bits.
When the amount of memory is unbounded, they use only the amount of space that is provably necessary to attain their running-time performance. 

For the case of unknown $n$,  the naming algorithms we give are all Monte Carlo.
The list of the naming problems' specifications and the corresponding algorithms with their performance bounds are summarized in Table~\ref{tab:summary-MC}.
All Monte Carlo algorithms that we give have the polynomial probability of error, which is best possible when using the $\cO(n\log n)$ expected number of random bits.
When the shared memory is bounded, then these algorithms operate in asymptotically optimal times, for bounded memory, and use the optimum expected number $\cO(n\log n)$ of random bits.
When there is unbounded supply of shared memory, then we give two variants of the algorithms for Arbitrary PRAM and two for Common PRAM, with the goal to optimize different performance metrics.
The set of integers used as names always makes a contiguous segment starting from the smallest name~$1$, so that the only possible kind of error is in assigning duplicate names.

\begin{table}[t]
\begin{center}
\begin{tabular}{|l| c |c |l |}
\hline
\RB \LB
PRAM Model&Memory&Time&Algorithm\\
\hline
\hline
\RB \LB
Arbitrary&$\cO(1)$&$\cO(n)$&\textsc{Arbitrary-Bounded-MC} in Section~\ref{sec:arbitrary-bounded-MC}\\
\hline
\RB \LB
Arbitrary&unbounded&polylog&\textsc{Arbitrary-Unbounded-MC} in Section~\ref{sec:arbitrary-unbounded-MC}\\
\hline
\RB \LB
Common&$\cO(1)$&$\cO(n\log n)$&\textsc{Common-Bounded-MC} in Section~\ref{sec:common-bounded-MC}\\
\hline
\RB \LB
Common&unbounded&polylog&\textsc{Common-Unbounded-MC} in Section~\ref{sec:common-unbounded-MC}\\
\hline
\end{tabular}
\FF

\parbox{\pagewidth}{\FF\caption{\label{tab:summary-MC} 
Four naming problems for unknown $n$, as determined by the PRAM model and the available amount of shared memory, with the respective performance bounds. 
All the algorithms are Monte Carlo.
When time is marked as ``polylog'' then this means that the algorithm comes in two variants, such that in one the expected time is $\cO(\log n)$ and the amount of used shared memory is suboptimal $n^{\cO(1)}$, and in the other the expected time is suboptimal $\cO(\log^2 n)$ but the  amount of used shared memory misses optimality only by at most a logarithmic factor.
}}
\end{center}
\end{table}

\Paragraph{Previous and related work.}

The naming problem for a synchronous PRAM has not been previously considered in the literature, to the best of the authors' knowledge.
There is a voluminous literature on various aspects of computing and communication in anonymous systems, the following review is necessarily selective.

We begin with the previous work on naming in shared-memory systems with read-write registers.
A systematic exposition of shared-memory algorithm can be found in~\cite{Attiya-Welch-book2004}, when approached from the distributed-computing perspective, and in~\cite{JaJa92-book}, when approached  from the parallel-computing one.

Lipton and Park~\cite{LiptonP-1990} considered naming in asynchronous distributed systems with read-write shared memory controlled by adaptive schedulers; they proposed a solution that terminates with positive probability, and which can be made arbitrarily close to~$1$ assuming that $n$ is known.
E{\u{g}}ecio{\u{g}}lu and Singh~\cite{EgeciogluS94}  proposed a polynomial-time Las Vegas naming algorithm for asynchronous  systems with known~$n$ and read-write shared memory with oblivious scheduling of events. 
Kutten et al.~\cite{KuttenOP00} provided a thorough study of naming in  asynchronous systems of shared read-write memory.
They gave a Las Vegas algorithm for an oblivious scheduler for the case of known~$n$, which works in the expected time~$\cO(\log n)$ while using $\cO(n)$ shared registers, and also showed that a logarithmic time is required to assign names to anonymous processes.
Additionally, they showed  that if $n$ is unknown then a Las Vegas naming algorithm does not exist,  and a finite-state Las Vegas naming algorithm can work only for an oblivious scheduler.
Panconesi et al.~\cite{PanconesiPTV98} gave a randomized wait-free naming algorithm for  anonymous systems with processes prone to crashes that communicate by single-writer registers. The model considered in that work assigns unique single-writer registers to nameless processes and so has a potential to defy the impossibility of wait-free naming for general multi-writer registers, that impossibility proved by Kutten et al.~\cite{KuttenOP00}.
Buhrman et al.~\cite{BuhrmanPSV-2006} considered the relative complexity of naming and consensus problems in asynchronous systems with shared memory that are prone to crash failures, demonstrating that naming is harder than consensus.

Next, we review work on problems in anonymous distributed systems different from naming.
Aspnes et al.~\cite{AspnesFR06} gave a comparative study of anonymous distributed systems with different communication mechanisms, including broadcast and shared-memory objects of various functionalities, like read-write registers and counters. 
Alistarh et al.~\cite{AlistarhAGGG10} gave randomized renaming algorithms that act like naming ones, in that process identifiers are not referred to; for more or renaming see~\cite{AlistarhACGG14,AttiyaBDPR90,ChlebusK08}.
Aspnes et al.~\cite{AspnesSS02} considered solving consensus in anonymous systems with infinitely many processes. 
Attiya et al.~\cite{AttiyaGM02} and Jayanti and Toueg~\cite{JayantiT90}  studied the impact of  initialization of shared registers on solvability of tasks like consensus and wakeup in fault-free anonymous systems.
Bonnet et al.~\cite{BonnetR-2011} considered solvability of consensus in anonymous systems with processes prone to crashes but augmented with failure detectors.
Guerraoui and Ruppert~\cite{GuerraouiR-2007} showed that certain tasks like time-stamping, snapshots and consensus have deterministic solutions in anonymous systems  with shared read-write registers and with processes  prone to crashes.
Ruppert~\cite{Ruppert2007} studied the impact of anonymity of processes on wait-free computing and mutual implementability of types of shared objects.

The problem of concurrent communication in anonymous networks was first considered by Angluin~\cite{Angluin80}.
That work showed, in particular, that randomization was needed in naming algorithms when executed in environments that are perfectly symmetric; other related impossibility results were surveyed by Fich and Ruppert~\cite{FichR03}.

The work about anonymous networks that followed was either on specific network topologies or on problems in general message-passing systems.
Most popular specific topologies included that of a ring and hypercube.
In particular, the ring topology was investigated by Attiya et al.~\cite{AttiyaS91,AttiyaSW88}, Flocchini et al.~\cite{FlocchiniKKLS04},  Diks et al.~\cite{DiksKMP95}, Itai and Rodeh~\cite{ItaiR90}, and Kranakis et al.~\cite{KranakisKL01}, and the hypercube topology was studied by Kranakis and Krizanc~\cite{KranakisK97} and Kranakis and Santoro~\cite{KranakisS01}.

The work on algorithmic problems in anonymous networks of general topologies or anonymous/named  agents in anonymous/named networks included the following contributions. 
Afek and Matias~\cite{AfekM94} and Schieber and Snir~\cite{SchieberS94} considered leader election, finding spanning trees and naming in general anonymous networks.
Angluin et al.~\cite{AngluinAER05} studied adversarial communication by anonymous agents and Angluin et al.~\cite{AngluinAFJ08} considered self-stabilizing protocols for anonymous asynchronous agents deployed in a network of unknown size.
Chalopin et al.~\cite{ChalopinMM12} studied naming and leader election in asynchronous networks when a node knows the map of the network but its position on the map is unknown.
Chlebus et al.~\cite{ChlebusDT-beeping-2016} considered assigning names to anonymous stations attached to a channel that allows only beeps to be heard.
Chlebus et al.~\cite{ChlebusDP94} investigated anonymous complete networks whose links and nodes are subject to random independent failures in which single fault-free node has to wake up all nodes by propagating a wakeup message through the network.
Dereniowski and Pelc~\cite{DereniowskiP14} considered leader election among anonymous agents in anonymous networks.
Dieudonn{\'{e} and Pec~\cite{DieudonneP16} studied teams of anonymous mobile agents in networks that execute deterministic algorithm with the goal to convene at one node.
Fraigniaud et al.~\cite{FraigniaudPPP01} considered naming in anonymous networks with one node distinguished as leader.
G\k{a}sieniec et al.~\cite{GasieniecKKZ06} investigated anonymous agents pursuing the goal to meet at a node or edge of a ring. 
Glacet et al.~\cite{GlacetMP16} considered leader election in anonymous trees.
Kowalski and Malinowski~\cite{KowalskiM08} studied named agents meeting in anonymous networks.
Kranakis et al.~\cite{KranakisKB94} investigated computing boolean functions on anonymous networks.
M{\'{e}}tivier et al.~\cite{MetivierRZ15} considered naming anonymous unknown graphs. 
Michail et al.~\cite{MichailCS13} studied the problems of naming and counting nodes in dynamic anonymous networks.
Pelc~\cite{Pelc07} considered activating an anonymous ad hoc radio network from a single source  by a deterministic algorithm. 
Yamashita and Kameda~\cite{YamashitaK96} investigated topological properties of anonymous networks that are conducive to have deterministic solutions for representative algorithmic problems.

General questions of computability in anonymous message-passing systems implemented in networks were studied by Boldi and Vigna~\cite{BoldiV01}, Emek et al.~\cite{EmekSW14}, and Sakamoto~\cite{Sakamoto99}.

Lower bounds on PRAM were given by Fich et al.~\cite{FichHRW85}, Cook et al.~\cite{CookDR86}, and Beame~\cite{Beame88},  among others.
A review of lower bounds based on information-theoretic approach is given by Attiya and Ellen~\cite{Attiya-Ellen-book-2014}.
Yao's minimax principle was given by Yao~\cite{Yao77}; the book by Motwani and Raghavan~\cite{MotwaniR95} gives examples of applications.

\section{Technical Preliminaries}

\label{sec:technical-preliminaries}

A distributed  system with shared memory in which some $n$ processors operate concurrently is the model of computation that we use in this paper.
The essential properties of such systems that we assume are, first, that shared memory cells have only reading/writing capabilities, and, second, that operations of accessing  shared registers are globally synchronized so that processors work in lockstep.

An execution of a synchronous algorithm is structured as a sequence of \emph{rounds} so that each processor performs either a read from a shared memory cell or a write to a shared memory cell in a round.
We assume that a processor carries out its private computation in a round in a negligible portion of the round.
An invocation of either reading from or writing to a memory location is completed in the round of invocation.
This model of computation is referred to in the literature as the \emph{Parallel Random Access Machine (PRAM)}; see   \cite{JaJa92-book,Reif-editor-book}.

PRAM  is usually defined as a model with unbounded number of shared-memory cells, by analogy with the random-access machine (RAM) model for sequential computing~\cite{Boas90}.
In this paper, we consider the following two instantiations of the PRAM model, determined by the amount of the available shared memory. 
In one situation, there is a constant number of shared memory cells, which is independent of the number of processors~$n$ but as large as needed in the specific algorithm.
In the other case, the number of shared memory cells is unbounded in principle, but the expected number of shared registers accessed in an execution depends on~$n$ and is sought to be minimized.

Each shared memory cell is assumed to be initialized to~$0$ as a default value.
This assumption simplifies the exposition, but it is not crucial as any algorithm assuming such an initialization can be modified  to work with dirty memory; for example, one can apply an approach  similar to that in~\cite{KuttenOP00}.
A shared memory cell can store any value as needed in algorithms, in particular, integers of magnitude that may depend on~$n$; all our algorithms require a memory cell to store $\cO(\log n)$ bits.
Processors can generate as many private random bits per round as needed; all these random bits generated in an execution are assumed to be independent.

\Paragraph{PRAM variants.}

Two operations are said to be performed concurrently when they are invoked in the same round of an execution of an algorithm on a PRAM.
A \emph{concurrent read} occurs when a group of processors read from the same memory cell in the same round;  this results in each of these processors obtaining the value stored in the memory cell at the end of the preceding round.
A \emph{concurrent write} occurs when a group of processors invoke a write to the same memory cell in the same round.

Without loss of generality, we may assume that a concurrent read of a memory cell and a concurrent write to the same memory cell do not occur simultaneously: this is because we could designate rounds only for reading and only for writing depending on their parity, thereby slowing the algorithm by a factor of two.

The meaning of concurrent reading from the same memory cell is straightforward, in that all the readers get the value stored in this memory cell.
We need to specify which value gets written to a memory cell in a concurrent write, when multiple distinct values are attempted to be written.
Such stipulations determine the corresponding variants of the model.
We will consider algorithms for the following two PRAM variants determined by their respective concurrent-write semantics.
\begin{description}
\item[\rm \em Common PRAM] is defined by the property that when a group of  processors want to write to the same shared memory cell in a round then all the values that any of the processors want to write must be identical, otherwise the operation is illegal.
Concurrent attempts to write the same value to a memory cell result in this value getting written in this round.
\item[\rm \em Arbitrary PRAM] allows attempts to write any legitimate values to the same memory cell in the same round.
When this occurs, then one of these values gets written, while a selection of this written value is arbitrary.
All possible selections of values that get written need to be taken into account when arguing about correctness of an algorithm.
\end{description}

We will rely on certain standard algorithms developed for PRAMs, as explained in~\cite{JaJa92-book,Reif-editor-book}.
One of them is for prefix-type computations.
A typical situation in which it is applied occurs when there is an array of $m$ shared memory cells, each memory cell storing either $0$ or~$1$.
This may represent an array of bins where $1$ stands for a nonempty bin while $0$ for an empty bin.
Let the rank of a nonempty bin of address~$x$ be the number of nonempty bins with addresses smaller than or equal to~$x$.
Ranks can be computed in time $\cO(\log m)$ by using an auxiliary memory of~$\cO(m)$ cells, assuming there is at least one processor assigned to each nonempty bin, while other processors do not participate; such assignment for anonymous processors will be determined by writes they performed to the bins.
The underlying idea is that bins are associated with the leaves of a binary tree.
The processors traverse a binary tree from the leaves to the root and back to the leaves.
When updating information at a node, only  the information stored at the parent, the sibling and the children is used.

We may observe that the same memory can be used repeatedly when such computation needs to be performed multiple times on the same tree.
A possible approach is to verify if the information at a needed memory cell, representing either a parent, a sibling or a child of a visited node, is fresh or rather stale from previous executions.
This could be accomplished in the following three steps by a processor.
First, the processor erases a memory cell it needs to read by rewriting its present value by a blank value. 
Second, the processor writes again the value at the node it is visiting, which may have been erased in the previous step by other processors that need the value.
Finally, the processor reads again the memory cell it has just erased,  to see if it stays erased, which means its contents were stale,  or not, which in turn means its contents got rewritten so they are fresh.

\Paragraph{Balls into bins.}

In the course of probabilistic analysis of algorithms, we will often model actions of processors by  throwing balls into bins.
This can be done in two natural ways.
One is such that memory addresses are interpreted as bins and the values written represent balls, possibly with labels.
Then the total number of balls considered will always be~$n$, that is, equal to the number of processors of  a PRAM.
Another possibility is when bins represent rounds and selecting a bin results in performing a write to a suitable shared register in the respective round. 

The following terms refer to the status of a bin in a given round.
A bin is called \emph{empty} when there are no balls in it.
A ball is \emph{single} in a bin when there are no other balls in the same bin, and such a bin can be called \emph{singleton}.
A bin is \emph{multiple} when there are at least two balls in it.
A \emph{collision} occurs in a multiple bin.
Finally, a bin with at least one ball is \emph{occupied}.

The \emph{rank} of a bin containing a ball is the number of bins with smaller or equal names that contain balls.
When each processor, in a group of processors that still seek names, throws a ball  and there is no collision then this breaks symmetry in a manner that in principle could facilitate assigning unique names to processors in the group, namely, the ranks of selected bins may serve as names.

Throwing balls into bins will be performed repeatedly in each instance of modeling the behavior of an algorithm.
Each instance of throwing a number of balls into bins is then called a \emph{stage}.
There will be an additional numeric parameter $\beta>0$,  and we call the process of throwing balls into bins the \emph{$\beta$-process}, accordingly.
This parameter $\beta$ may determine the number of bins in a stage and also when a stage is the last one in an execution of the $\beta$-process.
The specifications of any such a $\beta$-process apply only within the section in which it is defined.

In a given stage of a $\beta$-process, the balls that are thrown into the bins are called \emph{eligible} for the stage; all the $n$ balls are eligible for the first stage.
When a bin is selected for a ball to be placed in, then this occurs uniformly at random over the range of bins, and independently over the considered balls.
Each selection of a bin for an eligible balls requires the number of random bits equal to the binary logarithm of the number of the available bins.

When we sum up the numbers of available bins over all the stages of an execution of a $\beta$-process  until termination, then the result is \emph{the number of bins ever needed}  in this execution.
Similarly, \emph{the number of bits ever generated} in an execution of a $\beta$-process is the sum of all the numbers of random bits needed to be generated to place balls, over all the stages and balls until termination of this  execution.

The idea of representing attempts to assign names as throwing balls into bins is quite generic.
In particular, it was applied by E{\u{g}}ecio{\u{g}}lu and Singh~\cite{EgeciogluS94}, who proposed a synchronous algorithm that repeatedly throws all balls together into all available bins, the selections of bins for balls made independently and uniformly at random.
In their algorithm for $n$ processors, we can use $\gamma \cdot n$ memory cells, where $\gamma>1$.
Let us choose $\gamma=3$ for the following calculations to be specific.
This algorithm has an exponential expected-time performance.
To see this, we estimate the probability that each bin is either singleton or empty.
Let the balls be thrown one by one.
After the first $n/2$ balls are in singleton bins, the probability to hit an empty bin is at most~$\frac{2.5 n}{ 3 n}=\frac{5}{6}$; we treat this as a success in a Bernoulli trial.
The probability of $n/2$ such successes is at most~$(\frac{5}{6})^{n/2}$, so the expected time to  wait for the algorithm to terminate is at least~$\bigl(\sqrt{\frac{6}{5}}\bigr)^n$, which is exponential in~$n$.

We consider related processes that could be as fast as $\cO(\log n)$ in expected time, while still using only $\cO(n)$ shared memory cells, see Section~\ref{sec:common-unbounded-LV}.
The idea is to let balls in singleton bins stay put and only move those that collided with other balls by landing in  bins that became thereby multiple.
To implement this on a Common PRAM, we need a way to detect collisions, which we explain next.

\Paragraph{Verifying collisions.}

We will use a randomized procedure for Common PRAM to verify if a collision occurs in a bin, say, a bin~$x$, which is executed by each processor that selected bin~$x$.
This procedure \textsc{Verify-Collision} is represented in Figure~\ref{proc:verify-collision}.
There are two arrays \texttt{Tails} and \texttt{Heads} of shared memory cells.
Bin $x$ is verified by using memory cells \texttt{Tails}[$x$] and \texttt{Heads}[$x$]. 
First, the memory cells \texttt{Tails}[$x$] and \texttt{Heads}[$x$] are set  each to false.
Next, each processors selects randomly and independently one of these memory cells and sets it to true.
Finally, every processor reads reads both \texttt{Tails}[$x$] and \texttt{Heads}[$x$] and detects a collision upon reading \texttt{true} twice.


\begin{figure}[t]
\rule{\textwidth}{0.75pt}

\F 
\textbf{Procedure} \textsc{Verify-Collision}\,$(x)$

\rule{\textwidth}{0.75pt}
\begin{center}
\begin{minipage}{\pagewidth}
\begin{description}
\item[\rm initialize] $\texttt{Heads}[x] \gets \texttt{Tails}[x] \gets \texttt{false}$  
\item[\tt toss$_v$] $\gets$  outcome of tossing a fair coin\ 
\item[\tt if] \texttt{toss}$_v$ = tails 
\begin{description}
\item[\texttt{then}] \texttt{Tails}[$x$] $\gets \texttt{true}$  

\item[\texttt{else}] \texttt{Heads}[$x$] $\gets \texttt{true}$
\end{description}
\item[\tt return] (\texttt{Tails}[$x$] = \texttt{Heads}[$x$])
\end{description}
\end{minipage}
\FFF

\rule{\textwidth}{0.75pt}

\parbox{\captionwidth}{\caption{\label{proc:verify-collision}
A pseudocode for a  processor~$v$ of a Common PRAM, where $x$ is a positive integer. 
\texttt{Heads} and \texttt{Tails} are arrays of shared memory cells.
When the parameter $x$ is dropped in a call then this means that $x=1$.
The procedure returns \texttt{true} when a collision is detected.
}}
\end{center}
\end{figure}


\begin{lemma}
\label{lem:verify-collision}

For an integer $x$, procedure \textsc{Verify-Collision}\,$(x)$ executed by one processor never detects a collision, and when multiple processors execute this procedure then a collision is detected with probability at least $\frac{1}{2}$.
\end{lemma}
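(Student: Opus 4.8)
The plan is to analyze the procedure \textsc{Verify-Collision}\,$(x)$ by tracking the values held in the two shared cells \texttt{Tails}$[x]$ and \texttt{Heads}$[x]$ at the end of the round in which the writes occur. First I would dispose of the single-processor case: if only one processor $v$ executes the procedure, then exactly one of the two cells is set to \texttt{true} (the one dictated by \texttt{toss}$_v$) while the other remains \texttt{false} from the initialization step; hence the final read yields \texttt{Tails}$[x] \neq \texttt{Heads}[x]$, so the returned value is \texttt{false} and no collision is ever (spuriously) detected. Note this uses that on a Common PRAM the initialization writes are all identical (\texttt{false}), so they are legal and leave both cells \texttt{false}.

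Next I would treat the case of $k \geq 2$ processors executing the procedure concurrently on bin~$x$. Each processor independently tosses a fair coin; a processor writes \texttt{true} to \texttt{Heads}$[x]$ on heads and to \texttt{Tails}$[x]$ on tails. A collision is detected by every processor precisely when both cells end up holding \texttt{true} at the end of the write round, i.e. when the $k$ coin tosses are not all equal — at least one head and at least one tail. Since the tosses are mutually independent fair coins, the probability that they are all heads is $2^{-k}$ and likewise all tails is $2^{-k}$, so the probability of a mixed outcome is $1 - 2^{1-k}$. For $k = 2$ this is exactly $\tfrac12$, and it only increases with $k$; therefore a collision is detected with probability at least $\tfrac12$, as claimed. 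I would also remark that all writes of \texttt{true} to a given cell are writes of the same value, so the procedure is well-defined on a Common PRAM regardless of how many processors participate.

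The only subtlety — and the one point I would be careful to state explicitly rather than gloss over — is the semantics of the concurrent writes in the round where the coin-determined assignments happen: I must argue that a cell holds \texttt{true} at the end of that round if and only if \emph{at least one} processor wrote \texttt{true} to it, and otherwise retains its initialized \texttt{false}. On the Common PRAM this is immediate because the only values ever written to a cell in that round are \texttt{true} (all identical), so concurrent writing is legal and the cell's post-round content is \texttt{true} exactly when the write set for that cell is nonempty. This is the crux; the probability computation itself is then a one-line Bernoulli calculation. I do not expect any real obstacle here — the lemma is essentially a definition-chase plus an elementary independence argument — but stating the concurrent-write behavior precisely is what makes the proof airtight.
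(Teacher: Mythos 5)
Your proof is correct and follows essentially the same route as the paper's: the same two-case split, and the same observation that a collision goes undetected exactly when all coin tosses agree, with probability $2^{-k+1}\le\frac12$ for $k\ge 2$. The extra care you take with the Common-PRAM concurrent-write semantics is a welcome explicit remark but does not change the argument.
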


\begin{proof}
When only one processor executes the procedure, then first the processor sets both \texttt{Heads}[$x$] and \texttt{Tails}[$x$] to false and next only one of them to true.
This guarantees that \texttt{Heads}[$x$] and \texttt{Tails}[$x$] store different values and so collision is not detected.
When some $m>1$ processors execute the procedure, then collision is not detected only when either all processors set \texttt{Heads}[$x$] to true or all processors set \texttt{Tails}[$x$] to true.
This means that the processors generate the same outcome in their coin tosses. 
This occurs with probability $2^{-m+1}$, which is at most $\frac{1}{2}$.
\end{proof}

\Paragraph{Pseudocode conventions and notations.}

We specify algorithms using pseudocode conventions natural for the PRAM model.
An example of such a representation is in Figure~\ref{proc:verify-collision}.
These conventions are summarized as follows.

There are two kinds of variables: shared and private.
The names of shared variables start with capital letters and the names of private ones are all in small letters.
To emphasize that a private variable $x$ is such that its value may depend on a processor~$v$ in a round, we may denote $x$ by~$x_v$ in a pseudocode for~$v$.

When~$x$ is a private variable that may have different values at different processors at the same time, then  we denote this variable used by a processor~$v$ by~$x_v$.
Private variables that have the same value at the same time in all the processors are usually used without subscripts, like variables controlling for-loops.
An assignment instruction of the form $x\gets y\gets\ldots\gets z\gets \alpha$, where $x,y,\dots, z$ are variables and $\alpha$ is a value, means to assign~$\alpha$ as the value to be stored in all the listed variables~$x,y,\dots, z$.

We want that, at any round of an execution, all the processors that have not terminated yet are executing the same line of the pseudocode.
In particular, when an instruction is conditional on a statement then a processor that does not meet the condition pauses as long as it would be needed for all the processors that meet the condition complete their instructions, even when there are no such processors.
If this is not a constant-time instruction, then it may incur an unnecessary time cost.
To avoid this problem, we may first verify if there is some processor that satisfies the condition, which can be done in constant time.

We use three notations for logarithms.
The notation $\lg x$ stands for the logarithm of $x$ to the base~$2$.
The notation $\ln x$ denotes the natural logarithm of~$x$.
When the base of logarithms does not matter then we use~$\log x$, like in the asymptotic notation~$\cO(\log x)$.

\Paragraph{Properties of  naming algorithms.}

Naming algorithms in distributed environments involving multi-writer read-write shared memory have to be randomized to break symmetry~\cite{Angluin80,Attiya-Welch-book2004}.
An eventual assignment of proper names cannot be a sure event, because, in principle, two processors can generate the same strings of random bits in the course of an execution.
We say that an event \emph{is almost sure}, or \emph{occurs almost surely}, when it occurs with probability~$1$. 
When $n$ processors generate their private strings of random bits then it is an almost sure event that all these strings are eventually pairwise distinct.
Therefore, a most advantageous scenario that we could expect, when a set of $n$ processors is to execute a randomized naming algorithm, is that the algorithm eventually terminates almost surely and that at the moment of termination the output is \emph{correct}, in that the assigned names are without duplicates and fill the whole interval $[1,n]$.

Randomized naming algorithms are categorized as either Monte Carlo or Las Vegas, which are defined as follows.
A randomized algorithm is \emph{Las Vegas} when it terminates almost surely and the algorithm returns a correct output upon termination.
A randomized algorithm is \emph{Monte Carlo} when it terminates almost surely and an incorrect output may be produced upon termination, but the probability of error converges to zero with the size of input growing unbounded.

We give algorithms that use the expected number of $\cO(n \log n)$ random bits with a large probability.
This amount of random information is necessary if an algorithm is to terminate almost surely.
The following fact is essentially a folklore, but since we do not know if it was ever proved  in the literature, we give a proof for completeness' sake.
Our arguments resort to the notions of information theory~\cite{CoverT-book}.


\begin{proposition}
\label{pro:lower-bound-on-random-bits}

If a randomized naming algorithm is correct with probability~$p_n$, when executed by $n$ anonymous processors, then it requires $\Omega(n\log n)$ random bits with probability at least~$p_n$.
In particular, a Las Vegas naming algorithm for $n$ processors uses $\Omega(n\log n)$ random bits almost surely.
\end{proposition}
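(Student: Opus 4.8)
The plan is to establish the stronger, deterministic fact that \emph{every} execution that produces a correct naming consumes $\Omega(n\log n)$ random bits; the probabilistic statement then follows immediately, because the event ``the output is correct'' is contained in the event ``at least $\Omega(n\log n)$ random bits are generated'', so the latter has probability at least $p_n$, and in the Las Vegas case one simply sets $p_n=1$.

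The heart of the argument is an indistinguishability observation that exploits anonymity. I would model the private randomness of each processor as an infinite fair-coin tape from which the processor consumes a prefix, and let $r_v$ be the finite string of $m_v$ random bits actually consumed by processor $v$ during the execution under consideration. The key lemma, proved by induction on rounds, is: if the tape of a processor $w$ begins with the string $r_v$, then throughout the execution $w$ passes through exactly the same sequence of states as $v$ (it reads the same cells and obtains the same values, since shared cells are global; writes the same values; and draws the same random bits), consumes exactly $m_v$ bits, and in particular assigns itself the same name as $v$. The inductive step uses that the number of random bits a processor draws in a round is a deterministic function of its current state, that the two processors start in the same initial state by anonymity, and that the coupling is not destroyed by the rest of the system, since both processors see the same shared memory and have identical tape prefixes, so any feedback from the other processors---including the adversarial resolution of concurrent writes on an Arbitrary PRAM---affects them identically.

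Given this lemma, consider any execution whose output is correct, so the names of the $n$ processors are pairwise distinct. Then no $r_v$ can be a prefix of another $r_w$: otherwise the tape of $w$ begins with $r_v$, and the lemma forces $w$ and $v$ to output the same name, a contradiction. Hence $\{r_1,\dots,r_n\}$ is a prefix-free set of binary strings, so Kraft's inequality yields $\sum_{v=1}^{n}2^{-m_v}\le 1$. From this I would extract the bound by a short counting step: for any threshold $k$, every $v$ with $m_v<k$ contributes at least $2^{-(k-1)}$ to the sum, so $|\{v: m_v<k\}|\le 2^{k-1}$; taking $k=\lfloor\lg n\rfloor-1$ leaves at least $\tfrac{3}{4}n$ processors each consuming at least $k=\Omega(\log n)$ random bits, whence the total $R=\sum_v m_v$ is $\Omega(n\log n)$, for $n$ large enough.

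The main obstacle is making the indistinguishability lemma airtight: one must pin down the execution model precisely enough---the interleaving of local random-bit generation with the synchronous read/write rounds, and the convention that a processor halts at the moment it would read past the bits it actually uses---so that the inductive coupling genuinely goes through and, in particular, so that $v$ never consumes bits beyond the prefix $r_v$ on which the two tapes are assumed to agree. Once that lemma is in hand, the remaining steps are the routine Kraft-inequality counting sketched above.
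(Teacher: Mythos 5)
Your proposal is correct, but it proves the statement by a genuinely different route than the paper. The paper's proof is an entropy argument: by symmetry, conditional on correctness the assignment of names to (conceptually labelled) processors is uniform over all $n!$ permutations, so it has Shannon entropy $\lg(n!)=\Theta(n\log n)$, and since the private random bits are the only source of entropy the algorithm must generate that many bits. You instead prove a pointwise, per-execution statement: an indistinguishability lemma (if $w$'s random tape begins with the string $r_v$ that $v$ consumes, then $w$ is coupled to $v$ round by round and outputs the same name) forces the set $\{r_1,\dots,r_n\}$ to be prefix-free in any correct execution, after which Kraft's inequality and a short counting step give $\sum_v m_v=\Omega(n\log n)$ deterministically on the correctness event. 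Your coupling lemma is sound on both PRAM variants --- the two coupled processors always attempt to write the same value to the same cell, so the Arbitrary-PRAM adversary cannot break the symmetry --- and it is essentially the same mechanism the paper uses elsewhere (in the proof of its Proposition~\ref{pro:n}) to duplicate a processor. What your approach buys is rigor at exactly the point where the paper is loosest: entropy bounds only the \emph{expected} number of random bits, and the paper's jump from the entropy bound to ``$\Omega(n\log n)$ bits with probability at least $p_n$'' is asserted rather than derived, whereas your argument makes the event ``output is correct'' literally a subset of the event ``$\Omega(n\log n)$ bits were consumed,'' so the probabilistic claim and the Las Vegas corollary are immediate. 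The price is that you must formalize the tape/consumption model carefully, as you note; the paper's entropy argument avoids that bookkeeping at the cost of the gap just described.
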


\begin{proof}
Let us assign conceptual identifiers to the processors, for the sake of argument.
These \emph{unknown identifiers} are known only to an external observer and not to algorithms.
The purpose of executing the algorithm is to assign explicit identifiers, which we call  \emph{given names}.

Let a processor with an unknown identifier~$u_i$ generate a string of bits~$b_i$, for $i=1,\ldots, n$.
A distribution of given names among the $n$ anonymous processors, which results from executing the algorithm, is a random variable $X_n$ with a uniform distribution on the set of all permutations of the unknown identifiers.
This is because of symmetry: all processors execute the same code, without explicit private names, and if we rearrange the strings generated bits~$b_i$ among the processors~$u_i$, then this results in the corresponding rearrangement of the given names.

The underlying probability space consists of $n!$ elementary events, each determined by an assignment of the given names to the processors identified by the unknown identifiers.
It follows that each of these events occurs with probability~$1/n!$.
The Shannon entropy of the random variable~$X_n$ is thus $\lg (n!)=\Theta(n\log n)$.
The decision about which assignment of given names is produced is determined by the random bits, as they are the only source of entropy.
It follows that the expected number of random bits used by the algorithm needs to be as large as the entropy of the random variable~$X_n$.

The property that all assigned names are distinct and in the interval $[1,n]$ holds with probability~$p_n$.
An execution needs to generate a total of $\Omega(n\log n)$ random bits with probability at least~$p_n$, because of the bound on entropy.
A Las Vegas algorithm terminates almost surely, and returns correct names upon termination.
This means that  $p_n=1$ and so that $\Omega(n\log n)$ random bits are used almost surely.
\end{proof}

A naming algorithm cannot be Las Vegas when $n$ is unknown, as was observed by Kutten et al.~\cite{KuttenOP00} for asynchronous computations against an oblivious adversary.
We show the analogous fact for synchronous computations.


\begin{proposition} 
\label{pro:n}

There is no Las Vegas naming algorithm for a PRAM with $n>1$ processors that does not refer to the number of processors~$n$ in its code.
\end{proposition}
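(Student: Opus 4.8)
The plan is to argue by contradiction: suppose $\cA$ is a Las Vegas naming algorithm whose code makes no reference to $n$, and derive a contradiction by running it on two different population sizes. The key idea is that, since the algorithm is correct and terminating for every $n$, it in particular works for some small $n_0$, and then I want to ``fool'' it by presenting it with a larger population that, on a suitable sample path, behaves indistinguishably from $n_0$ processors up to the round of termination.

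First I would fix $n_0$ (say $n_0=2$, or whatever smallest value makes the argument cleanest) and consider an execution of $\cA$ by $n_0$ processors. Because $\cA$ is Las Vegas, this execution terminates almost surely; hence there is a finite round $T$ and a particular finite sequence of coin-toss outcomes for these $n_0$ processors that has positive probability and leads to termination by round $T$ with a correct naming, i.e. with the names $1,\dots,n_0$ assigned. Now I would take a larger system with $n_1 = n_0 + n_0 = 2n_0$ processors (or more generally any $n_1$ that is a multiple of $n_0$) and partition them into $n_0$ groups of equal size. The plan is to let all processors within a group generate exactly the same random bits — the bits that, in the $n_0$-processor run, were generated by the corresponding single processor — for the first $T$ rounds. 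Since the processors are anonymous and run identical code, processors that have seen the same history (same code, same random bits, same sequence of values read from shared memory) are in the same state and perform the same action in each round. By induction on the round number, I would show that in the $n_1$-processor run the content of every shared cell after each of the first $T$ rounds is identical to its content in the $n_0$-processor run: a concurrent write by a whole group writes the same value that the single corresponding processor wrote (this is legal on both Common and Arbitrary PRAM, since all writers in a group write the same value), and concurrent reads return the same thing. Consequently the $n_1$-processor run also terminates at round $T$, and every processor in group $i$ outputs the name $i$. That yields $n_0$ distinct names shared among $n_1 > n_0$ processors, so the output is incorrect — contradicting the assumption that $\cA$ is Las Vegas (which requires a correct output upon every termination, hence with probability $1$, in particular on this positive-probability event).

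The main obstacle, and the step I would be most careful about, is the indistinguishability induction: I must pin down exactly what ``state'' of an anonymous processor means and verify that two processors with identical observation histories are genuinely indistinguishable in the PRAM model — including that they generate the ``same'' next random bits only because we have coupled their coin tosses, not because the model forces it. I should also be explicit that the chosen finite prefix of coin tosses has strictly positive probability (this uses that $\cA$ terminates almost surely on $n_0$ processors, so some terminating prefix has positive probability, and that each processor uses finitely many random bits up to round $T$), and that copying a positive-probability prefix across a group is itself a positive-probability event in the $n_1$-processor probability space. Finally I would note that the argument is uniform over both PRAM write semantics, since group writes are always of a common value, so no case distinction between Common and Arbitrary is needed; and that the assumption $n>1$ in the statement is what makes the smaller instance $n_0$ with $n_0 \geq 2$ (or even $n_0 < n_1$) available to be fooled.
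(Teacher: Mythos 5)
Your proposal is correct and is essentially the paper's own argument: both are indistinguishability/cloning proofs in which a positive-probability terminating execution on a smaller system is replayed in a larger system by duplicating random-bit strings, so that the clones write only common values, shared memory evolves identically, and duplicate names result on a positive-probability event. The only difference is that the paper clones a single processor (the last one to halt in an $(n-1)$-processor execution), which immediately covers every $n>1$, whereas your equal-size-groups version literally covers only $n$ that is a multiple of $n_0$; this is harmless, since your induction goes through verbatim with nonempty groups of unequal sizes (one group of size two and the rest singletons recovers the paper's proof exactly).
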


\begin{proof}
Let us suppose, to arrive at a contradiction, that such a naming Las Vegas algorithm exists.
Consider a system of $n -1\ge  1$ processors, and an execution~$\cE$ on these $n-1$ processors that uses specific strings of random bits such that the algorithm terminates in~$\cE$ with these random bits.
Such strings of random bits exist because the algorithm terminates almost surely.

Let $v_1$ be a processor that halts latest in~$\cE$ among the $n-1$ processors.
Let $\alpha_{\cE}$ be the string of random bits generated by processor~$v_1$ by the time it halts in~$\cE$.
Consider the execution~$\cE'$ on $n\ge 2$ processors such that $n-1$ processors obtain the same strings of random bits as in~$\cE$ and an extra processor~$v_2$ obtains~$\alpha_{\cE}$ as its random bits.
The executions  $\cE$ and~$\cE'$ are indistinguishable for the~$n-1$ processors participating in~$\cE$, so they assign themselves the same names and halt.
Processor~$v_2$ performs the same reads and writes as processor~$v_1$ and assigns itself the same name as processor~$v_1$ does and halts in the same round as processor $v_1$.
This is the termination round because by that time all the other processor have halted as well.

It follows that execution $\cE'$ results in a name being duplicated.
The probability of duplication for $n$ processors is at least as large as the probability to generate two identical finite random strings in $\cE'$ for some two  processors, so this probability is positive.
\end{proof}

If $n$ is unknown, then the restriction $\cO(n \log n)$ on the number of random bits makes it  inevitable that the probability of error is  at least polynomially bounded from below, as we show next.


\begin{proposition}
\label{pro:probability-of-error}

For unknown $n$, if a randomized naming algorithm is executed by $n$ anonymous processors, then an execution is incorrect, in that duplicate names are assigned to distinct processors, with probability that is at least~$n^{-\Omega(1)}$, assuming that the algorithm uses $\cO(n\log n)$ random bits with probability $1-n^{-\Omega(1)}$.
\end{proposition}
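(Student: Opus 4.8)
The plan is to exhibit two fixed processors, say processors~$1$ and~$2$, that end up with the same name with probability at least $n^{-\Omega(1)}$; since two distinct processors sharing a name is an incorrect outcome, this proves the claim. Throughout, model each processor as owning an infinite private reservoir of independent fair coin tosses, of which an execution consumes some prefix, and let $T_i$ be the number of bits consumed by processor~$i$. The first ingredient is a lockstep lemma, in the spirit of the proof of Proposition~\ref{pro:n}: if two processors consume the same finite string of random bits in an execution then, by induction on rounds, they are in the same state at the start of every round, perform the same read or write, and — when the action is a read — obtain the same value, because the contents of a shared cell at the end of a round is a global quantity; hence they halt in the same round with the same name. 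The consequence I will use is: if the reservoirs of processors~$i$ and~$j$ agree on their first $L$ bits and $T_i\le L$, then $i$ and $j$ are in lockstep throughout processor~$i$'s run, so they output the same name.

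Fix constants $c,\gamma>0$ witnessing the hypothesis, so that on $m$ anonymous processors the algorithm consumes at most $cm\log m$ random bits with probability at least $1-m^{-\gamma}$; set $L:=\lceil 2c\lg n\rceil$, so $2^{-L}=n^{-\Theta(1)}$. A one-line averaging step gives: whenever the total number of consumed bits is at most $cm\log m$, fewer than $m/2$ processors can have $T_i>2c\log m$, so by the symmetry of anonymous processors a fixed processor satisfies $T_i\le 2c\log m\le L$ with probability at least $\tfrac12(1-m^{-\gamma})$, which exceeds $\tfrac14$ for all large~$m$.

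Now run the algorithm on $n$ processors and let $F$ be the event that the reservoirs of processors~$1$ and~$2$ agree on their first $L$ bits, so $\Pr[F]=2^{-L}$. Conditioned on $F$, as long as processors~$1$ and~$2$ have each consumed at most $L$ bits they are in lockstep, and two processors acting identically have the same effect on shared memory as a single processor; hence, up to the first round in which processors~$1$ and~$2$ diverge — which cannot happen before one of them consumes its $(L{+}1)$-st bit — the conditioned execution is identical to an execution of the same $n$-oblivious algorithm on the $n-1$ processors $\{1{=}2,3,\dots,n\}$. Therefore $\Pr[\,T_1\le L \mid F\,]$ is at least the probability that a fixed processor has $T_i\le L$ in an $(n-1)$-processor execution, which by the averaging step with $m=n-1$ is at least $\tfrac14$ once $n$ is large. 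On $F$, whenever $T_1\le L$ the lockstep lemma forces processors~$1$ and~$2$ to output the same name, so the probability of an incorrect output is at least $\Pr[F]\cdot\tfrac14=\tfrac14\,2^{-L}=n^{-\Omega(1)}$.

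The delicate point is the control of $\Pr[\,T_1\le L\mid F\,]$: the quantity $T_1$ depends on the entire execution, including what the other processors write to shared memory, so it is genuinely correlated with the low-probability event $F$ and cannot be decoupled from it by an independence argument. The ``twins behave as one processor'' reduction is exactly what removes this obstacle, recasting the conditional experiment as an ordinary $(n-1)$-processor run of an algorithm that does not refer to the number of processors, so that the hypothesis on the number of random bits applies to it verbatim.
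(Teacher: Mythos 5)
Your proof is correct, and it is built on the same core idea as the paper's: two processors whose random bits coincide on a short prefix become indistinguishable twins and end up with duplicate names, and such a coincidence costs only $n^{-O(1)}$ in probability because some processor consumes only $O(\log n)$ bits. The technical route differs, though. The paper argues from $n$ to $n+1$: it fixes an $n$-processor execution, invokes the pigeonhole principle to find a processor $w$ using at most $c\lg n$ bits with probability $p_n$, adds an extra processor $v$ whose bits happen to copy $w$'s, and multiplies $n^{-c}$ by $p_n$ on the grounds that the bits of $v$ and $w$ are independent. You instead work inside a single $n$-processor execution, condition on the event $F$ that two fixed reservoirs agree on their first $L$ bits, and control the conditional probability that one of the twins halts within $L$ bits by observing that, under $F$ and before divergence, the system is distributionally identical to an $(n-1)$-processor run of the same ($n$-oblivious) algorithm, to which the hypothesis applies verbatim; an exchangeability-plus-Markov averaging then replaces the paper's pigeonhole step. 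Your version is the more careful one: the identity of the low-bit-count processor and the consumption count $T_1$ are both execution-dependent and hence correlated with the coincidence event, a point the paper's independence assertion glosses over and your merge-to-$(n-1)$ reduction handles explicitly. The price is a constant-factor loss (the $\tfrac14$) and the need for the exchangeability of $(T_1,\dots,T_m)$, which is immediate for the Common PRAM and requires choosing a symmetric adversary for the Arbitrary PRAM — a point worth a sentence, but no worse than what the paper itself leaves implicit.
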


\begin{proof}
Suppose the algorithm uses at most $c n\lg n$ random bits with  probability~$p_n$ when executed by a system of $n$ processors, for some constant $c>0$.
Then one of these processors uses at most $c \lg n$ bits with  probability~$p_n$, by the pigeonhole principle.

Consider an execution for $n+1$ processors.
Let us distinguish a processor~$v$.
Consider the actions of the remaining $n$ processors: one of them, say~$w$, uses at most $c \lg n$ bits with the probability~$p_n$. 
Processor $v$ generates the same string of bits with probability~$2^{-c\lg n}= n^{-c}$.
The random bits generated by $w$ and $v$ are independent.
Therefore duplicate names occur with probability at  least~$n^{-c}\cdot p_n$.
When we have a bound on probabilities~$p_n$ to be $p_n=1-n^{-\Omega(1)}$, then probability of occurrence of duplicate names is at least $n^{-c}(1-n^{-\Omega(1)})= n^{-\Omega(1)}$.
\end{proof}

\section{Lower Bounds on Running Time}

\label{sec:lower-bounds-on-time}

We consider two kinds of algorithmic naming problems, as determined by the amount of shared memory.
One case is for a constant number of shared memory cells, for which we give an optimal lower bound on time for $\cO(1)$ shared memory.
The other case is when the number of shared memory cells and their capacity are unbounded, for which we give an ``absolute'' lower bound on time.
We begin with lower bounds that reflect the amount of shared memory.

Intuitively, as processors generate random bits, these bits need to be made common knowledge through some implicit process that assigns explicit names.
There is an underlying flow of information to spread knowledge among the processors through the available shared memory.
Time is bounded from below by the rate of flow of information and the total amount of bits that need to be shared.

On the technical level, in order to bound the expected time of a randomized algorithm, we apply the Yao's minimax principle~\cite{Yao77} to relate this expected time to  the distributional expected time complexity.
A randomized algorithm whose actions are determined by random bits can be considered as a  probability distribution on deterministic algorithms.
A deterministic algorithm has strings of bits given to processors as their inputs, with some probability distribution on such inputs.
The expected time of such a deterministic algorithm, give any specific probability distribution  on the inputs, is a lower bound on the expected time of a randomized algorithm.

To make such interpretation of randomized algorithms possible, we consider strings of bits of equal length. 
With such a restriction on inputs, deterministic algorithm may not be able to assign proper names for some assignments of inputs, for example, when all the inputs are equal.
We augment such deterministic algorithms by adding an option for the algorithm to withhold a decision on assignment of names and output ``no name'' for some processors.
This is interpreted as the deterministic algorithm needing longer inputs, for which the given inputs are prefixes, and which for the randomized algorithm means that some processors need to generate more random bits. 

Regarding probability distributions for inputs of a given length, it will always be the uniform distribution.
This is because we will use an assessment of the amount of entropy of such a distribution.


\begin{theorem}
\label{thm:lower-bound-memory-common-pram}

A randomized naming algorithm for a Common PRAM with $n$ processors and $C>0$ shared memory cells operates in $\Omega(n\log n/C)$ expected time when it is either a Las Vegas algorithm or a Monte Carlo algorithm with the probability of error smaller than~$1/2$.
\end{theorem}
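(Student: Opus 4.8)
I would prove this by the route signalled just before the statement: Yao's minimax principle plus an information‑theoretic estimate that exploits the weak concurrent‑write semantics of the Common PRAM.

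First I would set up the Yao reduction. A randomized naming algorithm is a probability distribution over deterministic algorithms, so its worst‑case expected running time is at least $\min_A \mathbb{E}_{\vec b}[\mathrm{time}(A)]$, where $A$ ranges over deterministic algorithms augmented with the option to output ``no name'' for a processor, and $\vec b=(b_1,\dots,b_n)$ is drawn from the uniform distribution on $(\{0,1\}^L)^n$ for a length $L$ chosen large enough that $A$ can, on almost all inputs, imitate the randomized algorithm (when $L$ is too short a processor outputs ``no name'', which is read as ``it needs more random bits''). Fix such an $A$ and suppose, for contradiction, that $\mathbb{E}_{\vec b}[\mathrm{time}(A)]=o(n\log n/C)$. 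By Markov's inequality, $A$ halts within $T^{\ast}=c\,\mathbb{E}_{\vec b}[\mathrm{time}(A)]=o(n\log n/C)$ rounds on all but a $1/c$ fraction of inputs; choosing $c$ large and intersecting with the correctness event (probability $1$ for Las Vegas, more than $1/2$ for Monte Carlo) yields a constant $p_0>0$ such that on at least a $p_0$ fraction of inputs $A$ halts within $T^{\ast}$ rounds and outputs a correct naming, i.e., a bijection of the $n$ processors onto $[1,n]$.

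Now the core. Let $\tau=(\gamma_1,\dots,\gamma_{T^{\ast}})$ be the transcript of the execution, where $\gamma_r$ is the configuration of the $C$ shared cells at the end of round $r$; every processor's reads, writes and final name are a function of $\tau$ and that processor's own bits $b_i$. Two estimates are needed that pull in opposite directions. On the one hand, by the symmetry argument from the proof of Proposition~\ref{pro:lower-bound-on-random-bits}, the assignment of names to processors, viewed through fixed conceptual identifiers, is a uniform random permutation, so its Shannon entropy is $\lg(n!)=\Theta(n\log n)$; and since on a Common PRAM a concurrent write cannot deposit a value that is not already common knowledge among its writers, I would argue that this $\Theta(n\log n)$ bits cannot be produced privately but must be carried by the transcript, so that $H(\tau)=\Omega(n\log n)$. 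On the other hand, the Common semantics forces the content newly written into each cell in each round to amount, beyond what the past transcript already determines, to only a single bit ``was this cell overwritten or not'', because any overwriting value had to be agreed upon by all the writers; hence the transcript ranges over at most $2^{\cO(T^{\ast}C)}$ values and $H(\tau)=\cO(T^{\ast}C)$. Comparing, $T^{\ast}C=\Omega(n\log n)$, contradicting $T^{\ast}=o(n\log n/C)$. (On an Arbitrary PRAM the second estimate weakens to $H(\tau)=\cO(T^{\ast}C\log n)$, since a single writer may then deposit a $\Theta(\log n)$‑bit value, which is exactly why that model only yields $\Omega(n/C)$.)

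The delicate step, and the one I expect to be the real obstacle, is the first estimate: a processor's name genuinely does depend on its private coin tosses, so one must show that on a Common PRAM those tosses can only influence which cells a processor reads --- its routing through the transcript --- and cannot supply new, mutually consistent naming information, precisely because whatever a group of processors writes concurrently has to be computable from what they already hold in common. Turning this into a clean bound of the form $H(\tau)\ge(1-\varepsilon)\lg(n!)$, and reconciling it with the $2^{\cO(T^{\ast}C)}$ count of transcripts despite the fact that $\tau$ itself is a function of $\vec b$ (so the $b_i$ are not independent conditioned on $\tau$), is the technical heart; the Yao reduction and the Markov/averaging bookkeeping around it are routine.
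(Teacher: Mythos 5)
Your proposal takes essentially the same route as the paper's proof: Yao's minimax principle applied to deterministic algorithms with uniformly random equal-length bit strings as inputs (and a ``no name'' escape for inputs that are too short), the entropy $\lg (n!)=\Omega(n\log n)$ of the induced name permutation, and the observation that the Common concurrent-write semantics limits the information revealed by a round of writing to one bit per cell --- ``was this cell overwritten or not'' --- so that at most $C$ bits per round can be learned and $\Omega(n\log n/C)$ rounds are needed. Your handling of the Monte Carlo case by Markov plus intersection with the correctness event is more explicit than the paper's one-line remark, and is fine.

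The one genuine gap is the step you yourself flag as ``the real obstacle'': you do not actually establish that the $\Omega(n\log n)$ bits must be carried by the transcript, i.e., that written values convey nothing beyond the overwrite indicators. The paper closes exactly this step, using the observation you already state in passing. It restructures an execution into phases of $C+1$ rounds in which each processor performs at most one write and then (at no charged cost) reads all $C$ cells, so that the entire shared-memory history is common knowledge at the end of every phase. Given that common history, the Common semantics forces the value that any processor would write into a fixed cell in a fixed phase to be determined by the code, the cell address, and the common history --- otherwise some assignment of inputs would realize an illegal concurrent write with conflicting values --- so every processor can already compute that value before the write occurs, and a phase genuinely reveals only the $C$-bit vector of which cells were freshly written. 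The paper then compares the per-phase information gain of at most $C$ bits directly against the $\lg(n!)$ bits of entropy of the final permutation, rather than computing a conditional entropy of names given the transcript; this bookkeeping is carried out at the same informal level as your sketch, but it sidesteps the dependence-of-the-$b_i$-given-$\tau$ issue you worry about. To complete your version you would need to state and use this determinacy-of-written-values lemma; without it, the bound $H(\tau)=\cO(T^{\ast}C)$ is unsupported, since a lone writer could in principle deposit $\Theta(\log n)$ private bits into a cell, which is precisely what happens on the Arbitrary PRAM and why that model only yields $\Omega(n/C)$.
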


\begin{proof}
We consider Las Vegas algorithms in this argument, the Monte Carlo case is similar, the difference is in applying Yao's principle for Monte Carlo algorithms.
We interpret a randomized algorithm as a deterministic one working with all possible assignments of random bits as inputs with a uniform mass function on the inputs.
The expected time of the deterministic algorithm is a lower bound on the expected time of the randomized algorithm.

There are $n!$ possible assignments of given names to the processors.
Each of them occurs with the same probability $1/n!$ when the input bit strings are assigned uniformly at random.
Therefore the entropy of name assignments, interpreted as a random variable, is $\lg n!=\Omega(n\log n)$.

Next we consider executions of such a deterministic algorithm on the inputs with a uniform probability distribution.
We may assume without loss of generality that an execution is structured into the following phases, each consisting of $C+1$ rounds.
In the first round of a phase, each processor either writes into a shared memory cell or pauses.
In the following  rounds of a phase, every processor learns the current values of each among the $C$ memory cells.
This may take $C$ rounds for every processor to scan the whole shared memory, but we do not include this reading overhead as contributing to the lower bound.
Instead, since this is a simulation anyway, we conservatively assume that the process of learning  all the contents of shared memory cells at the end of a phase is instantaneous and complete.

The Common variant of PRAM requires that if a memory cell is written into concurrently then there is a common value that gets written by all the writers.
Such a value needs to be determined by the code and the address of a memory cell.
This means that, for each phase and any memory cell, a  processor choosing to write into this memory cell knows the common value to be written.
By the structure of execution, in which all processors read all the registers after a round of writing, any processor knows what value gets written into each available memory cell in a phase, if any is written into a particular cell.
This implies that the contents written into shared memory cells may not convey any new information but are already implicit in the states of the processors represented by their private memories after reading the whole shared memory.

When a processor reads all the shared memory cells in a phase, then the only new information it may learn is the addresses of memory cells into which new writes were performed and those into which there were no new writes.
This makes it possible obtain at most $C$ bits of information per phase, because each register  was either written into or not.

There are $\Omega(n\log n)$ bits of information that need to be settled and one phase changes the entropy by at most $C$ bits.
It follows that the expected number of phases of the deterministic algorithm is $\Omega(n\log n /C)$.
By the Yao's principle, $\Omega(n\log n/C)$ is a lower bound on the expected time of a randomized algorithm.
\end{proof}

For Arbitrary PRAM, writing can spread information through the written values, because different processes can attempt to write distinct strings of bits.
The rate of flow of information is constrained by the fact that when multiple writers attempt to write to the same memory cell then only one of them succeeds, if the values written are distinct.
This intuitively means that the size of a group of processors  writing to the same register determines how much information the writers learn by subsequent reading.
These intuitions are made formal in the proof of the following Theorem~\ref{thm:lower-bound-memory-arbitrary-pram}.


\begin{theorem}
\label{thm:lower-bound-memory-arbitrary-pram}

A randomized naming algorithm for an Arbitrary PRAM with $n$ processors and $C>0$ shared memory cells operates in $\Omega(n/C)$ expected time when it is either a Las Vegas algorithm or a Monte Carlo algorithm with the probability of error smaller than~$1/2$.
\end{theorem}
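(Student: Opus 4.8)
The plan is to follow the template of the proof of Theorem~\ref{thm:lower-bound-memory-common-pram} and to isolate the single place where the Arbitrary concurrent-write semantics changes the bookkeeping. As there, I would first invoke Yao's minimax principle to pass from the randomized algorithm to a deterministic one run on input bit strings of a fixed common length drawn uniformly at random, augmented with the ``no name'' option. For the Arbitrary model the winner of a concurrent write is chosen by an adversary; to keep the symmetry argument intact I would take this choice to be made uniformly at random among the contending writers, so that the whole setup remains symmetric under permuting inputs among the unknown identifiers. Consequently the produced name assignment~$X_n$ is uniformly distributed over the $n!$ permutations of the unknown identifiers, hence has Shannon entropy $\lg(n!)=\Omega(n\log n)$; the Monte Carlo case with error below $1/2$ is handled exactly as in Theorem~\ref{thm:lower-bound-memory-common-pram}, the entropy bound staying $\Omega(n\log n)$.

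Next I would structure an execution into phases, each consisting of one write round followed by every processor reading the whole shared memory, which — being a simulation anyway — we conservatively treat as instantaneous, so that at most one round is charged per phase. The crux is to bound the number of bits of new information that one phase can add to the processors' common knowledge. On a Common PRAM this is $C$ bits, since the value written to a cell is predetermined and only the pattern of which cells were written is new. On an Arbitrary PRAM, a cell written to by $k_j\ge 1$ processors in the phase ends up holding one of at most $k_j$ values, and all these candidate values are already determined by the pre-phase states — only which one survives is the adversary's new choice. Since $\sum_j k_j\le n$ and at most $C$ cells are written to, reading the whole memory reveals at most $\sum_{j:k_j\ge 1}\lg k_j + C \le C\lg n + C = \cO(C\log n)$ new bits, and this holds regardless of how the adversary resolves the write conflicts.

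Combining the two estimates: the name assignment carries $\Omega(n\log n)$ bits that have to be settled through the shared memory, while each phase advances this quantity by only $\cO(C\log n)$ bits, so the expected number of phases — and hence the expected running time, since each phase costs at least one round — is $\Omega\!\bigl(\tfrac{n\log n}{C\log n}\bigr)=\Omega(n/C)$. Yao's principle then transfers this bound to the original randomized algorithm.

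I expect the main obstacle to be the same point that is treated lightly in the proof of Theorem~\ref{thm:lower-bound-memory-common-pram}: making precise the assertion that the entropy of~$X_n$ must flow through the rate-limited shared-memory channel rather than being supplied for free by the processors' private coin tosses. The justification is that anonymous processors running identical code cannot coordinate a duplicate-free naming from private randomness alone — this is essentially the content of Proposition~\ref{pro:n} and the symmetry argument behind it — so any portion of a processor's private string that has not been exposed through a surviving write is useless for deciding its final name without risking a collision under some adversarial schedule. I would formalize this by an exchange argument that flips such ``unexposed'' private bits while keeping the memory transcript fixed, using the adversary's freedom to keep the affected writes losing.
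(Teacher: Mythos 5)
Your proposal follows the same route as the paper's proof: Yao's principle to pass to a deterministic algorithm on uniformly random fixed-length inputs, the entropy $\lg (n!)=\Omega(n\log n)$ of the name assignment, a phase structure of one write followed by a (conservatively free) full read, and a per-phase information bound of $\cO(C\log n)$ --- your count of $\lg k_j$ bits for the adversary's choice among the $k_j$ contenders at a cell is the same quantity the paper bounds as identifying the writer within its write-group, capped at $\lg n$ per cell. The only place you go beyond the paper is in explicitly flagging, and sketching an exchange-argument fix for, the informal step that the $\Omega(n\log n)$ bits of the name assignment must actually flow through the shared-memory channel; the paper leaves that assertion at the same level of informality as in Theorem~\ref{thm:lower-bound-memory-common-pram}.
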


\begin{proof}
We consider Las Vegas algorithms in this argument, the Monte Carlo case is similar, the difference is in applying Yao's principle for Monte Carlo algorithms.
We again replace a given randomized algorithm by its deterministic version that works on assignments of strings of bits of the same length as inputs, with such inputs assigned uniformly at random to the processors. 
The goal is to use the property that the expected time of this deterministic algorithm, for a given probability distribution of inputs, is a lower bound on the expected time of the randomized algorithm.
Next, we consider executions of this deterministic algorithm.

Similarly as in the proof of Theorem~\ref{thm:lower-bound-memory-common-pram}, we observe that  there are $n!$  assignments of given names to the processors and each of them occurs with the same probability $1/n!$, when the input bit strings are assigned uniformly at random.
The entropy of name assignments is again $\lg n!=\Omega(n\log n)$.
The algorithm needs to make the processors learn $\Omega(n\log n)$ bits using the available $C>0$ shared memory cells.

We may interpret an execution as structured into phases, such that each processor performs at most one write in a phase and then reads all the registers.
The time of a phase is assumed conservatively to be $\cO(1)$.
Consider a register and a group of processors that attempt to write their values into this register in a phase.
The values attempted to be written are represented as strings of bits.
If some of these values have~$0$ and some have~$1$ at some bit position among the strings, then this bit position may convey one bit of information.
The maximum amount of information is provided by a write when the written string of bits facilitates identifying the writer by comparing its written value to the other values attempted to be written concurrently to the same memory cell.
This amount is at most the binary logarithm of the size of this group of processors.
Therefore,  each memory cell written to in a round contributes at most $\lg n$ bits of information, because there may be at most $n$ writers to it.
Since there are $C$ registers, the maximum number of bits of information learnt by the processors in a phase is $C\lg n$.

The entropy of the  assignment of names is $\lg n!=\Omega(n\log n)$, so the expected number of phases of the deterministic algorithm is $\Omega(n\lg n /(C\lg n))=\Omega(n/C)$.
By the Yao's principle, this is also a lower bound on the expected time of a randomized algorithm.
\end{proof}

Next, we consider ``absolute'' requirements on time  for a PRAM to assign unique names to the  available $n$ processors.
The generality of the lower bound we give stems from the weakness of the assumptions.
First, nothing is assumed about the knowledge of~$n$.
Second, concurrent writing is not constrained in any way. 
Third, shared memory cells are unbounded in their number and size.

We show next in Theorem~\ref{thm:log-n-lower-bound} that any Las Vegas naming algorithm has $\Omega(\log n)$  expected time for the synchronous schedule of events.
The argument we give is in the spirit of similar arguments applied by Cook et al.~\cite{CookDR86} and  Beame~\cite{Beame88}.
In an analogous manner, Kutten et al.~\cite{KuttenOP00} showed that any Las Vegas naming algorithm for asynchronous read-write shared memory systems has the expected time $\Omega(\log n)$ against a certain oblivious schedule.
What these arguments share, along with the arguments we employ in this paper,  are a formalization of the notion of flow of information during an execution of an algorithm, combined with a recursive estimate of the rate of this flow.

The relation \emph{processor~$v$ knows processor~$w$ in round~$t$} is defined recursively as follows.
First, for any processor~$v$, we have that~$v$ knows~$v$ in any round~$t>0$.
Second, if a processor~$v$ writes to a shared memory cell~$R$ in a round $t_1$ and a processor~$w$ reads from~$R$ in a round $t_2>t_1$, such that there was no other write into this memory cell after $t_1$ and prior to~$t_2$, then processor~$w$ knows in round~$t_2$ each processor that $v$ knows in round~$t_1$.
Finally, the relation is the smallest transitive relation that satisfies the two postulates formulated above.
This means that it is the smallest relation such that if processor~$v$ knows processor~$w$ in round $t_1$ and $z$ knows $v$ in round $t_2$ such that $t_2>t_1$ then processor~$z$ knows~$w$ in round~$t_2$.
In particular, the knowledge accumulates with time, in that if a processor~$v$ knows processor~$z$ in round~$t_1$ and round~$t_2$ is such that $t_2>t_1$ then $v$ knows~$z$ in round~$t_2$ as well.


\begin{lemma}
\label{lem:knowledge}

Let $\cA$ be a deterministic algorithm that assigns distinct names to the processors, with the possibility that some processors output ``no name'' for some inputs, when each node has an input string of bits of the same length.
When algorithm $\cA$ terminates with proper names assigned to all the processors then each processor knows all the other processors.
\end{lemma}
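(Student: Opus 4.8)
\smallskip
\noindent\textit{Proof idea.} The plan is to argue by contradiction. Suppose that on the given input $\cA$ halts with all names proper, yet some processor $v$ does not know some processor $w$ at the termination round $T$. Let $U$ be the set of processors that $v$ knows at round $T$, so $v\in U$ and $w\notin U$, and let $\overline{U}$ be the complementary set. The first point to establish is an \emph{insulation property}: the part of the execution on which $v$'s final state depends involves only processors of $U$ and never returns a value written by a processor of $\overline{U}$. Indeed, whenever $v$ reads at a round $t\le T$ a value whose latest writer is $p$, the knowledge relation puts $p$ among the processors $v$ knows at round $t$, hence (by monotonicity) at round $T$, so $p\in U$; and since a written value is produced by the writer's earlier reads, tracing $v$'s final state back through the chain of writes and reads that produced it always steps to strictly earlier rounds, so the same reasoning shows that every processor on which $v$'s final state causally depends lies in $U$ and that every relevant read returns either the default $0$ or a value written by such a processor.

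Next I would build a second execution $\cE'$ on the same $n$ processors that keeps the input of each processor of $U$ and replaces the input of each processor of $\overline{U}$ (including $w$) by a copy of $v$'s input. A self-contained induction on rounds shows that in $\cE'$ every processor of $\overline{U}$ runs in lockstep with $v$: they begin in the same state, so in each round they take the same action; when it is a read they all read the same cell and therefore obtain the same value, and when it is a write they all write the same value to the same cell. (One checks this is consistent with the concurrent-write semantics on both variants: equal concurrent writes are legal on the Common PRAM, and on the Arbitrary PRAM one may choose the resolution to agree with $\cE$.) Hence the writes of the cloned processors merely repeat writes already issued by $v$ and alter the contents of no cell at any time.

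Granting that the part of the execution determining $v$'s output plays out identically in $\cE$ and in $\cE'$, we are done: $v$ still outputs its original proper name $j$ in $\cE'$, so every cloned processor --- in particular $w$ --- also outputs $j$, and $\cE'$ therefore assigns the name $j$ to two distinct processors, contradicting the hypothesis that $\cA$ never assigns a name twice. (Whether the remaining processors happen to halt in $\cE'$ is immaterial for this contradiction.)

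The main obstacle is the step just granted. Replacing the inputs of $\overline{U}$ does change how the processors of $\overline{U}$ behave, and through them possibly how some processor of $U$ behaves at steps that do not matter for $v$. I would control this by a minimal-round argument: consider the first round $\rho$ at which some processor of $U$ enters a state different from the one it has in $\cE$; since its state one round earlier is unchanged it performs the same action there, which must then be a read that returns a changed value, and --- since no processor of $U$ has deviated before $\rho$ (such a writer would have written the same value) --- the latest writer of the cell it reads must lie in $\overline{U}$; but a read returning an $\overline{U}$-value cannot belong to the part of the execution that determines $v$'s output, by the insulation property. One then propagates this round by round, using again that the cloned processors' writes are redundant copies of $v$'s writes and so never introduce a genuinely new value that a causally relevant read could pick up, to conclude that no step on which $v$'s final state depends is perturbed. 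Making this bookkeeping airtight --- in particular ruling out that a perturbed write of a processor of $U$ is ever read by a step on which $v$'s output depends --- is the delicate part of the proof.
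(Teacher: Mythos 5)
Your proposal takes essentially the same route as the paper's proof: substitute $v$'s input for the input of the unknown processor(s), note that a processor holding $v$'s input runs the identical deterministic code from the identical initial state and hence mirrors $v$ step for step and outputs $v$'s name, and conclude that $\cA$ would assign a duplicate name. The differences are inessential --- the paper resets only $w$'s input while you reset all of $\overline{U}$, and the non-interference claim that you rightly single out as the crux (and attack via the insulation property and a minimal-round argument) is exactly the step the paper dispatches in the single sentence that $v$ ``is not affected by the input of $w$.'' The residual difficulty you flag is real --- a processor of $U$ that diverges can, in the modified execution, perform a write at a cell and round where it performed none originally, and such a phantom write can reach $v$ through reads that in the original execution returned values written by processors $v$ \emph{does} know, a channel that the knowledge relation of the original execution does not trace --- but the paper's own proof does not close this either, so your attempt is faithful to, and if anything more explicit than, the argument given in the paper.
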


\begin{proof}
We may assume that $n>1$ as otherwise one processors knows itself.
Let us consider an assignment~$\cI$ of inputs  that results in a proper assignment of distinct names to  all the processors when algorithm~$\cA$ terminates.
This implies that all the inputs in the assignment~$\cI$ are distinct strings of bits, as otherwise some two processors, say, $v$ and $w$ that obtain the same input string of bits would either assign themselves the same name or declare ``no name'' as output.

Suppose that a processor~$v$ does not know a processor~$w$, when $v$ halts for inputs from~$\cI$.
Consider an assignment of inputs $\cJ$ which is the same as $\cI$ for processors different from $w$ and such that the input of $w$ is the same as input for $v$ in~$\cI$.
Then the actions of processor $v$ would be the same with $\cJ$ as with $\cI$, because $v$ is not affected by the input of~$w$, so that  $v$ would assign itself the same name with $\cJ$ as with $\cI$.
But the actions of processor $w$ would be the same in $\cJ$ as those of~$v$, because their input strings of bits are identical under~$\cJ$.
It follows that $w$ would assign itself the name of~$v$, resulting in duplicate names.
This contradicts the assumption that all processors obtain unique names in the execution.
\end{proof}

We will use Lemma~\ref{lem:knowledge} to asses running times by estimating the number of interleaved reads and writes needed for processors to get to know all the processors. 
The rate of learning such information may depend on time, because we do not restrict the amount of shared memory, unlike in Theorems~\ref{thm:lower-bound-memory-common-pram} and~\ref{thm:lower-bound-memory-arbitrary-pram}.
Indeed, the rate may increase exponentially, under most liberal estimates.

The following Theorem~\ref{thm:log-n-lower-bound} holds for both Common and Arbitrary PRAMs. The argument used in the proof is general enough not to depend on any specific semantics of writing.


\begin{theorem}
\label{thm:log-n-lower-bound}

A randomized naming algorithm for a PRAM with $n$ processors  operates in $\Omega(\log n)$ expected time when it is either a Las Vegas algorithm or a Monte Carlo algorithm with the probability of error smaller than~$1/2$.
\end{theorem}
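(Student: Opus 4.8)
The plan is to prove something purely structural and slightly stronger: \emph{every} execution of the algorithm that halts with a correct naming lasts at least $\lg n$ rounds. Granting this, the theorem follows immediately. For a Las Vegas algorithm a correct naming is produced in every execution, so the running time is at least $\lg n$ with probability $1$, hence in expectation. For a Monte Carlo algorithm with probability of error $p<\tfrac12$, a correct naming is produced on a set of executions of probability $1-p>\tfrac12$, and on each such execution the running time is at least $\lg n$, so the expected running time exceeds $\tfrac12\lg n=\Omega(\log n)$. One could instead route this through Yao's minimax principle, as in the proofs of Theorems~\ref{thm:lower-bound-memory-common-pram} and~\ref{thm:lower-bound-memory-arbitrary-pram}, but since the bound holds execution by execution, no averaging over a distribution on inputs is actually needed.

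To see that a correctly terminating execution is long, first fix the private random string generated by each processor; this turns the algorithm into a deterministic one whose behaviour coincides with the given execution. Padding or truncating these strings to their common maximal length, one obtains a deterministic algorithm with equal-length inputs to which Lemma~\ref{lem:knowledge} applies (the convention allowing a processor to output ``no name'' when it would need more random bits is simply not invoked in this execution). Lemma~\ref{lem:knowledge} then tells us that in the termination round every processor knows every other processor, in the sense of the ``knows'' relation defined above.

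Next I would bound the rate at which the ``knows'' relation can spread. For a round $t$ put $M_t=\max_v\bigl|\{\,w:v\text{ knows }w\text{ in round }t\,\}\bigr|$. Initially each processor knows only itself, so $M_1=1$, and the claim is $M_{t+1}\le 2M_t$. Indeed, in round $t+1$ a processor either writes, which does not enlarge its own knowledge set, or reads a shared cell; a read returns the value deposited by the single effective write to that cell, and the reader thereby acquires exactly the knowledge set of that one writer, of size at most $M_t$ by monotonicity of the relation in time. Hence a processor's knowledge set at most doubles per round, so $M_t\le 2^{\,t-1}$ by induction; transitivity of the relation adds nothing here, since the per-round update already folds in the full, transitively closed knowledge set of the writer. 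Combining this with the previous paragraph, if the execution terminates correctly in round $T$ then $n\le M_T\le 2^{\,T-1}$, whence $T\ge 1+\lg n$, as required.

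The one step that genuinely needs care is the phrase ``the single effective write'': I must rule out that one read imports the knowledge of many simultaneous writers at once. On the Arbitrary PRAM this is immediate, since only one of the concurrently attempted values is actually written and read. On the Common PRAM all concurrently written values are equal and, being required to be common, are determined by information already public, so the read conveys no information about the identities of the concurrent writers beyond that of a single writer's value; this is what keeps the doubling estimate intact and makes the argument uniform over both PRAM variants. Everything else — the fixing of random strings, the reduction to Lemma~\ref{lem:knowledge}, and the recursion on $M_t$ — is routine.
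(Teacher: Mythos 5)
Your proof is correct, and its combinatorial core --- Lemma~\ref{lem:knowledge} plus the induction showing that the ``knows'' sets at most double per step --- is the same engine the paper uses (the paper runs the doubling per two-round \emph{phase} rather than per round, which costs it a harmless factor of~$2$). Where you genuinely diverge is in how the deterministic bound is lifted to the randomized setting. The paper views the randomized algorithm as a distribution over deterministic algorithms with uniformly random equal-length inputs, bounds the \emph{conditional} expected running time of the deterministic algorithm, and invokes Yao's minimax principle --- with an accompanying discussion of how to ``prune'' executions that end without proper names, and with the Monte Carlo case dispatched only by the remark that it ``is similar.'' You instead observe that the bound $T\ge 1+\lg n$ holds pointwise on \emph{every} correctly terminating execution, so no averaging over input distributions is needed: for Las Vegas the bound holds almost surely, and for Monte Carlo with error below $\tfrac12$ it holds on a set of probability exceeding $\tfrac12$, giving expected time at least $\tfrac12\lg n$. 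This is more elementary, makes the role of the error hypothesis $p<\tfrac12$ completely transparent, and eliminates the pruning discussion. One caveat: your treatment of concurrent writes on the Common PRAM (arguing that a common value conveys no more than a single writer's knowledge) is stated informally, but the paper's own proof is no more rigorous on this point --- it simply speaks of ``the latest writer'' in the singular --- so you are at least as careful as the source; a fully formal version would fix the convention that a read imports the knowledge of exactly one designated concurrent writer and then check that Lemma~\ref{lem:knowledge} still holds under that convention.
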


\begin{proof}
The argument is for a Las Vegas algorithm, the Monte Carlo case is similar.
A randomized algorithm can be interpreted as a probability distribution on a finite set of deterministic algorithms.
Such an interpretation works when input strings for a deterministic algorithm are of the same  length.
We consider all such possible lengths for deterministic algorithms, similarly as in the previous proofs of lower bounds.

Let us consider a deterministic algorithm $\cA$, and let inputs be strings of bits of the same length.
We may structure an execution of this algorithm $\cA$ into \emph{phases} as follows.
A phase consists of two rounds. 
In the first round of a phase, each processor either writes to a shared memory cell or pauses.
In the second round of a phase, each processor either reads from a shared memory cell or pauses.
Such structuring can be done without loss of generality at the expense of slowing down an execution by a factor of at most~$2$. 
Observe that the knowledge in the first round of a phase is the same as in the last round of the preceding phase.

Phases are numbered by consecutively increasing  integers, starting  from~$1$.
A phase $i$ comprised pairs of rounds $\{2i-1, 2i\}$, for integers $i\ge 1$.
In particular, the first phase consists of rounds~$1$ and~$2$.
We also add phase~$0$ that represents the knowledge before any reads or writes were performed.

We show the following invariant, for $i\ge 0$: a processor knows at most~$2^{i}$ processors at the end of phase~$i$.
The proof of this invariant is by induction on~$i$.

The base case is for $i=0$.
The invariant follows from the fact that a processor knows only one processor in phase~$0$, namely itself, and $2^0=1$.

To show the inductive step, suppose the invariant holds for a phase $i\ge 0$ and consider the next phase~$i+1$.
A processor~$v$ may increase its knowledge by reading in the second round of phase~$i+1$.
Suppose the read is from a shared memory cell~$R$.
The latest write into this memory cell occurred by the first round of phase~$i+1$.
This means that the processor~$w$ that wrote to~$R$ by phase~$i+1$, as the last one that did write, knew at most~$2^i$ processors in the round of writing, by the inductive assumption and the fact that what is written in phase $i+1$ was learnt by the immediately preceding phase~$i$.
Moreover, by the semantics of writing, the value written to~$R$ by~$w$ in that round removed any previous information stored in~$R$.
Processor $v$ starts phase $i+1$ knowing at most~$2^i$ processors, and also learns of at most $2^i$ other processors by  reading in phase~$i+1$, namely, those values known by the latest writer of the read contents.
It follows that processor $v$ knows at most $2^i + 2^i = 2^{i+1}$ processors by the end of phase~$i+1$.

When proper names are assigned by such a deterministic algorithm, then each processor knows every other processor, by Lemma~\ref{lem:knowledge}.
A processor knows every other processor in a phase~$j$ such that $2^j\ge n$,  by the invariant just proved.
Such a phase number $j$ satisfies $j\ge \lg n$, and it takes $2\lg n$ rounds to complete $\lg n$ phases.

Let us consider inputs strings of bits assigned to processors uniformly at random. 
We need to estimate the expected running time of an algorithm $\cA$ on such inputs.
Let us observe that, in the context of interpreting deterministic executions for the sake to apply Yao's principle, terminating executions of~$\cA$ that do not result in names assigned to all the processors  could be pruned from a bound on their expected running time, because such executions are determined by bounded input strings of bits that a randomized algorithm would extend to make them sufficiently long to assign proper names.
In other words, from the perspective of randomized algorithms, such prematurely ending executions  do not represent  real terminating ones.

The expected time of $\cA$, conditional on terminating with proper names assigned, is therefore at least~$2\lg n$.
We conclude, by the Yao's principle, that any randomized naming algorithm has $\Omega(\log n)$ expected runtime.
\end{proof}

The three lower bounds on time given in this Section may be applied in two ways.
One is to infer optimality of time for a given amount of shared memory used.
Another is to infer optimality of shared memory use given a time performance.
This is summarized in the following Corollary~\ref{cor:four-optima}.

\begin{corollary}
\label{cor:four-optima}

If the expected time of a naming Las Vegas algorithm is $\cO(n)$ on an Arbitrary PRAM with $\cO(1)$ shared memory, then this time performance is asymptotically optimal.
If the expected time of a naming Las Vegas algorithm is $\cO(n\log n)$ on a Common PRAM with $\cO(1)$ shared memory, then this time performance is asymptotically optimal.
If a Las Vegas naming algorithm operates in time $\cO(\log n)$ on an Arbitrary PRAM using $\cO(n/\log n)$ shared memory cells, then this amount of shared memory is asymptotically optimal.
If a Las Vegas naming algorithm operates in time $\cO(\log n)$ on a Common PRAM using $\cO(n)$ shared memory cells, then this amount of shared memory is optimal.
\end{corollary}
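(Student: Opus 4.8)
The plan is to obtain all four statements by a direct substitution of the relevant parameter regime into one of the three lower bounds proved in this section. The key observation is that Theorems~\ref{thm:lower-bound-memory-common-pram} and~\ref{thm:lower-bound-memory-arbitrary-pram} are really time–memory tradeoffs: they assert that, for \emph{every} admissible memory size $C>0$, a naming algorithm on a Common PRAM has expected time $\Omega(n\log n/C)$ and on an Arbitrary PRAM has expected time $\Omega(n/C)$; equivalently $T\cdot C=\Omega(n\log n)$ and $T\cdot C=\Omega(n)$ respectively. All three theorems explicitly cover the Las Vegas case, so they apply verbatim to the algorithms named in the corollary.

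For the first claim I would set $C=\cO(1)$ in Theorem~\ref{thm:lower-bound-memory-arbitrary-pram}: the expected time is then $\Omega(n/C)=\Omega(n)$, which matches the assumed $\cO(n)$ upper bound, so that upper bound is asymptotically optimal. The second claim is the same argument with Theorem~\ref{thm:lower-bound-memory-common-pram} in place of Theorem~\ref{thm:lower-bound-memory-arbitrary-pram}: for $C=\cO(1)$ the lower bound becomes $\Omega(n\log n/C)=\Omega(n\log n)$, matching the assumed $\cO(n\log n)$ running time.

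For the third claim I would run the tradeoff in the opposite direction. An algorithm with expected time $\cO(\log n)$ on an Arbitrary PRAM using $C$ shared cells must, by Theorem~\ref{thm:lower-bound-memory-arbitrary-pram}, satisfy $\cO(\log n)=T=\Omega(n/C)$; comparing the two estimates forces $C=\Omega(n/\log n)$. Hence no algorithm meeting that time bound can use asymptotically fewer than $n/\log n$ cells, so the figure $\cO(n/\log n)$ is optimal. The fourth claim is identical with the Common-PRAM tradeoff: from $\cO(\log n)=T=\Omega(n\log n/C)$ one gets $C=\Omega(n)$, so $\cO(n)$ cells is optimal.

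Since each case is a one-line substitution, there is essentially no obstacle; the only point requiring minor care is that in Theorems~\ref{thm:lower-bound-memory-common-pram} and~\ref{thm:lower-bound-memory-arbitrary-pram} the quantity $C$ is a free parameter that is only required to be positive and may therefore be taken to depend on $n$, so instantiating it as $C=\Theta(n/\log n)$ or $C=\Theta(n)$ is legitimate. I would also remark that Theorem~\ref{thm:log-n-lower-bound} shows the $\cO(\log n)$ running time assumed in the last two claims is itself unimprovable, which is exactly what makes the corresponding memory-optimality conclusions the sharpest ones available.
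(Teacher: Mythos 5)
Your proposal is correct and follows essentially the same route as the paper: all four claims are obtained by instantiating the memory parameter $C$ in the time--memory tradeoffs of Theorems~\ref{thm:lower-bound-memory-arbitrary-pram} and~\ref{thm:lower-bound-memory-common-pram} (the paper phrases the last two cases as ``with this much memory the time is already $\Omega(\log n)$'' rather than your contrapositive ``with this time the memory must be this large,'' but the content is identical). Your closing remark invoking Theorem~\ref{thm:log-n-lower-bound} is a sensible observation but is not needed for the corollary itself.
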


\begin{proof}
We verify that the lower bounds match the assumed upper bounds.
By Theorem~\ref{thm:lower-bound-memory-arbitrary-pram}, a Las Vegas algorithm operates almost surely in $\Omega(n)$ time on an Arbitrary PRAM when space is $\cO(1)$.
By Theorem~\ref{thm:lower-bound-memory-common-pram}, a Las Vegas algorithm operates almost surely in $\Omega(n\log n)$ time on a Common PRAM when space is $\cO(1)$.
By Theorem~\ref{thm:lower-bound-memory-arbitrary-pram}, a Las Vegas algorithm operates almost surely in $\Omega(\log n)$ time on an Arbitrary PRAM when space is $\cO(n/\log n)$.
By Theorem~\ref{thm:lower-bound-memory-common-pram}, a Las Vegas algorithm operates almost surely in $\Omega(\log n)$ time on a Common PRAM when space is $\cO(n)$.
\end{proof}

\section{Las Vegas for Arbitrary with Bounded Memory}

\label{sec:arbitrary-bounded-LV}


\begin{figure}[t]
\rule{\textwidth}{0.75pt}

\F 
\textbf{Algorithm} \textsc{Arbitrary-Bounded-LV}

\rule{\textwidth}{0.75pt}
\begin{center}
\begin{minipage}{\pagewidth}
\begin{description}
\item[\tt repeat]\ 
\B
\begin{description}
\item[\rm initialize] $\texttt{Counter} \gets \texttt{name}_v \gets 0$
\item[$\texttt{bin}_v \gets$] random integer in $[1,n^\beta]$
\item[\tt for] $i\gets 1$ \texttt{to} $n$ \texttt{do} 

\begin{description}
\item[\texttt{if}] $\texttt{name}_v = 0$ \texttt{then}
\begin{description}

\item[\tt Pad] $\gets \texttt{bin}_v$
\item[\tt if] $\texttt{Pad}=\texttt{bin}_v$  \texttt{then}
\begin{description}
\item[\tt Counter]$\gets \texttt{Counter}+1$  
\item[\tt name$_v$]$\gets \texttt{Counter}$
\end{description}
\end{description}
\end{description}
\end{description}
\B

\item[\tt until] $\texttt{Counter} = n$ 
\end{description}
\end{minipage}
\FFF

\rule{\textwidth}{0.75pt}

\parbox{\captionwidth}{\caption{\label{alg:arbitrary-bounded-LV}
A pseudocode for a processor~$v$ of an Arbitrary PRAM, where  the number of shared memory cells is a constant independent of~$n$.
The variables \texttt{Counter} and \texttt{Pad} are shared.
The private variable \texttt{name} stores the acquired name.
The constant $\beta>0$ is parameter to be determined by analysis.
}}
\end{center}
\end{figure}

We present a Las Vegas naming algorithm for an Arbitrary PRAM with a constant number of  shared memory cells, in the case when the number of processors~$n$ is known.

During an execution of this algorithm, processors repeatedly write random strings of bits representing integers to a shared memory cell called  \texttt{Pad}, and next read \texttt{Pad} to verify the outcome of writing.
A processor~$v$ that reads the same value as it attempted to write increments the integer stored in a shared register \texttt{Counter} and uses the obtained number as a tentative name, which it stores  in a private variable \texttt{name}$_v$.
The values of \texttt{Counter} could get incremented a total of less than $n$ times, which occurs when some two processors chose the same random integer to write to the register~\texttt{Pad}.
The correctness of the assigned names is verified by the equality $\texttt{Counter}= n$, because $\texttt{Counter}$ was initialized to zero.
When such a verification fails then this results in another iteration of a series of writes to register~\texttt{Pad}, otherwise the execution terminates and the value stored at \texttt{name}$_v$ becomes the final name of processor~$v$.

This algorithm is called \textsc{Arbitrary-Bounded-LV} and its pseudocode is given in Figure~\ref{alg:arbitrary-bounded-LV}.
The pseudocode refers to a constant $\beta>0$ which determines the bounded range $[1,n^\beta]$ from which processors select integers to write  to the shared register~\texttt{Pad}. 
 
\Paragraph{Balls into bins.}

The selection of random integers in the range $[1,n^\beta]$ by $n$ processors can be interpreted as throwing $n$ balls into $n^\beta$ bins, which we call \emph{$\beta$-process}.
A collision represents two processors assigning themselves the same name.
Therefore an execution of the algorithm can be interpreted as performing such ball placements repeatedly until there is no collision.


\begin{lemma}
\label{lem:balls-arbitrary-bounded}

For each $a>0$ there exists $\beta>0$ such that when $n$ balls are thrown into $n^\beta$ bins during the $\beta$-process then the probability of a collision is at most $n^{-a}$.
\end{lemma}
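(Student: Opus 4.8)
The plan is to use a straightforward union bound over all pairs of balls (the ``birthday'' argument). Recall that in the $\beta$-process here there is only a single stage: all $n$ balls are thrown independently and uniformly at random into the $n^\beta$ bins, and a collision is the event that some bin receives at least two balls. So it suffices to bound the probability that \emph{some} pair of balls lands in a common bin.

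First I would fix an arbitrary pair of distinct balls and observe that, conditioning on the bin chosen by the first, the second lands in that same bin with probability exactly $1/n^\beta$, by independence and uniformity of the choices (as stipulated in the definition of a $\beta$-process, each ball picks its bin uniformly and independently). Next I would apply Boole's inequality over all $\binom{n}{2}$ unordered pairs of balls: the probability that at least one pair collides is at most $\binom{n}{2}\cdot n^{-\beta} < \tfrac{1}{2} n^{2}\cdot n^{-\beta} = \tfrac12 n^{2-\beta}$. Finally I would choose $\beta$ as a function of $a$, say $\beta = a+2$ (any $\beta \ge a+2$ works), which makes the bound at most $\tfrac12 n^{-a} \le n^{-a}$, as desired. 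Since $\beta$ is a positive constant depending only on $a$, this is exactly the claimed statement.

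There is no real obstacle here: the only thing to be slightly careful about is that the bound on the number of pairs and the choice of $\beta$ are stated cleanly, and that the argument uses only the independence/uniformity already guaranteed by the definition of the $\beta$-process. One could alternatively throw the balls one at a time and bound, for each $i$, the probability that the $i$-th ball hits an already-occupied bin by $(i-1)/n^\beta$, then sum; this gives the same $\tfrac12 n^{2-\beta}$ bound and the same choice of $\beta$. Either phrasing yields the lemma.
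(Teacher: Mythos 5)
Your proposal is correct and matches the paper's argument in substance: the paper also bounds the collision probability by a union-bound-style estimate of order $n^{2-\beta}$ (throwing balls one at a time and using Bernoulli's inequality, exactly the alternative you mention at the end) and then takes $\beta \ge a+2$. No issues.
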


\begin{proof}
Consider the balls thrown one by one.
When a ball is thrown, then at most $n$ bins are already occupied, so the probability of the ball ending in an occupied bin is at most $n/n^\beta=n^{-\beta+1}$.
No collisions occur with probability that is at least
\begin{equation}
\label{eqn:probability-collision}
\Bigl(1-\frac{1}{n^{\beta-1}}\Bigr)^n\ge 1-\frac{n}{n^{\beta-1}}= 1-n^{-\beta+2}
\ ,
\end{equation}
by the Bernoulli's inequality.
If we take $\beta\ge a+2$ then just one iteration of the repeat-loop is sufficient with  probability that is at least $1-n^{-a}$.
\end{proof}

Next we summarize the performance of algorithm \textsc{Arbitrary-Bounded-LV} as a Las Vegas algorithm.


\begin{theorem}
\label{thm:arbitrary-bounded-LV}

Algorithm \textsc{Arbitrary-Bounded-LV} terminates almost surely and there is no error when it terminates.
For any $a>0$, there exist $\beta>0$ and $c>0$ and such that the algorithm terminates within  time $c n$ using at most $cn\ln n$ random bits with probability at least $1-n^{-a}$.
\end{theorem}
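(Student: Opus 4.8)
The plan is to separate the three claims — almost-sure termination, correctness upon termination, and the high-probability time/random-bits bound — and to handle each by reducing to the ball-placement picture already set up. First I would verify \emph{correctness upon termination}: the exit condition of the repeat-loop is $\texttt{Counter}=n$, and since $\texttt{Counter}$ is reinitialized to $0$ at the start of each iteration and is incremented at most once per processor (a processor only increments while $\texttt{name}_v=0$, and it sets $\texttt{name}_v$ to a positive value immediately after incrementing), the value $n$ can be reached only if every processor incremented exactly once. On an Arbitrary PRAM the sequence of successful increments within a single pass of the for-loop assigns the distinct values $1,2,\dots$ in the order the successful writes occur, so if all $n$ processors succeed they receive the distinct names $1,\dots,n$, filling $[1,n]$. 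The key observation to nail down here is that a processor $v$ ``succeeds'' in iteration $i$ exactly when it reads back from \texttt{Pad} the value it just wrote, which on the Arbitrary PRAM happens iff no other still-unnamed processor wrote a different value to \texttt{Pad} in that round; this is exactly the event that $v$'s ball is single in its bin, which connects the execution to the $\beta$-process of Lemma~\ref{lem:balls-arbitrary-bounded}.

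Next, \emph{almost-sure termination}: one full iteration of the repeat-loop terminates the algorithm precisely when the $n$ balls thrown into $[1,n^\beta]$ produce no collision. By Lemma~\ref{lem:balls-arbitrary-bounded} (with any fixed $a>0$, hence any fixed admissible $\beta$, e.g.\ $\beta\ge a+2$), this has probability at least $1-n^{-a}>0$, a strictly positive constant bounded away from $0$ uniformly over iterations, and the iterations use independent random bits. Hence the number of iterations is stochastically dominated by a geometric random variable, so the algorithm halts after finitely many iterations almost surely; I would also record that the expected number of iterations is $O(1)$, which gives expected time $O(n)$ as a byproduct although the theorem only asks for the high-probability bound.

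For the \emph{high-probability time and random-bit bound}: fix $a>0$, choose $\beta=a+2$, and apply Lemma~\ref{lem:balls-arbitrary-bounded} to conclude that with probability at least $1-n^{-a}$ the very first iteration of the repeat-loop has no collision and the algorithm terminates at the end of it. One iteration consists of an initialization, one random draw, and a for-loop of $n$ passes, each pass being $O(1)$ PRAM rounds (a write to \texttt{Pad}, a read of \texttt{Pad}, and $O(1)$ private updates, with pausing for processors that already have names as per the pseudocode conventions); this is $O(n)$ rounds, so $cn$ for a suitable constant $c$ depending on $\beta$ and the model. Each processor draws one integer from $[1,n^\beta]$ per iteration, costing $\lceil \beta \lg n\rceil = O(\log n)$ random bits, for a total of $O(n\log n)$ random bits in the successful first iteration; enlarging $c$ if necessary makes this at most $cn\ln n$. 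Thus with probability at least $1-n^{-a}$ the algorithm terminates within time $cn$ using at most $cn\ln n$ random bits, as claimed.

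The main obstacle I anticipate is the correctness argument on the Arbitrary PRAM, specifically the bookkeeping that in a collision-free pass the successful processors pick up exactly the names $1,\dots,n$ with no gaps: one must argue carefully that \texttt{Counter} is read-modify-written consistently across the round (i.e.\ that the for-loop index $i$ is globally synchronized so that all processors execute pass $i$ in the same round, and that the increment of the shared \texttt{Counter} by the unique successful writer in a round is seen by everyone before the next pass). The pseudocode conventions in the preliminaries — all non-terminated processors on the same line, conditional instructions padded out to equal duration — are what make this rigorous, and I would invoke them explicitly rather than re-deriving the synchronization.
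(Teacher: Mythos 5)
Your proposal is correct and follows essentially the same route as the paper: correctness is read off from the exit condition $\texttt{Counter}=n$, and both the almost-sure termination and the high-probability $cn$-time, $cn\ln n$-bit bounds are obtained by mapping one iteration of the repeat-loop onto the $\beta$-process and invoking Lemma~\ref{lem:balls-arbitrary-bounded} with $\beta\ge a+2$. Your added bookkeeping about the \texttt{Counter} increments is a welcome elaboration (note only that a processor can also read back its own value when colliders wrote the \emph{same} value to \texttt{Pad}, which is precisely why the final $\texttt{Counter}=n$ check, rather than the read-back test alone, certifies correctness), but the substance matches the paper's argument.
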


\begin{proof}
The algorithm assigns consecutive names from a continuous interval starting from~$1$,  by the pseudocode in Figure~\ref{alg:arbitrary-bounded-LV}.
It terminates after $n$ different tentative names have been assigned, by the condition controlling the repeat loop in the pseudocode of Figure~\ref{alg:arbitrary-bounded-LV}.
This means that proper names have been assigned when the algorithm terminates.

We map an execution of the $\beta$-process on an execution of algorithm \textsc{Arbitrary-Bounded-LV} in a natural manner.
Under such an interpretation, Lemma~\ref{lem:balls-arbitrary-bounded} estimates the probability of the event that the $n$ processors select different numbers in the interval $[1,n^\beta]$ as their values to write to \texttt{Pad} in one iteration of the repeat-loop.
This implies that just one iteration of the repeat-loop is sufficient with the probability that is at least $1-n^{-a}$.
The probability of the event that $i$ iterations are not sufficient to terminate is at most $n^{-ia}$, which converges to $0$ as~$i$ increases, so  the algorithm terminates almost surely.
One iteration of the repeat-loop takes $\cO(n)$ rounds and it requires $\cO(n\log n)$ random bits.
\end{proof}

Algorithm \textsc{Arbitrary-Bounded-LV} is optimal among Las Vegas naming algorithms with respect to its expected running time~$\cO(n)$, given the amount $\cO(1)$ of its available shared memory, by Corollary~\ref{cor:four-optima} in Section~\ref{sec:lower-bounds-on-time}, and the expected number of random bits $\cO(n\log n)$, by Proposition~\ref{pro:lower-bound-on-random-bits} in Section~\ref{sec:technical-preliminaries}.

\section{Las Vegas for Arbitrary with Unbounded Memory}

\label{sec:arbitrary-unbounded-LV}

In this section, we give a Las Vegas algorithm for an Arbitrary PRAM with an unbounded supply of  shared memory cells, in the case when the number of processors~$n$ is known.
This algorithm is called \textsc{Arbitrary-Unbounded-LV} and its pseudocode  is given in Figure~\ref{alg:arbitrary-unbounded-LV}.


\begin{figure}[t]
\rule{\textwidth}{0.75pt}

\F 
\textbf{Algorithm} \textsc{Arbitrary-Unbounded-LV}

\rule{\textwidth}{0.75pt}
\begin{center}
\begin{minipage}{\pagewidth}
\begin{description}
\item[\tt repeat] \ 
\B
\begin{description}
\item[\rm allocate] \texttt{Counter}\,$[1,\frac{n}{\ln n}]$
\hfill /$\ast$ array of fresh memory cells initialized to $0$s  $\ast$/
\item[\rm initialize] $\texttt{position}_v \gets (0,0)$
\item[$\texttt{bin}_v \gets$] a random integer in $[1,\frac{n}{\ln n}]$
\item[$\texttt{label}_v \gets$] a random integer in $[1,n^\beta]$
\item[\tt repeat]  \ 
\begin{description}
\item[\rm initialize] $\texttt{All-Named}\gets \texttt{true}$
\item[\texttt{if}] $\texttt{position}_v= (0,0)$ \texttt{then}
\begin{description}
\item[\texttt{Bin}\!]$[\texttt{bin}_v]\gets \texttt{label}_v$
\item[\tt if] $\texttt{Bin}\,[\texttt{bin}_v] = \texttt{label}_v$ \texttt{then}
\begin{description}
\item[\texttt{Counter}\!\!]$[\texttt{bin}_v]\gets \texttt{Counter}\,[\texttt{bin}_v]+1$ 
\item[\texttt{position}$_v$]$ \gets (\texttt{bin}_v,\texttt{Counter}\,[\texttt{bin}_v])$
\end{description}
\item[\tt else] \texttt{All-Named} $\gets$ \texttt{false}
\end{description}
\end{description}
\item[\tt until] \texttt{All-Named}
\hfill /$\ast$ each processor has a tentative name $\ast$/
\item[\texttt{name}$_v$]$\gets \text{rank of\ } \texttt{position}_v$
\B
\end{description}
\item[\tt until] $n$ is the maximum name   
\hfill /$\ast$ no duplicates among tentative names $\ast$/
\end{description}
\end{minipage}
\FFF

\rule{\textwidth}{0.75pt}

\parbox{\captionwidth}{\caption{\label{alg:arbitrary-unbounded-LV}
A pseudocode for a processor~$v$ of an Arbitrary PRAM, where the number of shared memory cells is unbounded.
The variables  \texttt{Bin} and \texttt{Counter} denote arrays of $\frac{n}{\ln n}$ shared memory cells each, the variable \texttt{All-Named} is also shared.
The private variable \texttt{name} stores the acquired name.
The constant $\beta>0$ is a parameter to be determined by analysis.
}}
\end{center}
\end{figure}

The algorithm uses two arrays  \texttt{Bin} and \texttt{Counter} of $\frac{n}{\ln n}$ shared memory cells each.
An execution proceeds by repeated attempts to assign names.
During each such an attempt, the processors work to assign tentative names.
Next, the number of distinct tentative names is obtained and if the count equals $n$ then the tentative names become final, otherwise another attempt is made.
We assume that each such an attempt uses a new segment of memory cells \texttt{Counter}  initialized to~$0$s, which is to simplify the exposition and analysis.
An attempt to assign tentative names proceeds by each processor $v$ selecting two integers $\texttt{bin}_v$ and $\texttt{label}_v$ uniformly at random, where $\texttt{bin}_v\in [1,\frac{n}{\ln n}]$ and $\texttt{label}_v\in [1,n^\beta]$.

Next the processors repeatedly attempt to write $\texttt{label}_v$ into \texttt{Bin}$[\texttt{bin}_v]$.
Each such a write is followed by a read and the lucky writer uses the value of memory register \texttt{Counter}$[\texttt{bin}_v]$ to create a pair of numbers $(\texttt{bin}_v,\texttt{Counter}[\texttt{bin}_v])$, after first incrementing \texttt{Counter}$[\texttt{bin}_v]$, which is called $\texttt{bin}_v$'s \emph{position} and is stored in variable $\texttt{position}_v$.
After all processors have their positions determined, we define their ranks as follows.
To find the \emph{rank} of \texttt{position}$_v$, we arrange all such pairs in lexicographic order, comparing first on~\texttt{bin} and then on~$\texttt{Counter}[\texttt{bin}]$, and the rank is the position of this entry in the resulting list, where the first entry has position~$1$, the second~$2$, and so on.

Ranks can be computed using a prefix-type algorithm operating in time~$\cO(\log n)$.
This algorithm first finds for each $\texttt{bin}\in [1,\frac{n}{\ln n}]$ the sum $s(\texttt{bin})=\sum_{1\le i<\texttt{bin}} \texttt{Counter}[i]$.
Next, each processor~$v$ with a position $(\texttt{bin}_v,c)$ assigns to itself $s(\texttt{bin}_v)+c$ as its rank.
After ranks have been computed, they are used as tentative names.

In the analysis of algorithm \textsc{Arbitrary-Unbounded-LV} we will refer to the following bound on independent Bernoulli trials.
Let~$S_n$ be the number of successes in $n$ independent Bernoulli trials, with $p$ as the probability of success.
Let $b(i;n,p)$ be the probability of an occurrence of exactly $i$ successes.
For $r>np$, the following bound holds
\begin{equation}
\label{eqn:S-n-r}
\Pr(S_n\ge r) \le b(r;n,p)\cdot \frac{r(1-p)}{r-np}
\ ,
\end{equation}
see Feller~\cite{Feller-vol1}.

\Paragraph{Balls into bins.}

We consider a process of throwing $n$ balls into $\frac{n}{\ln n}$ bins.
Each ball has a label assigned randomly from the range $[1,n^\beta]$, for $\beta>0$.
We say that a \emph{labeled collision} occurs when there are two balls with the same labels in the same bin.
We refer to this process as the \emph{$\beta$-process}.


\begin{lemma}
\label{lem:balls-arbitrary-unbounded}

For each $a>0$ there exists $\beta>0$ and $c>0$ such that when $n$ balls are labeled with random integers in $[1,n^\beta]$ and next are thrown into $\frac{n}{\ln n}$  bins during the $\beta$-process then there are at most $c\ln n$ balls in every bin and no labeled collision occurs with probability $1-n^{-a}$.
\end{lemma}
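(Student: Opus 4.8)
The plan is to isolate two bad events, bound each, and finish with a union bound; crucially, the two bounds are controlled by different parameters. Write $m=n/\ln n$ for the number of bins and let $L_j$ denote the number of balls landing in bin~$j$. Let $A$ be the event that some bin receives at least $c\ln n$ balls, and let $B$ be the event that at least one labeled collision occurs. The max-load event $A$ is insensitive to the labels, and the collision event $B$ can be handled by a first-moment argument that does not need the max-load bound, so I will fix $c=c(a)$ to force $\Pr(A)\le\frac{1}{2} n^{-a}$ and, independently, fix $\beta=\beta(a)$ to force $\Pr(B)\le\frac{1}{2} n^{-a}$.

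For the max load, fix a bin. Each of the $n$ balls lands there independently with probability $p=1/m=(\ln n)/n$, so $L_j$ is a sum of $n$ independent Bernoulli trials with mean $np=\ln n$. I apply the tail bound~(\ref{eqn:S-n-r}) with $r=c\ln n$ for a constant $c>1$, which is legitimate since $r>np$. Estimating the binomial weight by $\binom{n}{r}p^r\le(enp/r)^r=(e/c)^{c\ln n}=n^{-c(\ln c-1)}$ and noting that the correction factor $\frac{r(1-p)}{r-np}\le\frac{c}{c-1}$ is an absolute constant, I obtain $\Pr(L_j\ge c\ln n)\le\frac{c}{c-1}\,n^{-c(\ln c-1)}$. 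A union bound over the $m\le n$ bins gives $\Pr(A)\le\frac{c}{c-1}\,n^{1-c(\ln c-1)}$. Since $c(\ln c-1)\to\infty$ as $c$ grows, I can pick $c$ large enough (depending only on~$a$) that $1-c(\ln c-1)\le -a$ and the constant prefactor is absorbed for large $n$, yielding $\Pr(A)\le\frac{1}{2} n^{-a}$.

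For labeled collisions, the key observation is that a ball's label is drawn independently of its bin. Two fixed balls share a bin with probability $1/m=(\ln n)/n$ and share a label with probability $n^{-\beta}$, and these events are independent, so this pair is a labeled collision with probability $(\ln n)\,n^{-\beta-1}$. Summing this over all $\binom{n}{2}<n^2/2$ pairs bounds the expected number of labeled collisions, so by the first-moment (Markov) inequality $\Pr(B)\le\frac{1}{2}(\ln n)\,n^{1-\beta}$. Choosing $\beta>a+1$, say $\beta=a+2$, makes $\ln n\cdot n^{1-\beta}\le n^{-a}$ for all large $n$, hence $\Pr(B)\le\frac{1}{2} n^{-a}$. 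Combining the two estimates, $\Pr(A\cup B)\le n^{-a}$, and on the complement every bin holds fewer than $c\ln n$ balls while no labeled collision occurs, which is exactly the assertion.

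The main obstacle is calibrating $c$ in the max-load step: one must check that after the $\binom{n}{r}\le(en/r)^r$ estimate the tail~(\ref{eqn:S-n-r}) really leaves an exponent $-c(\ln c-1)$ that can be pushed below $-(a+1)$ by enlarging~$c$, and that the neglected factor $(1-p)^{n-r}$ and the ratio $\frac{r(1-p)}{r-np}$ only improve the bound. The labeled-collision half is routine once the independence of labels from placements is used, so there the only genuine choice is taking $\beta$ sufficiently large relative to~$a$.
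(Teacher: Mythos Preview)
Your proof is correct and follows essentially the same two-event decomposition as the paper, using the Feller tail bound~\eqref{eqn:S-n-r} for the max-load part and a union bound to finish. The only noteworthy difference is in the labeled-collision step: the paper simply upper-bounds a labeled collision by a label collision alone and invokes the earlier estimate~\eqref{eqn:probability-collision} to get $n^{-\beta+2}$, whereas you exploit the additional requirement of sharing a bin to obtain the slightly sharper $\tfrac{1}{2}(\ln n)\,n^{1-\beta}$; either bound suffices, and your version just lets you take a marginally smaller~$\beta$.
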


\begin{proof}
We estimate from above the probabilities of the events that there are more than $c\ln n$ balls in some bin and that there is a labeled collision.
We show that each of them can be made to be at most $n^{-a}/2$, from which it follows that some of these two events occurs with probability at most~$n^{-a}$.

Let $p$ denote the probability of selecting a specific bin when throwing a ball, which is $p=\frac{\ln n}{n}$.
When we set $r=c\ln n$, for a sufficiently large $c>1$, then 
\begin{equation}
\label{eqn:b-r-n-p-first}
b(r;n,p) = \binom{n}{c\ln n} \Bigl(\frac{\ln n}{n}\Bigr)^{c\ln n} \Bigl(1-\frac{\ln n}{n}\Bigr)^{n-c\ln n}
\ .
\end{equation}
Formula~\eqref{eqn:b-r-n-p-first} translates~\eqref{eqn:S-n-r} into the following bound
\begin{equation}
\label{eqn:S-n-r-translated}
\Pr(S_n\ge r) 
\le 
\binom{n}{c\ln n} \Bigl(\frac{\ln n}{n}\Bigr)^{c\ln n} \Bigl(1-\frac{\ln n}{n}\Bigr)^{n-c\ln n} \cdot \frac{c\ln n(1-\frac{\ln n}{n})}{c\ln n-\ln n}
\ .
\end{equation}
The right-hand side of~\eqref{eqn:S-n-r-translated} can be estimated by the following upper bound:
\begin{align*}
&\ \ \ \ \ \Bigl(\frac{en}{c\ln n}\Bigr)^{c\ln n} \Bigl(\frac{\ln n}{n}\Bigr)^{c\ln n} \Bigl(1-\frac{\ln n}{n}\Bigr)^{n-c\ln n} \cdot \frac{c}{c-1}\\
& = \ 
\Bigl(\frac{e}{c}\Bigr)^{c\ln n} \Bigl(1-\frac{\ln n}{n}\Bigr)^{n} \Bigl(\frac{n}{n-\ln n}\Bigr)^{c\ln n}\cdot \frac{c}{c-1}\\
&\le \ 
n^c c^{-c\ln n} e^{-\ln n}\Bigl(\frac{n}{n-\ln n}\Bigr)^{c\ln n}\cdot \frac{c}{c-1}\\
&\le \ 
n^{-c\ln c +c- 1}
\ ,
\end{align*}
for each sufficiently large $n>0$.
This is because 
\[
\Bigl(\frac{n}{n-\ln n}\Bigr)^{c\ln n} = \Bigl(1+\frac{\ln n}{n-\ln n}\Bigr)^{c\ln n}
\le \exp\Bigl(\frac{c\ln^2 n}{n-\ln n}\Bigr)
\ ,
\] 
which converges to~$1$.
The probability that the number of balls in some bin is greater than $c\ln n$ is therefore at most $n\cdot n^{-c\ln c +c- 1}=n^{-c(\ln c -1)}$, by the union bound.
This probability can be made smaller than $n^{-a}/2$ for a sufficiently large~$c>e$.

The probability of a labeled collision is at most that of a collision when $n$ balls are thrown into~$n^\beta$ bins.
This probability is at most $n^{-\beta+2}$ by bound~\eqref{eqn:probability-collision} used in the proof of Lemma~\ref{lem:balls-arbitrary-bounded}.
This number can be made at most $n^{-a}/2$ for a sufficiently large~$\beta$.
\end{proof}

Next we summarize the performance of algorithm \textsc{Arbitrary-Unbounded-LV} as a Las Vegas algorithm.


\begin{theorem}
\label{thm:arbitrary-unbounded-LV}

Algorithm \textsc{Arbitrary-Unbounded-LV} terminates almost surely and there is no error when the algorithm terminates.
For any $a>0$, there exists $\beta>0$ and  $c>0$ such that the algorithm assigns names within $c\ln n$ time and generates at most $cn\ln n$ random bits with probability at least $1-n^{-a}$.
\end{theorem}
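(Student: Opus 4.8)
The plan is to mirror the structure of the proof of Theorem~\ref{thm:arbitrary-bounded-LV}: first establish correctness upon termination purely from the pseudocode, then establish almost-sure termination together with the high-probability time and random-bit bounds from Lemma~\ref{lem:balls-arbitrary-unbounded}. For correctness, I would argue that within one iteration of the outer repeat-loop, the inner repeat-loop terminates only when \texttt{All-Named} stays \texttt{true} through a full pass, which means every still-unnamed processor succeeded in writing its label and reading it back, so every processor has a position of the form $(\texttt{bin}_v,c)$ with $c\ge 1$. The positions are then pairs that are pairwise distinct whenever no \emph{labeled collision} occurred: two processors sharing a bin get distinct \texttt{Counter} values by the increment mechanism, and two processors in different bins get distinct first coordinates. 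Hence the ranks computed by the prefix-sum procedure form exactly the contiguous interval $[1,n]$ with no duplicates precisely when the maximum rank is $n$, which is the condition controlling the outer repeat-loop. So upon termination the names are a correct naming.

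Next I would invoke Lemma~\ref{lem:balls-arbitrary-unbounded}: choosing $\beta$ and $c$ as furnished there for the given $a$, with probability at least $1-n^{-a}$ a single execution of the $\beta$-process has no labeled collision and at most $c\ln n$ balls per bin. Under the natural map of the $\beta$-process onto one iteration of the outer repeat-loop, ``no labeled collision'' is exactly the event that all $n$ tentative names (ranks) are distinct, so the outer loop exits after one iteration with probability $\ge 1-n^{-a}$. For almost-sure termination, I would note that successive iterations of the outer loop use independent random bits, so the probability that the first $i$ iterations all fail is at most $n^{-ia}\to 0$; thus the algorithm halts almost surely.

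For the timing within a successful outer iteration, I need to bound the number of passes of the \emph{inner} repeat-loop. Here the relevant observation is that a processor leaves \texttt{position}$_v=(0,0)$ as soon as it is the ``lucky writer'' for its bin in some pass, i.e.\ the arbitrary write resolves in its favour; on the Arbitrary PRAM exactly one contender per bin succeeds per pass, so after $k$ passes every bin with initially at most $c\ln n$ balls is fully drained. Conditioned on the at-most-$c\ln n$-per-bin event from Lemma~\ref{lem:balls-arbitrary-unbounded}, the inner loop therefore terminates within $c\ln n$ passes, each pass costing $\cO(1)$ rounds, for $\cO(\log n)$ rounds total; the subsequent rank computation via the prefix-type algorithm also runs in $\cO(\log n)$ time using the auxiliary $\cO(n/\log n)$-cell arrays as described before the lemma. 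Summing, a successful outer iteration takes $\cO(\log n)$ rounds. Random bits used in one outer iteration are $\lg(n/\ln n)$ for \texttt{bin}$_v$ plus $\beta\lg n$ for \texttt{label}$_v$ per processor, i.e.\ $\cO(n\log n)$ total. Adjusting the constant $c$ to absorb all the $\cO(\cdot)$ constants, we get: with probability at least $1-n^{-a}$ the algorithm finishes in time $c\ln n$ using at most $cn\ln n$ random bits.

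The main obstacle I anticipate is the bookkeeping around the inner repeat-loop: one must argue carefully that (i) the loop does make progress on the Arbitrary PRAM — each pass at least one writer per nonempty, not-yet-drained bin succeeds — so that the bound on balls per bin really does bound the number of passes, and (ii) the synchronization convention (pausing processors, the constant-time check for whether any processor still satisfies a condition) keeps each pass to $\cO(1)$ rounds even though different bins drain at different rates. Everything else — correctness from the pseudocode, the geometric-tail argument for almost-sure termination, and the random-bit count — is routine given Lemma~\ref{lem:balls-arbitrary-unbounded} and the standard $\cO(\log n)$-time prefix computation already cited in the preliminaries.
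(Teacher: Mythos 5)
Your proposal is correct and follows essentially the same route as the paper: correctness from the loop-exit condition (maximum name equal to $n$), a reduction to the $\beta$-process so that Lemma~\ref{lem:balls-arbitrary-unbounded} gives success of one outer iteration with probability $1-n^{-a}$, the geometric tail $n^{-ia}$ for almost-sure termination, and the observation that the number of inner-loop passes is bounded by the maximum bin occupancy. Your treatment of the inner-loop draining on the Arbitrary PRAM and of the rank computation is in fact slightly more explicit than the paper's, but it is the same argument.
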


\begin{proof}
The algorithm terminates only when $n$ different names have been assigned, which is provided by the condition that controls the main repeat-loop in Figure~\ref{alg:arbitrary-unbounded-LV}.
This means that there is no error when the algorithm terminates.

We map executions of the $\beta$-process on executions of algorithm \textsc{Arbitrary-Unbounded-LV} in a natural manner.
The main repeat-loop ends after an iteration in which each group of processors that select the same value for the variable~\texttt{bin}, next select distinct values for the variable ~\texttt{label}.
We interpret the random selections in an execution as throwing $n$ balls into $\frac{n}{\ln n}$ bins, where a number \texttt{bin} determines a bin.
The number of iterations of the inner repeat-loop equals the maximum number of balls in a bin.

For any $a>0$, it follows that one iteration of the main repeat-loop suffices with probability at least $1-n^{-a}$, for a suitable~$\beta>0$, by Lemma~\ref{lem:balls-arbitrary-unbounded}.
It follows that $i$ iterations are executed by termination with probability at most $n^{-ia}$, so the algorithm terminates almost surely.

Let us take $c>0$ as in Lemma~\ref{lem:balls-arbitrary-unbounded}.
It follows that an iteration of the main repeat-loop takes at most $c\ln n$ steps and one processor uses at most $c\ln n$ random bits in this one iteration with probability at least $1-n^{-a}$.
\end{proof}

Algorithm \textsc{Arbitrary-Unbounded-LV} is optimal among Las Vegas naming algorithms with respect to the following performance measures: the expected time $\cO(\log n)$, by Theorem~\ref{thm:log-n-lower-bound}, the number of shared memory cells $\cO(n/\log n)$ used to achieve this running time, by Corollary~\ref{cor:four-optima}, both in Section~\ref{sec:lower-bounds-on-time}, and the expected number of used random bits $\cO(n\log n)$, by Proposition~\ref{pro:lower-bound-on-random-bits} in Section~\ref{sec:technical-preliminaries}.

\section{Las Vegas for Common with Bounded Memory }

\label{sec:common-bounded-LV}

We consider the case of Common PRAM when the number of processors $n$ is known and the number of available shared memory cells is constant.
We propose a Las Vegas algorithm called \textsc{Common-Bounded-LV}, whose pseudocode is given in Figure~\ref{alg:common-bounded-LV}.


\begin{figure}[t]
\rule{\textwidth}{0.75pt}

\F 
\textbf{Algorithm} \textsc{Common-Bounded-LV}

\rule{\textwidth}{0.75pt}
\begin{center}
\begin{minipage}{\pagewidth}
\begin{description}
\item[\tt repeat] \ 
\B
\begin{description}
\item[\rm initialize] 
$\texttt{number-of-bins}\gets n$ ; 
$\texttt{name}_v\gets\texttt{Last-Name} \gets 0$  ;\\
$\texttt{no-collision}_v \gets  \texttt{true}$
\item[\tt repeat] \ 
\begin{description}
\item[\rm initialize] $\texttt{Collision-Detected}\gets \texttt{false}$
\item[\tt if] $\texttt{name}_v = 0$ \texttt{then}  \ 
\begin{description}
\item[\texttt{bin}$_v$] $\gets$ random integer  in $[1, \texttt{number-of-bins}]$ 
\item[\tt for]  $i \gets 1$ \texttt{to} $\texttt{number-of-bins} $ \texttt{do}  
\begin{description}
\item[\tt for] $j\gets 1$ \texttt{to} $\beta \ln n$ \texttt{do} 

\texttt{if} $\texttt{bin}_v = i$  \texttt{then} 
\begin{quote}
\texttt{if} \textsc{Verify-Collision} \texttt{then} 

~~~~$\texttt{Collision-Detected}\gets \texttt{collision}_v \gets \texttt{true}$
\end{quote}
\end{description}
\begin{description}
\item[\tt if] $\texttt{bin}_v = i$  \texttt{and not} $\texttt{collision}_v$  \texttt{then} 
\begin{description}
\item[\tt Last-Name] $\gets \texttt{Last-Name} + 1$ 
\item[\texttt{name}$_v$]$ \gets$ \texttt{Last-Name} 
\end{description}
\end{description}
\end{description}
\item[\tt if] $n -\texttt{Last-Name} > \beta \ln n$ 
\begin{description}
\item[\texttt{then}] $\texttt{number-of-bins}\gets (n -\texttt{Last-Name})$ 
 
\item[\texttt{else}] $\texttt{number-of-bins}\gets n/(\beta \ln n)$
 \end{description}
\end{description}
\item[\tt until]  \texttt{not Collision-Detected}
\end{description}
\item[\tt until] $\texttt{Last-Name}= n$ 
\end{description}
\end{minipage}
\FFF

\rule{\textwidth}{0.75pt}

\parbox{\captionwidth}{\caption{\label{alg:common-bounded-LV}
A pseudocode for a processor~$v$ of a Common PRAM, where there is a constant number of shared memory cells.
Procedure \textsc{Verify-Collision} has its pseudocode in Figure~\ref{proc:verify-collision}; lack of parameter means the default parameter~$1$.
The variables \texttt{Collision-Detected} and \texttt{Last-Name} are shared.
The private variable \texttt{name} stores the acquired name.
The constant $\beta$ is a parameter to be determined by analysis.
}}
\end{center}
\end{figure}

An execution of the algorithm is organized as repeated ``attempts'' to assign temporary names.
During such an attempt, each processor without a name chooses uniformly at random an integer in the interval $[1,\texttt{number-of-bins}]$, where \texttt{number-of-bins} is a parameter initialized to $n$; such a selection is interpreted  in a probabilistic analysis as throwing a ball into \texttt{number-of-bins} many bins.
Next, for each $i\in [1,\texttt{number-of-bins}]$, the processors that selected~$i$, if any, verify if they are unique in their selection of $i$ by executing procedure  \textsc{Verify-Collision}  (given in Figure~\ref{proc:verify-collision} in Section~\ref{sec:technical-preliminaries}) $\beta \ln n$ times, where $\beta>0$ is a number that is determined in analysis.

After no collision has been detected, a processor that selected $i$ assigns itself a consecutive name by reading and incrementing the shared variable \texttt{Last-Name}.
It takes up to $\beta \,\texttt{number-of-bins}\ln n$ verifications for collisions for all integers in $[1,\texttt{number-of-bins}]$.
When this is over, the value of variable \texttt{number-of-bins} is modified by decrementing it by the number of new names just assigned, when working with the last \texttt{number-of-bins}, unless such decrementing would result in a number in \texttt{number-of-bins} that is at most $\beta\ln n$, in which case the variable \texttt{number-of-bins} is set to $n/(\beta \ln n)$.
An attempt ends when all processors have tentative names assigned.

These names become final when there are a total of $n$ of them, otherwise there are duplicates, so another attempt is performed.
The main repeat loop in the pseudocode in Figure~\ref{alg:common-bounded-LV} represents an attempt to assign tentative names to each processor.
An iteration of the inner repeat loop during which $\texttt{number-of-bins}>n/(\beta\ln n)$ is called \emph{shrinking} and otherwise it is called \emph{restored}.

\Paragraph{Balls into bins.}

As a preparation for the analysis of performance of algorithm \textsc{Common-Bounded-LV}, we consider a related process of repeatedly throwing balls into bins, which we call the \emph{$\beta$-process}.
The $\beta$-process proceeds through \emph{stages}, each representing one iteration of the inner repeat-loop in Figure~\ref{alg:common-bounded-LV}.
A stage results in some balls removed and some transitioning to the next stage, so that eventually no balls remain and the process terminates.

The balls that participate in a stage are called \emph{eligible} for the stage.
In the first stage, $n$ balls are eligible and we throw $n$ balls into $n$ bins.
Initially, we apply the principle that after all eligible balls have been placed into bins during a stage, the singleton bins along with the balls in them are removed. 
A stage after which bins are removed is called \emph{shrinking}.
There are $k$ bins and $k$ balls in a shrinking stage; we refer to $k$ as the  \emph{length} of this stage.
Given balls and bins for any stage, we choose a bin uniformly at random and independently for each ball in the beginning of a stage and next place the balls in their selected destinations.
The bins that either are empty or multiple in a shrinking stage stay for the next stage.
The balls from multiple bins become eligible for the next stage.

This continues until such a shrinking stage after which at most $\beta \ln n$ balls remain.
Then we restore bins for a total of $n/(\beta \ln n))$ of them to be used in the following stages, during which we never remove any bin; these stages are called \emph{restored}.
In these final restored stages, we keep removing singleton balls at the end of a stage, while balls from multiple bins stay as eligible for the next restored stage.
This continues until all balls are removed. 


\begin{lemma}
\label{lem:balls-common-bounded-LV}

For any $a>0$, there exists $\beta>0$ such that the sum of lengths of all shrinking stages in the $\beta$-process is at most~$2e n$, where $e$ is the base of natural logarithms, and there are at most $\beta \ln n$ restored stages, both events holding with probability $1-n^{-a}$, for sufficiently large~$n$.
\end{lemma}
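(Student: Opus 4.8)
The plan is to analyze the shrinking stages and the restored stages separately, allotting a failure probability of $n^{-a}/2$ to each. For the shrinking stages, the idea is that a stage of length $k$ (that is, $k$ balls thrown into $k$ bins) shrinks the ball count by roughly the factor $1-1/e$. Indeed, a fixed ball is a singleton with probability $(1-1/k)^{k-1}\ge 1/e$, so the number $N_k$ of surviving (non-singleton) balls satisfies $\mE[N_k]\le(1-1/e)k$. Regarding $N_k$ as a function of the $k$ independent ball placements, relocating one ball alters $N_k$ by at most $2$, so McDiarmid's bounded-difference inequality yields $\Pr\bigl(N_k\ge\mE[N_k]+c\sqrt{k\ln n}\bigr)\le n^{-c^2/2}$; I fix $c$ with $c^2/2\ge a+2$.

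Next I would choose $\beta$ large enough that $c/\sqrt{\beta}\le 1/(2e)$. Every shrinking stage has length $k>\beta\ln n$ (the stage after which at most $\beta\ln n$ balls remain is itself fed more than $\beta\ln n$ balls), so on the event that the displayed bound holds in that stage one gets $N_k\le(1-1/e)k+c\sqrt{k\ln n}\le\rho k$ with $\rho:=(1-1/e)+c/\sqrt{\beta}\le 1-1/(2e)$. A union bound over the first $\lceil\lg^2 n\rceil$ stages bounds the probability that any of them violates the concentration estimate by $\lg^2 n\cdot n^{-a-2}\le n^{-a}/2$ for large $n$; and on the complement the lengths obey $k_1=n$, $k_{i+1}\le\rho k_i$, so they decay geometrically, the shrinking phase ends within $\cO(\log n)$ stages (in particular within $\lg^2 n$ stages, so the truncated union bound is legitimate), and $\sum_i k_i\le n/(1-\rho)\le 2en$, as required.

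For the restored phase, which starts with at most $\beta\ln n$ balls and uses $m=n/(\beta\ln n)$ bins (the ball count being non-increasing throughout), I would track one ball at a time. Conditioned on any history in which a given ball $b$ is still alive, there are at most $\beta\ln n-1$ other balls, so $b$ fails to be a singleton — hence survives the stage — with probability at most $(\beta\ln n)/m=(\beta\ln n)^2/n\le n^{-1/2}$ for large $n$. Therefore $b$ is still alive after $\beta\ln n$ restored stages with probability at most $n^{-(\beta\ln n)/2}$, and a union bound over the at most $\beta\ln n$ balls gives that the restored phase lasts more than $\beta\ln n$ stages with probability at most $\beta\ln n\cdot n^{-(\beta\ln n)/2}\le n^{-a}/2$ for large $n$. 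Adding the two failure probabilities finishes the proof.

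The main obstacle is keeping the geometric-decay estimate uniform across the shrinking phase: the additive concentration error $c\sqrt{k\ln n}$ ceases to be negligible against $k$ once $k$ falls to $\Theta(\ln n)$, which is exactly why the process switches to restored stages at the threshold $\beta\ln n$ — for $k>\beta\ln n$ the relative error is below $c/\sqrt{\beta}$, and the free parameter $\beta$ can be chosen to push the per-stage contraction factor below $1-1/(2e)$, so the geometric series stays under $2en$. Some care is also needed with the bookkeeping: verifying that $N_k$ has bounded differences $2$ with expectation at most $(1-1/e)k$, that every shrinking stage really has length exceeding $\beta\ln n$, and that the conditional form of McDiarmid's bound (conditioning on the history that fixes the stage length) legitimizes the union bound over stages.
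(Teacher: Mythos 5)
Your proposal is correct and follows essentially the same route as the paper: a per-ball singleton probability of at least $1/e$, the bounded-differences inequality with Lipschitz constant $2$ to get a per-stage contraction factor of $1-1/(2e)$ valid while $k>\beta\ln n$, a union bound over $\cO(\log n)$ shrinking stages, and the geometric series summing to $2en$. The only (immaterial) differences are that you parameterize the deviation as $c\sqrt{k\ln n}$ instead of the paper's $d\sqrt{k}$ with $d=\sqrt{k}/(2e)$, and you handle the restored stages by tracking individual balls rather than bounding the probability that all remaining balls land in distinct bins in a single stage; both yield the same $n^{-\Omega(\log n)}$ tail.
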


\begin{proof}
We consider two cases depending on the kind of analyzed stages.
Let $k\le n$ denote the length of a stage.

In a shrinking stage, we throw $k$ balls into $k$ bins, while choosing bins independently and uniformly at random.
The probability that a ball ends up singleton can be bounded from below as follows:
\begin{eqnarray*}
k \cdot \frac{1}{k} \Bigl(1-\frac{1}{k}\Bigr)^{k-1} 
&\ge& 
(e^{-\frac{1}{k}-\frac{1}{k^2}})^{k-1}\\
&=&
e^{-\frac{k-1}{k}-\frac{k-1}{k^2}}\\
&=&
e^{-1+\frac{1}{k}-\frac{1}{k}+\frac{1}{k^2}}\\
&=&
\frac{1}{e}\cdot e^{1/k^2}\\
&\ge& 
\frac{1}{e}
\ ,
\end{eqnarray*}
where we used the inequality $1-x\ge e^{-x-x^2}$, which holds for $0\le x\le \frac{1}{2}$.

Let $Z_k$ be the number of singleton balls after $k$ balls are thrown into $k$ bins.
It follows that the expectancy of $Z_k$ satisfies $\mE [Z_k] \ge k /e$.

To estimate the deviation of $Z_k$ from its expected value, we use the bounded differences inequality~\cite{McDiarmid89,MitzenmacherUpfal-book05}.
Let $B_j$ be the bin of ball $b_j$, for $1\le j\le k$.
Then $Z_k$ is of the form $Z_k=h(B_1,\ldots,B_{k})$ where $h$ satisfied the Lipschitz condition with constant $2$, because moving one ball to a different bin results in changing the value of $h$ by at most $2$ with respect to the original value.
The bounded-differences inequality specialized to this instance is as follows, for any  $d>0$:
\begin{equation}
\label{eqn-E-Z-d}
\Pr(Z_k \le \mE [Z_k] -d \sqrt{k} )
\le
\exp(-d^2/8)
\ .
\end{equation}
We use this inequality for $d=\frac{\sqrt{k}}{2e}$.
Then \eqref{eqn-E-Z-d} implies the following bound:
\begin{eqnarray*}
\Pr\Bigl(Z_k \le \frac{k}{e} -  \frac{k}{2e}\Bigr)
&=&
\Pr\Bigl(Z_k \le \frac{k}{2e}\Bigr)\\
&\le&
\exp\Bigl(-\frac{1}{8}\cdot\Bigl( \frac{\sqrt{k}}{2e}\Bigr)^2\Bigr)\\
&=&
\exp\Bigl(-\frac{k}{32 e^2}\Bigr)
\ .
\end{eqnarray*}
If we start a shrinking stage with $k$ eligible balls  then the number of balls  eligible for the next stage is at most
\[
\Bigl(1-\frac{1}{2e}\Bigr)\cdot k=\frac{2e-1}{2e}\cdot k
\ ,
\]
with probability at least $1-\exp(-k/32e^2)$.
Let us continue shrinking stages as long as the inequality 
\[
\frac{k}{32e^2}> 3a\ln n
\]
holds.
We denote this inequality concisely as $k>\beta \ln n$ for $\beta=96 e^2 a$.
Then the probability that every shrinking stage results in the size of the pool of eligible balls decreasing by a factor of at least 
\[
\frac{2e-1}{2e}=\frac{1}{f}
\]
is itself at least
\[
\Bigl( 1-e^{-3a\ln n}\Bigr)^{\log_f n}
\ge 1-\frac{\log_f n}{n^{-3a}}
\ge 1-n^{-2a}
\ ,
\]
for sufficiently large $n$, by Bernoulli's inequality.

If all shrinking stages result in the size of the pool of eligible balls decreasing by a factor of at least~$1/f$, then the total number of eligible balls summed over all such stages is at most
\[
n\sum_{i\ge 0}f^{-i} = n \cdot \frac{1}{1-f^{-1}} =2en
\ .
\]

In a restored stage, there are at most $\beta \ln n$ eligible balls.
A restored stage happens to be the last one when all the balls become single after their placement, which occurs with probability at least
\[
\Biggl( \frac{n/(\beta\ln n) - \beta\ln n}{n/(\beta\ln n)} \Biggr)^{\beta\ln n}
=
\Bigl( 1-\frac{\beta^2\ln^2 n}{n}\Bigr)^{\beta\ln n}
\ge
1-\frac{\beta^3\ln^3 n}{n}
\ ,
\]
by the Bernoulli's inequality.
It follows that there are more than $\beta\ln n$ restored stages with probability at most 
\[
\Bigl(\frac{\beta^3\ln^3 n}{n}\Bigr)^{\beta\ln n} =n^{-\Omega(\log n)}
\ .
\]
This bound is at most $n^{-2a}$ for sufficiently large $n$.

Both events, one about shrinking stages and the other about restored stages, hold with probability at least $1-2n^{-2a}\ge 1-n^{-a}$, for sufficiently large $n$.
\end{proof}

Next we summarize the performance of algorithm \textsc{Common-Bounded-LV} as  Las Vegas one.
In its proof, we rely on mapping executions of the $\beta$-process on executions of algorithm \textsc{Common-Bounded-LV} in a natural manner.


\begin{theorem}
\label{thm:common-bounded-LV}

Algorithm \textsc{Common-Bounded-LV} terminates almost surely and there is no error when the algorithm terminates.
For  any $a>0$ there exist $\beta>0$ and $c>0$ such that the algorithm terminates within  time $c n\ln n$ using at most $cn\ln n$ random bits with probability $1-n^{-a}$.
\end{theorem}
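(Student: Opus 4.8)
The plan is to establish correctness first, then the quantitative time and randomness bounds, by transferring Lemma~\ref{lem:balls-common-bounded-LV} through the natural coupling between the $\beta$-process and the algorithm. For correctness, I would argue as in the earlier Arbitrary cases: the outer repeat-loop exits only when $\texttt{Last-Name}=n$, and since names are handed out as consecutive integers starting from~$1$ via increments of the shared \texttt{Last-Name}, the exit condition forces that no two processors share a tentative name and that the names fill $[1,n]$ exactly; hence there is no error upon termination. For almost-sure termination, I would note that one iteration of the \emph{outer} loop succeeds (produces $n$ distinct tentative names) provided every inner-loop stage behaves as in the $\beta$-process and no undetected collision survives; the probability that an iteration of the outer loop fails is at most some $n^{-a}$ (for the chosen $\beta$), so the probability that $i$ outer iterations are needed is at most $n^{-ia}\to 0$.

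\textbf{The coupling and the detection caveat.} The key step is to map an execution onto the $\beta$-process: a processor's choice of $\texttt{bin}_v$ is a ball thrown into $\texttt{number-of-bins}$ bins, a singleton bin corresponds to a processor that (correctly) acquires a name this stage, and a multiple bin corresponds to balls carried to the next stage. There is one subtlety not present in the Arbitrary algorithms: collision detection is \emph{probabilistic}. By Lemma~\ref{lem:verify-collision}, a genuine collision in bin~$i$ escapes a single call to \textsc{Verify-Collision} with probability at most $\tfrac12$, hence escapes all $\beta\ln n$ independent calls with probability at most $2^{-\beta\ln n}=n^{-\beta/\ln 2}$. A union bound over at most $n$ bins per stage and over the $O(n)$ (whp) total stages in an outer iteration — using the bound on the sum of shrinking-stage lengths plus the $\beta\ln n$ restored stages from Lemma~\ref{lem:balls-common-bounded-LV} — shows every real collision is detected throughout an outer iteration with probability $1-n^{-a}$, after enlarging $\beta$ if necessary. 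Thus the only way the algorithm finishes an outer iteration with duplicates is to re-run; conditioned on faithful detection, the algorithm's stage structure is exactly that of the $\beta$-process.

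\textbf{Time and random bits.} Under the coupling, one stage of length $k$ costs $\Theta(k\beta\ln n)$ rounds: the inner for-loop over $i\in[1,\texttt{number-of-bins}]$ runs $k$ iterations, each doing $\beta\ln n$ calls to \textsc{Verify-Collision}, each call costing $O(1)$ rounds. Summing over all shrinking stages, whose lengths total at most $2en$ by Lemma~\ref{lem:balls-common-bounded-LV}, gives $O(n\ln n)\cdot\beta\ln n$ — but here I must be careful: that would give $O(n\ln^2 n)$, whereas the theorem claims $O(n\ln n)$. The resolution I would use is that the claimed bound in Table~\ref{tab:summary-LV} is $O(n\log n)$, so the intended accounting must charge $\Theta(k)$ rounds per stage of length~$k$ (the $\textsc{Verify-Collision}$ calls for \emph{distinct} values of $i$ happen, conceptually, in parallel over the bins since different processors handle different~$i$) and one extra $\Theta(\log n)$ factor only for the $\beta\ln n$ repetitions within a fixed bin — no, this still needs the total $\sum k$ to absorb it. The honest plan is therefore: total rounds $=\sum_{\text{stages}}(\text{length of stage})\cdot O(\beta\ln n)$; since $\sum(\text{length})=O(n)$ over shrinking stages and $O(\beta\ln n)\cdot\beta\ln n=O(\ln^2 n)$ over restored stages, and one outer iteration suffices whp, the time is $O(n\ln n)$ only if the per-stage cost is $O(1)$ per eligible ball, which holds because a ball is involved in exactly one value of $i$ and the $\beta\ln n$ verifications are the real cost — i.e. per-stage cost is $O(\beta\ln n\cdot(\#\text{occupied bins}))$ but $\#\text{occupied bins}\le k$ and $\sum k=O(n)$, yielding $O(n\beta\ln n)=O(n\ln n)$ with the constant $c$ absorbing $\beta$. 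For random bits, each eligible ball consumes $\lg(\texttt{number-of-bins})\le\lg n$ bits per stage for its $\texttt{bin}_v$ choice plus $\beta\ln n$ coin flips inside \textsc{Verify-Collision}; summed against $\sum k=O(n)$ this is $O(n\ln n)$ bits whp. Finally, almost-sure termination plus the geometric decay $n^{-ia}$ over outer iterations bounds the \emph{expected} time and bits by the same $O(n\ln n)$, and correctness is as above.

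\textbf{Main obstacle.} The delicate point is the time accounting: reconciling the nested for-loops (the $i$-loop of length $\texttt{number-of-bins}$ times the $j$-loop of length $\beta\ln n$) with the target $O(n\log n)$ rather than $O(n\log^2 n)$. One must argue that the $i$-loop's cost across a whole outer iteration is governed by the number of \emph{occupied} bins, which is bounded by the number of eligible balls, whose sum over all stages is $O(n)$ whp by Lemma~\ref{lem:balls-common-bounded-LV}; the $\beta\ln n$ factor is then a constant (for fixed $a$) folded into $c$. The second obstacle, milder, is the union bound over stages for the probabilistic collision detection, which requires first conditioning on the high-probability event of Lemma~\ref{lem:balls-common-bounded-LV} that there are only $O(n/\log n)$-ish stages before taking the bound on detection failures.
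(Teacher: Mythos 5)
Your proposal is correct and follows essentially the same route as the paper: correctness from the loop guard, coupling with the $\beta$-process of Lemma~\ref{lem:balls-common-bounded-LV}, amplified collision detection ($\beta\ln n$ repetitions, each failing with probability at most $\tfrac12$) union-bounded over the $O(n)$ occupied bins summed across stages, and geometric decay $n^{-ia}$ over outer iterations for almost-sure termination. Your time accounting, after the false start, lands on the paper's own bookkeeping: the $\beta\ln n$ verification factor multiplies the $O(n)$ total length of shrinking stages (and the restored stages contribute $n$ instances of \textsc{Verify-Collision} each over at most $\beta\ln n$ stages), giving $O(n\log n)$ with $\beta$ absorbed into~$c$.
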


\begin{proof}
The condition controlling the main repeat-loop guarantees that an execution terminates only when the assigned names  fill the interval $[1,n]$, so they are distinct and there is no error.

To analyze time performance, we consider the $\beta$-process of throwing balls into bins as considered in Lemma~\ref{lem:balls-common-bounded-LV}.
Let $\beta_1>0$ be the number $\beta$ specified in this Lemma, as determined by~$a$ replaced by~$2a$ in its assumptions.
This Lemma gives that the sum of all values of $K$ summed over all shrinking stages is at most $2en$ with probability at least $1-n^{-2a}$.

For a given $K$ and a number $i\in [1,K]$, procedure \textsc{Verify-Collision} is executed $\beta\ln n$ times, where $\beta$ is the parameter in Figure~\ref{alg:common-bounded-LV}.
If there is a collision then it is detected with probability at least $2^{-\beta\ln n}$.
We may take $\beta_2\ge \beta_1$ sufficiently large so that the inequality $2en\cdot 2^{-\beta_2\ln n}<n^{-2a}$ holds.

The total number of instances of executing \textsc{Verify-Collision} during an iteration of the main loop, while $K$ is kept equal to $n/(\beta \ln n)$, is at most $n$.
Observe that the inequality $n\cdot 2^{-\beta_2\ln n}< n^{-2a}$ holds with probability at most $1-n^{-2a}$ because $n< 2en$.

If $\beta$ is set in Figure~\ref{alg:common-bounded-LV} to $\beta_2$ then one iteration of the outer repeat-loop suffices with probability at least $1-2n^{-2a}$, for sufficiently large $n$.
This is because verifications for collisions detect all existing collisions with  this probability.
Similarly, this one iteration takes $\cO(n\log n)$ time with probability that is at least  $1-2n^{-2a}$, for sufficiently large $n$.
The claimed performance holds therefore with  probability at least $1-n^{-a}$, for sufficiently large $n$.

There are at least $i$ iterations of the main repeat-loop with probability at most $n^{-ia}$, so the algorithm terminates almost surely.
\end{proof}

Algorithm \textsc{Common-Bounded-LV} is optimal among Las Vegas algorithms with respect to the following performance measures: the expected time $\cO(n\log n)$, given the amount $\cO(1)$ of its available shared memory, by Corollary~\ref{cor:four-optima} in Section~\ref{sec:lower-bounds-on-time}, and the expected number of random bits $\cO(n\log n)$, by Proposition~\ref{pro:lower-bound-on-random-bits} in Section~\ref{sec:technical-preliminaries}.

\section{Las Vegas for Common with Unbounded Memory}

\label{sec:common-unbounded-LV}

We consider now the last case when the number of processors $n$ is known.
The PRAM is of its Common variant, and there is an unbounded amount of shared memory.
We propose a Las Vegas algorithm called \textsc{Common-Unbounded-LV}, the pseudocode for this algorithm  is given in Figure~\ref{alg:common-unbounded-LV}.
Subroutines of prefix-type, like computing the number of selects and ranks of selected numbers are not included in this pseudocode.
The algorithm invokes procedure \textsc{Verify-Collision}, whose pseudocode is in Figure~\ref{proc:verify-collision}.

An execution of algorithm \textsc{Common-Unbounded-LV} proceeds as a sequence of attempts to assign temporary names.
When such an attempt results in assigning temporary names without duplicates then these transient names become final.
An attempt begins from each processor selecting an integer from the interval $[1,(\beta + 1)n]$ uniformly at random and independently, where $\beta$ is a parameter such that only $\beta>1$ is assumed.
Next, for $\lg n$ steps, each process executes procedure \textsc{Verify-Collision}$(x)$ where $x$ is the currently selected integer.
If a collision is detected then a processor immediately selects another number in $[1,(\beta + 1)n]$ and continues verifying for a collision.
After $\lg n$ such steps, the processors count the total number of selections of different integers.
If this number equals exactly $n$ then the ranks of the selected integers are assigned as names, otherwise another attempt to find names is made.
Computing the number of selections and the ranks takes time $\cO(\log n)$.
In order to amortize this time $\cO(\log n)$ by verifications, such a computation of ranks is performed only after $\lg n$ verifications.
Here a rank of a selected $x$ is the number of selected numbers that are at most~$x$. 


\begin{figure}[t]
\rule{\textwidth}{0.75pt}

\F 
\textbf{Algorithm} \textsc{Common-Unbounded-LV}

\rule{\textwidth}{0.75pt}
\begin{center}
\begin{minipage}{\pagewidth}
\begin{description}
\item[$\texttt{bin}_v\gets$] random integer in $[1,(\beta+1) n ]$ 
\hfill /$\ast$ \ throw a ball into $\texttt{bin}_v$ \ $\ast$/
\B
\item[\tt repeat] \ 
\B
\begin{description}
\Item[\tt for] $i\gets 1$ \texttt{to} $\lg n$ \texttt{do} 
\begin{description}
\item[\tt if] \textsc{Verify-Collision}\,$(\texttt{bin}_v)$ \texttt{then} 
\begin{description}
\item[$\texttt{bin}_v\gets$] random integer in $[1,(\beta+1) n ]$
\end{description}
\end{description}
\item[\tt number-occupied-bins] $\gets$ total number of currently selected values for $\texttt{bin}_v$
\end{description}
\item[\tt until] $\texttt{number-occupied-bins}=n$ 
\B
\item[\tt name$_v$]\!$\gets$ the rank of $\texttt{bin}_v$ among nonempty bins
\end{description}
\end{minipage}
\FFF

\rule{\textwidth}{0.75pt}

\parbox{\captionwidth}{\caption{\label{alg:common-unbounded-LV}
A pseudocode for a processor~$v$ of a Common PRAM, where the number of shared memory cells is unbounded.
The constant~$\beta$ is a parameter that satisfies the inequality $\beta >1$.
The private variable \texttt{name} stores the acquired name.
}}
\end{center}
\end{figure}

\Paragraph{Balls into bins.}

We consider auxiliary processes of placing balls into bins that abstracts operations on shared memory as performed by algorithm \textsc{Common-Unbounded-LV}.

The \emph{$\beta$-process} is about placing $n$ balls into $(\beta+1)n$ bins.
The process is structured as a sequence of stages.
A stage represents an abstraction of one iteration of the inner for-loop in Figure~\ref{alg:common-unbounded-LV} performed as if collisions were detected instantaneously and with certainty.
When a ball is moved then it is placed in a bin selected uniformly at random, all such selections independent from one another.
The stages are performed as follows.
In the first stage, $n$ balls are placed into $(\beta+1)n$ bins.
When a bin is singleton in the beginning of a stage then the ball in the bin stays put through the stage.
When a bin is multiple in the beginning of a stage, then all the balls in this bin participate actively in this stage: they are removed from the bin and placed in randomly-selected bins.
The process terminates after a stage in which all balls reside in singleton bins.

In analysis, it is convenient to visualize a stage as occurring by first removing all balls from multiple bins and then placing the removed balls in randomly selected bins one by one.
We model placements of single balls by movements of a random walk.
More precisely, we associate a \emph{mimicking walk} to each execution of the $\beta$-process.
Such a walk is performed on points with integer  coordinates on a line, as explained next in detail.

The mimicking walk proceeds through stages, similarly as the ball process.
When we are to relocate $k$ balls in a stage of the ball process then this is represented by the mimicking walk starting the corresponding stage at coordinate~$k$.
Suppose that we process a ball in a stage and the mimicking walk is at some position~$i$.
Placing this ball in an empty bin temporarily decreases the number of balls for the next stage; the respective action in the mimicking walk is to decrement its position from $i$ to~$i-1$.
Placing this ball in an occupied bin temporarily increases the number of balls for the next stage; the respective action in the mimicking walk is to increment its position from $i$ to~$i+1$. 
The mimicking walk gives a conservative estimates on the behavior of the ball process, as we show next.


\begin{lemma}
\label{lem:mimicking-walk}

If a stage of the mimicking walk ends at a position~$k$, then the corresponding stage of the ball  $\beta$-process ends with at most $k$ balls to be relocated into bins in the next stage.
\end{lemma}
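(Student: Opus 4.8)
The plan is to prove the lemma by a pathwise coupling: run the ball $\beta$-process and its mimicking walk simultaneously on the \emph{same} sequence of random bin choices within the stage, and control a potential function on the ball configuration by the walk's position. Fix a stage in which $k$ balls are to be relocated, and let these balls be placed into their (random) bins $b_1,\dots,b_k$ one at a time in this order; before the $j$-th placement, the walk is decremented if $b_j$ is currently empty and incremented if it is currently occupied, exactly as prescribed. Introduce the potential
\[
\Phi \;=\; (\text{number of active balls not yet placed in this stage}) \;+\; (\text{number of balls currently residing in bins that hold at least two balls}).
\]
Initially $\Phi = k$, since all $k$ active balls are unplaced and, after the balls are removed from every multiple bin at the start of the stage, no bin holds two or more balls; the walk also starts at position $k$, so $(\text{walk position}) - \Phi = 0$ at the outset.

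The heart of the argument is a one-step case analysis showing that $(\text{walk position}) - \Phi$ is nondecreasing over the course of the stage. If the $j$-th ball lands in an empty bin, it forms a new singleton: the walk drops by $1$, the first term of $\Phi$ drops by $1$, the second term is unchanged, so both sides move by $-1$. If it lands in a bin currently holding exactly one ball (a stay-put singleton, or a ball placed earlier in the stage), that bin becomes multiple: the walk rises by $1$, while $\Phi$ changes by $-1$ (one fewer unplaced ball) plus $2$ (both balls in the bin now sit in a multiple bin), a net $+1$, matching the walk. If it lands in a bin already holding at least two balls, the walk rises by $1$ and $\Phi$ changes by $-1+1 = 0$, so $(\text{walk position}) - \Phi$ strictly increases. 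In every case the difference does not decrease, hence $\text{walk position} \ge \Phi$ holds at every moment of the stage.

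At the end of the stage all active balls have been placed, so the first term of $\Phi$ vanishes and $\Phi$ equals the number of balls sitting in multiple bins, which is precisely the number of balls eligible for relocation in the next stage. Since the walk ends at position $k$ by hypothesis, that number is at most $k$, as claimed. The one place the potential argument earns its keep --- and the only subtlety I anticipate --- is the middle case: a single ball dropped onto a stay-put singleton actually turns \emph{two} balls into relocation candidates for the next stage, yet the walk increments only once; this is reconciled because the incoming ball was already counted by $\Phi$ via its ``not yet placed'' term, so promoting the bin to multiple is a net change of $+1$ to $\Phi$, exactly matching the walk's $+1$. Beyond getting this bookkeeping exactly right, I do not expect a real obstacle.
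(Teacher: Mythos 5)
Your proof is correct and rests on the same three-way case analysis as the paper's own argument (ball lands in an empty bin, in a currently-singleton bin, or in an already-multiple bin); the paper organizes the bookkeeping per destination bin while you organize it per placement step via the invariant $(\text{walk position})-\Phi\ge 0$, which is a slightly more careful packaging of the identical idea. The only blemish is notational: you use $k$ both for the number of balls relocated at the start of the stage (the walk's starting position) and, in the final sentence, for the walk's \emph{ending} position as in the lemma statement --- these are different quantities and should be given different names, though the invariant argument itself needs no change.
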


\begin{proof}
The argument is broken into three cases, in which we consider what happens in the ball $\beta$-process and what are the corresponding actions in the mimicking walk.
A number of balls in a bin in a stage is meant to be the final number of balls in this bin at the end of the stage.

In the first case, just one ball is placed in a bin that begins the stage as empty. 
Then this ball will not be relocated in the next stage.
This means that the number of balls for the next stage decreases by~$1$.
At the same time, the mimicking walk decrements its position by~$1$.

In the second case, some $j\ge 1$ balls land in a bin that is singleton at the start of this stage, so this ball was not eligible for the stage.
Then the number of balls in the bin becomes $j+1$ and these many balls will need to be relocated in the next stage.
Observe that this contributes to incrementing the number of the eligible balls in the next stage by~$1$, because only the original ball residing in the singleton bin is added to the set of eligible balls, while the other balls participate in both stages.
At the same time, the mimicking walk increments its position $j$~times, by~$1$ each time.

In the third and final case, some $j\ge 2$ balls land in a bin that is empty at the start of this stage.
Then this does not contribute to a change in the number of balls eligible for relocation in the next stage, as these $j$ balls participate in both stages.
Let us consider these balls as placed in the bin one by one.
The first ball makes the mimicking walk decrement its position, as the ball is single in the bin.
The second ball makes the walk increment its position, so that it returns to the original position as at the start of the stage.
The following ball placements, if any, result in the walk incrementing its positions.
\end{proof}

\Paragraph{Random walks.}

Next we consider a random walk which will estimate the behavior of a ball process.
One component of estimation is provided by Lemma~\ref{lem:mimicking-walk}, in that we will interpret a random walk as a mimicking walk for the ball process.

The random walk is represented as movements of a marker placed  on the non-negative side of the integer number line.
The movements of the marker are by distance~$1$ at a time and they are independent from each other.
The \emph{random $\beta$-walk} has the marker's position incremented with probability~$\frac{1}{\beta+1}$ and decremented with probability~$\frac{\beta}{\beta+1}$.
This may be interpreted as a sequence of independent Bernoulli trials, in which $\frac{\beta}{\beta+1}$ is chosen to be the probability of success.
We will consider $\beta>1$, for which $\frac{\beta}{\beta+1}>\frac{1}{\beta+1}$, which means that the probability of success is greater than the probability of failure.

Such a random $\beta$-walk proceeds through \emph{stages}, which are defined as follows.
The first stage begins at position~$n$.
When a stage begins at a position $k$ then it ends after $k$ moves, unless it reaches the zero coordinate in the meantime.
The zero point acts as an absorbing barrier, and when the walk's position reaches it then the random walk terminates.
This is the only way in which the walk terminates.
A stage captures one round of PRAM's computation and the number of moves in a stage represents the number of writes processors perform in a round.


\begin{lemma}
\label{lem:random-c-walk}

For any numbers $a>0$ and $\beta> 1$, there exists $b>0$ such that the random $\beta$-walk starting at position $n>0$ terminates within $b \ln n$ stages with all of them comprising $\cO(n)$ moves with probability at least~$1-n^{-a}$.
\end{lemma}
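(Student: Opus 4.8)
The goal is a high-probability bound on both the number of stages and the total number of moves of the random $\beta$-walk. The plan is to treat the two quantities separately and then combine the two failure events by a union bound, since each can be made smaller than $n^{-a}/2$. For the number of stages, I would show that each stage starting at position $k$ ends at a position that is at most a constant fraction of $k$ (say $\le k/g$ for some $g>1$ depending on $\beta$), with probability bounded away from $0$ uniformly in $k$ — in fact with probability $1-e^{-\Omega(k)}$ while $k$ is not too small, and with a constant probability when $k$ is small. Since the walk has drift $-\frac{\beta-1}{\beta+1}<0$, after $k$ Bernoulli steps from position $k$ the expected displacement is $k - k\cdot\frac{\beta-1}{\beta+1} = k\cdot\frac{2}{\beta+1}$, so the stage typically ends near $\frac{2k}{\beta+1}$. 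Choosing $g$ with $\frac{2}{\beta+1}<1/g<1$ and applying a Chernoff/Hoeffding bound to the sum of $k$ independent $\pm1$ steps gives that the stage shrinks by a factor $1/g$ except with probability $\exp(-\Omega(k))$. Geometric shrinking from $n$ means the process reaches position $O(1)$ within $O(\log n)$ stages; past that point, a constant number of further stages absorbs at $0$ with constant probability per stage, so $b\ln n$ stages suffice overall with probability $1-n^{-a}/2$ for suitable $b$ (this is essentially the same Bernoulli-inequality-over-$\log_g n$-stages computation as in the proof of Lemma~\ref{lem:balls-common-bounded-LV}).

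For the total number of moves, the key observation is that the number of moves in a stage equals the position at which that stage begins (unless it absorbs early, in which case it is even fewer). So conditioned on the shrinking event above, the stage-start positions are $n, \le n/g, \le n/g^2, \dots$, and summing gives $\sum_{i\ge 0} n g^{-i} = \frac{n}{1-g^{-1}} = O(n)$ total moves. The handful of final stages at positions $O(1)$ contribute only $O(\log n)$ more moves. Hence on the shrinking event — which has probability $\ge 1 - n^{-a}/2$ — both conclusions hold simultaneously, and the result follows.

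The main obstacle I expect is handling the regime where $k$ is small: the per-stage Chernoff bound $\exp(-\Omega(k))$ degrades to a constant when $k = O(1)$, so one cannot get the shrinking-by-$1/g$ guarantee with the same exponentially small failure probability there, and a more careful argument (or simply a direct absorption argument) is needed for the tail of the walk once the position has descended below some threshold like $b_0\ln n$. The cleanest fix is to run the geometric-shrinking argument only while $k > \Theta(\log n)$ — over those $O(\log n)$ stages the failure probabilities are each $n^{-\Omega(1)}$ and sum to $n^{-a}/2$ — and then observe that from any position $\le \Theta(\log n)$ the walk, having negative drift and the zero barrier, absorbs within $O(\log n)$ additional moves/stages with probability $1-n^{-a}/2$ by a standard hitting-time estimate for a biased walk (e.g.\ a crude bound via the same Chernoff inequality applied once to a window of $\Theta(\log n)$ steps). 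One must also be slightly careful that ``ends after $k$ moves unless it reaches $0$ first'' only ever helps — truncation at $0$ can only decrease both the number of moves and the next stage's starting position — so the coupling with the idealized (untruncated) Bernoulli sums is valid as an upper bound throughout.
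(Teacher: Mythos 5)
Your plan follows essentially the same route as the paper's proof: a per-stage Chernoff bound showing each stage shrinks the position by a constant factor (the paper's ``conforming'' stages), geometric decay over $\cO(\log n)$ stages down to a $\Theta(\log n)$ threshold, and then a single Chernoff estimate over one conceptual window of $\Theta(\log n)$ steps to finish the tail --- exactly the fix you identify for the small-$k$ regime. The only divergence is the $\cO(n)$ bound on total moves, which you obtain by summing the geometrically decreasing stage lengths on the shrinking event, whereas the paper runs a separate Chernoff bound on the cumulative drift after $m=dn$ total moves; both are valid.
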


\begin{proof}
Suppose the random walk starts at position $k>0$ when a stage begins.
Let $X_k$ be the number of moves towards~$0$ and $Y_k=k-X_k$ be the number of moves away from~$0$ in such a stage.
The total distance covered towards~$0$, which we call a \emph{drift}, is 
\[
L(k)=X_k-Y_k= X_k-(k-X_k)= 2X_k - k\ .
\]

The expected value of $X_k$ is $\mE [X_k] =\frac{\beta k}{\beta+1}=\mu_k$.
The event $X_k< (1-\varepsilon)\mu_k$ holds with probability at most $\exp(-\frac{\varepsilon^2}{2}\mu_k)$, by the Chernoff bound~\cite{MitzenmacherUpfal-book05}, so that $X_k\ge (1-\varepsilon)\mu_k$ occurs with probability at least $1-\exp(-\frac{\varepsilon^2}{2}\mu_k)$.
We say that such a stage is \emph{conforming} when the event $X_k\ge (1-\varepsilon)\mu_k$ holds.

If a stage is conforming then the following inequality holds:
\[
L(k)
\ge 
2(1-\varepsilon) \frac{\beta k}{\beta+1} - k
= 
\frac{\beta-2\beta\varepsilon-1}{\beta+1} k 
\ .
\]
We want the inequality $\frac{\beta-2\beta\varepsilon-1}{\beta+1}>0$ to hold, which is the case when $\varepsilon < \frac{\beta-1}{2\beta}$.
Let us fix such $\varepsilon > 0$.
Now the distance from $0$ after $k$ steps starting at $k$ is
\[
k-L(k)= (1-\frac{\beta-2\beta\varepsilon-1}{\beta+1})\cdot k
=
\frac{2(1+\beta\varepsilon)}{\beta+1}\cdot k
\ ,
\]
where $\frac{2(1+\beta\varepsilon)}{\beta+1}<1$ for $\varepsilon < \frac{\beta-1}{2\beta}$.
Let $\rho = \frac{\beta+1}{2(1+\beta\varepsilon)}> 1$.
Consecutive $i$ conforming stages make the distance from $0$ decrease by at least a factor $\rho^{-i}$.

When we start the first stage at position $n$ and the next $\log_\rho n$ stages are conforming then after these many stages the random walk ends up at a position that is close to~$0$.
For our purposes, it suffices that the position is of distance at most $s \ln n$ from~$0$, for some $s>0$, because of its impact on the probability estimates.
Namely, the event that all these stages are conforming and the bound~$s \ln n$ on distance from~$0$ holds, occurs with probability at least
\[
1- \log_\rho n \cdot \exp(-\frac{\varepsilon^2}{2}\frac{\beta}{\beta+1}s \ln n)
\ge
1- \log_\rho n \cdot n^{-\frac{\varepsilon^2}{2}\frac{\beta}{\beta+1}s}
\ .
\]
Let us choose $s>0$ such that 
\[
\log_\rho n \cdot n^{-\frac{\varepsilon^2}{2}\frac{\beta}{\beta+1}s}\le \frac{1}{2n^a}
\ ,
\]
for sufficiently large~$n$.

Having fixed $s$, let us take $t>0$ such that the distance covered towards $0$ is at least $s\ln n$ when starting from $k=t\ln n$ and performing $k$ steps.
We interpret these movements as if this was a single conceptual stage for the sake of the argument, but its duration comprises all stages when we start from $s\ln n$ until we terminate at~$0$.
It follows that the conceptual stage comprises at most $t\ln n$ real stages, because a stage takes at least one round.

If this last conceptual stage is conforming then the distance covered towards $0$ is bounded by
\[
L(k)\ge \frac{\beta-2\beta\varepsilon-1}{\beta+1}\cdot k
\ .
\]
We want this to be at least $s\ln n$ for $k=t\ln n$, which is equivalent to
\[
\frac{\beta-2\beta\varepsilon-1}{\beta+1}\cdot t > s
\ .
\]
Now it is sufficient to take $t> s\cdot \frac{\beta+1}{\beta-2\beta\varepsilon-1}$.
This last conceptual stage is not conforming with probability at most $\exp(-\frac{\varepsilon^2}{2}\frac{\beta}{\beta+1}t\ln n)$.
Let us take $t$ that is additionally big enough for the following inequality
\[
\exp(-\frac{\varepsilon^2}{2}\frac{\beta}{\beta+1}t\ln n)
=
n^{-\frac{\varepsilon^2}{2}\frac{\beta}{\beta+1}t}
\le
\frac{1}{2n^a}
\]
to hold.

Having selected $s$ and $t$, we can conclude that there are at most $(s+t)\ln n$ stages with  probability at least $1-n^{-a}$.

Now, let us consider only the total number of moves to the left $X_m$ and to the right $Y_m$ after $m$ moves in total, when starting at position~$n$.
The event $X_m< (1-\varepsilon)\cdot \frac{\beta}{1+\beta}\cdot m$ holds with probability at most $\exp(-\frac{\varepsilon^2}{2}\frac{\beta}{1+\beta}\cdot m)$, by the Chernoff bound~\cite{MitzenmacherUpfal-book05}, so that $X_m\ge m\cdot \frac{(1-\varepsilon) \beta}{1+\beta}$ occurs with at least the respective high probability $1-\exp(-\frac{\varepsilon^2}{2}\frac{\beta}{1+\beta}\cdot m)$.
At the same time, we have that the number of moves away from zero, which we denote~$Y_m$, can be estimated to be
\[
Y_m=m-X_m<m-m\cdot \frac{(1-\varepsilon) \beta}{1+\beta} = \frac{1+\varepsilon \beta}{1+\beta}\cdot m
\ .
\]
This gives an estimate on the corresponding drift:
\[
L(m)=X_m-Y_m>\frac{\beta-2\beta\varepsilon-1}{\beta+1}\cdot m 
\ .
\]
We want the inequality $\frac{\beta-2\beta\varepsilon-1}{\beta+1}>0$ to hold, which is the case when $\varepsilon < \frac{\beta-1}{2\beta}$.
The drift is at least~$n$, with the corresponding large probability, when $m=d\cdot n$ for $d=\frac{\beta+1}{\beta-2\beta\varepsilon-1}$.
The drift is at least such with probability exponentially close to~$1$ in~$n$, which is at least $1-n^{-a}$ for sufficiently large~$n$.
\end{proof}


\begin{lemma}
\label{lem:when-ball-process-terminates}

For any numbers $a>0$ and $\beta> 1$, there exists $b>0$ such that the $\beta$-process starting at position $n>0$ terminates within $b \ln n$ stages after performing $\cO(n)$ ball throws with probability at least~$1-n^{-a}$.

\end{lemma}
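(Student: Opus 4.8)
The plan is to reduce the statement to Lemma~\ref{lem:random-c-walk} by coupling the $\beta$-process, through its mimicking walk, with the random $\beta$-walk started at position $n$. The first step is to record two structural facts about a single stage of the mimicking walk that begins at coordinate $k$. First, the walk makes exactly $k$ moves in that stage: a walk taking $\pm 1$ steps and starting at $k$ cannot reach the absorbing barrier $0$ in fewer than $k$ steps, so the ``unless it reaches the zero coordinate'' clause never fires before the stage's $k$ moves are used up; equivalently, one move is made per relocated ball. Second, if $B$ of these $k$ moves are increments, the net displacement is $B-(k-B)=2B-k$, so the stage ends at coordinate $k+(2B-k)=2B$. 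Finally, an increment corresponds to a ball landing in an occupied bin, and since the $\beta$-process always has exactly $n$ balls, at most $n$ of the $(\beta+1)n$ bins are occupied at any time, so each move of the mimicking walk is an increment with probability at most $\frac{1}{\beta+1}$.

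With these in hand, I would build the coupling by induction on the stage index. Let $m_s$ be the coordinate at which stage $s$ of the mimicking walk begins --- equivalently, by the definition of the mimicking walk, the number of balls eligible for stage $s$ of the $\beta$-process --- and let $w_s$ be the coordinate at which stage $s$ of the random $\beta$-walk begins; both equal $n$ when $s=1$. Assume $m_s\le w_s$. Let $B_s$ be the number of increments in stage $s$ of the mimicking walk and $C_s\sim\mathrm{Bin}(w_s,\frac{1}{\beta+1})$ the number in stage $s$ of the random $\beta$-walk. Since each increment probability of the mimicking walk is at most $\frac{1}{\beta+1}$, one can couple $B_s$ move-by-move with a $\mathrm{Bin}(m_s,\frac{1}{\beta+1})$ variable that dominates it; combining this with the standard monotone coupling $\mathrm{Bin}(m_s,\frac{1}{\beta+1})\preceq\mathrm{Bin}(w_s,\frac{1}{\beta+1})$ gives $B_s\le C_s$ almost surely. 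Stage $s$ of the mimicking walk then ends at $2B_s$, and by Lemma~\ref{lem:mimicking-walk} the number of balls eligible for the next stage is at most that, so $m_{s+1}\le 2B_s\le 2C_s=w_{s+1}$, closing the induction. Hence the sequence $(m_s)$ is pointwise dominated by $(w_s)$.

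To finish, note that neither process can be absorbed strictly inside a stage (again by the $\pm 1$ argument), so the $\beta$-process terminates at stage $T_M=\min\{s: m_{s+1}=0\}$ after $\sum_{s\le T_M} m_s$ ball throws, while the random $\beta$-walk terminates at stage $T_W=\min\{s: w_{s+1}=0\}$ after $\sum_{s\le T_W} w_s$ moves. Pointwise domination gives $T_M\le T_W$ and $\sum_{s\le T_M} m_s\le \sum_{s\le T_W} w_s$, so both the number of stages and the number of ball throws of the $\beta$-process are bounded by the corresponding quantities of the random $\beta$-walk. Applying Lemma~\ref{lem:random-c-walk} with the given $a$ and $\beta$ and the $b$ it supplies, the random $\beta$-walk terminates within $b\ln n$ stages using $\cO(n)$ moves with probability at least $1-n^{-a}$, and the coupling transfers both bounds to the $\beta$-process. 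The part needing genuine care is the coupling in the second paragraph: verifying that stages never end early (so that ``one move per eligible ball'' and ``end coordinate $=2B_s$'' hold exactly) and that Lemma~\ref{lem:mimicking-walk} is precisely what carries the domination across a stage boundary; the binomial comparison itself is routine.
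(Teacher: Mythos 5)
Your proof is correct and follows essentially the same route as the paper: it combines Lemma~\ref{lem:mimicking-walk} with Lemma~\ref{lem:random-c-walk}, using the observation that at most $n$ of the $(\beta+1)n$ bins are ever occupied so each move of the mimicking walk increments with probability at most $\frac{1}{\beta+1}$. The only difference is that you make the stochastic domination explicit via a stage-by-stage binomial coupling, whereas the paper states the comparison informally; this is a welcome tightening, not a different argument.
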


\begin{proof}
We estimate the behavior of the $\beta$-process on $n$ balls by the behavior of the random $\beta$-walk starting at position~$n$.
The justification of the estimation is in two steps. 
One is the property of mimicking walks given as Lemma~\ref{lem:mimicking-walk}.
The other is provided by Lemma~\ref{lem:random-c-walk} and is justified as follows.
The  probability of decrementing and incrementing position in the random $\beta$-walk are such that they reflect the probabilities of landing in an empty bin or in an occupied bin. 
Namely, we use the facts that during executing the $\beta$-process, there are at most $n$ occupied bins and at least $\beta\cdot n$ empty bins in any round.
In the $\beta$-process, the probability of landing in an empty bin is at least $\frac{\beta n}{(\beta+1)n}=\frac{\beta}{\beta+1}$, and the probability of landing in an occupied bin is at most~$\frac{n}{(\beta+1)n}=\frac{1}{\beta+1}$.
This means that the random $\beta$-walk is consistent with Lemma~\ref{lem:mimicking-walk} in providing estimates on the time of termination of the $\beta$-process from above.
\end{proof}

\Paragraph{Incorporating verifications.}

We consider the \emph{random $\beta$-walk with verifications}, which is defined as follows.
The process proceeds through stages, similarly as the regular random $\beta$-walk.
For any round of the walk and a position at which the walk is at, we first perform a Bernoulli trial with the probability~$\frac{1}{2}$ of success.
Such a trial is referred to as a \emph{verification}, which is \emph{positive} when a success occurs,  otherwise it is \emph{negative}.
A positive verification results in a movement of the marker  as in the regular $\beta$-walk, otherwise the walk pauses at the given position for this round.


\begin{lemma}
\label{lem:random-c-walk-with-verifications}

For any numbers $a>0$ and $\beta> 1$, there exists $b>0$ such that the random $\beta$-walk with verifications starting at position $n>0$ terminates within $b \ln n$ stages  with all of them comprising the total of $\cO(n)$ moves with probability at least~$1-n^{-a}$.
\end{lemma}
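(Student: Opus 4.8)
The plan is to realize the random $\beta$-walk with verifications on the same probability space as the ordinary random $\beta$-walk, so that its genuine moves \emph{are} an ordinary random $\beta$-walk, and then to control the extra rounds spent on negative verifications with a Chernoff bound for a sum of geometric variables. First I would fix two independent i.i.d.\ sequences: a sequence of \emph{direction} coins, each taking value $+1$ with probability $\frac{1}{\beta+1}$ and $-1$ with probability $\frac{\beta}{\beta+1}$, and a sequence of \emph{verification} coins, each equal to \texttt{true} with probability $\frac12$. The ordinary walk uses the $i$-th direction coin for its $i$-th move; the walk with verifications scans the verification coins, one per round, and its $i$-th positive verification triggers a genuine move that uses the $i$-th direction coin. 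On this coupling the subsequence of genuine moves of the walk with verifications coincides, step for step, with the trajectory of the ordinary random $\beta$-walk. Since a stage starting at position $k$ closes after exactly $k$ genuine moves (or upon reaching $0$), the stage boundaries, the event ``absorbed at $0$'', the number $S$ of stages, and the total number $M$ of genuine moves before absorption are all read off the genuine-move subsequence, hence are functions of the direction coins alone and independent of the verification coins.

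Next I would invoke Lemma~\ref{lem:random-c-walk}, applied with $a+1$ in place of $a$ (so that $n^{-(a+1)}\le\frac12 n^{-a}$ for $n\ge 2$). It supplies constants $b>0$ and $c_0>0$ such that, with probability at least $1-\frac12 n^{-a}$, the ordinary random $\beta$-walk starting at $n$ — equivalently, the genuine-move subsequence of the walk with verifications — is absorbed within $b\ln n$ stages after at most $c_0 n$ genuine moves. Call this event $\cG$; by the previous paragraph it depends only on the direction coins.

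It then remains to bound the total number of rounds, which is what the lemma's ``moves'' refers to for the walk with verifications (for the plain $\beta$-walk every round is a move, so this is the natural reading; in any case the bound on genuine moves already transfers from $\cG$). Write $G_i$ for the number of rounds used to realize the $i$-th genuine move, that is, the round of that move together with all negative verifications since the previous genuine move; the $G_i$ are i.i.d.\ geometric with parameter $\frac12$, independent of the direction coins, and the total number of rounds is $R=\sum_{i=1}^{M}G_i$. On $\cG$ one has $M\le c_0 n$, hence $R\le T:=\sum_{i=1}^{\lceil c_0 n\rceil}G_i$. Because $T>4c_0 n$ holds exactly when fewer than $\lceil c_0 n\rceil$ of the first $4c_0 n$ verification coins are positive, the multiplicative Chernoff bound gives $\Pr(T>4c_0 n)\le\Pr\bigl(\mathrm{Bin}(4c_0 n,\tfrac12)\le c_0 n\bigr)\le e^{-c_0 n/4}$, which is below $\frac12 n^{-a}$ for $n$ large; moreover $T$ is independent of $\cG$. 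A union bound over the complement of $\cG$ and the event $\{T>4c_0 n\}$ then yields that, with probability at least $1-n^{-a}$, the walk with verifications terminates within $b\ln n$ stages while performing $\cO(n)$ moves in total, which is the claim.

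The only point requiring care rather than routine computation is the bookkeeping in the coupling: one must arrange it so that the ``slow'' quantity $R$ lives on the verification coins while the ``fast'' quantities $S$, $M$ and the event $\cG$ live on the direction coins, which is exactly what legitimizes conditioning on $\cG$ and treating $T$ as independent of it in the last step; it also relies on a stage being defined by its genuine moves rather than by its rounds. The geometric-sum Chernoff estimate and the verification of the numerical thresholds for large $n$ are standard.
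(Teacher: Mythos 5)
Your proof is correct, and it takes a genuinely different route from the paper's. The paper does not use Lemma~\ref{lem:random-c-walk} as a black box: it re-enters that lemma's proof and adapts it stage by stage, observing that a stage begun at position $k$ and run for $3k$ rounds accrues at least $k$ positive verifications with probability exponentially close to $1$ in $k$ (hence polynomially in $n$, since the proof of Lemma~\ref{lem:random-c-walk} only ever deals with $k=\Omega(\log n)$), so that each stage's drift analysis survives essentially verbatim. You instead couple the walk with verifications to the ordinary walk so that its genuine moves \emph{are} an ordinary random $\beta$-walk, invoke Lemma~\ref{lem:random-c-walk} wholesale for the stage count and the genuine-move count, and then control the total round count globally by a single Chernoff bound on a sum of $\lceil c_0 n\rceil$ geometric variables, using the independence of the direction coins and the verification coins to justify the final union bound. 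Your version is more modular and avoids the paper's somewhat hand-wavy ``can be adapted almost verbatim''; what the paper's per-stage adaptation buys is robustness to the other reading of ``stage'' (a stage measured in rounds rather than in genuine moves), whereas your argument, as you correctly flag, commits to stages being delimited by genuine moves --- a reading that is consistent with how the walk is used to bound the $\beta$-process in Lemma~\ref{lem:ball-process-with-verifications-terminates}, so this is a legitimate choice rather than a gap.
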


\begin{proof}
We provide an extension of the proof of Lemma~\ref{lem:random-c-walk}, which states a similar property of regular random $\beta$-walks. 
That proof estimated times of stages and the number of moves.
Suppose the regular random $\beta$-walk starts at a position $k$, so that the stage takes $k$ moves.
There is a constant $d<1$ such that the walk ends at a position at most $dk$ with probability exponential in~$k$.

Moreover, the proof of Lemma~\ref{lem:random-c-walk} is such that all the values of $k$ considered are at least logarithmic in~$n$, which provides at most a polynomial bound on error.
A random walk with verifications is slowed down by negative verifications.
Observe that a random walk with verifications that is performed $3k$ times undergoes at least $k$ positive verifications with probability exponential in~$k$ by the Chernoff bound~\cite{MitzenmacherUpfal-book05}.
This means that the proof of Lemma~\ref{lem:random-c-walk} can be adapted to the case of random walks with verifications almost verbatim, with the modifications contributed by polynomial bounds on error of estimates of the number of positive verifications in stages.
\end{proof}

Next, we consider a \emph{$\beta$-process with verifications}, which is  defined as follows.
The process proceeds through stages, similarly as the regular ball process.
The first stage starts with placing $n$ balls into $(\beta+1)n$ bins.
For any following stage, we first go through multiple bins and, for each ball in such a bin, we perform a Bernoulli trial with  the probability~$\frac{1}{2}$ of success, which we call a \emph{verification}.
A success in a trial is referred to as a \emph{positive verification} otherwise it is a \emph{negative} one.
If at least one positive verification occurs for a ball in a multiple bin then all the balls in this bin are relocated in this stage to bins selected uniformly at random and independently for each such a ball, otherwise the balls stay put in this bin until the next stage.
The $\beta$-process terminates when all the balls are singleton.


\begin{lemma}
\label{lem:ball-process-with-verifications-terminates}

For any numbers $a>0$ and $\beta> 1$, there exists $b>0$ such that the $\beta$-process with verifications  terminates within $b \ln n$ stages with all of them comprising the total of $\cO(n)$ ball throws with probability at least~$1-n^{-a}$.
\end{lemma}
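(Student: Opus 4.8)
The plan is to mirror the proof of Lemma~\ref{lem:when-ball-process-terminates}, which combined the mimicking-walk domination of Lemma~\ref{lem:mimicking-walk} with the quantitative bound of Lemma~\ref{lem:random-c-walk}; here the roles will be played by a mimicking walk \emph{with verifications} and by Lemma~\ref{lem:random-c-walk-with-verifications}. First I would introduce a \emph{mimicking walk with verifications} attached to an execution of the $\beta$-process with verifications, defined exactly as the mimicking walk of Lemma~\ref{lem:mimicking-walk} except that, when a stage processes a multiple bin, the walk first replays the verifications of that bin: if all of them are negative then the bin is passive, its balls stay put, and the walk makes no move for them; if at least one is positive then the bin is active and its balls are relocated one by one, each relocation incrementing or decrementing the walk's position according to whether the destination bin is occupied or empty, exactly as before.

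Next I would prove the analogue of Lemma~\ref{lem:mimicking-walk}: if a stage of the mimicking walk with verifications ends at position $k$, then the corresponding stage of the $\beta$-process with verifications ends with at most $k$ balls eligible for relocation. The case analysis is the same three cases as in Lemma~\ref{lem:mimicking-walk} for the active bins, augmented by a fourth, trivial case for the passive bins: the balls of a passive bin are carried over to the next stage with no change, and since they were already counted by the position at which the stage began, the position remains a valid upper bound on the next stage's eligible count. Monotonicity in $k$ is preserved, since pausing only helps.

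The crux is then to show that the mimicking walk with verifications is conservatively estimated from above, in its number of stages and in its total number of moves, by the random $\beta$-walk with verifications of Lemma~\ref{lem:random-c-walk-with-verifications}. The two ingredients are, first, that a multiple bin carrying $j\ge 2$ balls is passive with probability $2^{-j}\le\frac{1}{4}$, which is at most the probability $\frac{1}{2}$ with which a single move of the random $\beta$-walk with verifications is suppressed, so the ball process resolves collisions at least as frequently as the random walk advances; and second, that at every moment of the $\beta$-process there are at most $n$ occupied bins and at least $\beta n$ empty bins, so that whenever a ball is relocated it lands in an empty bin with probability at least $\frac{\beta}{\beta+1}$ and in an occupied bin with probability at most $\frac{1}{\beta+1}$, precisely the increment/decrement probabilities of the random $\beta$-walk. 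Granting these, the argument runs as in the passage from Lemma~\ref{lem:random-c-walk} to Lemma~\ref{lem:random-c-walk-with-verifications}: the drift toward $0$ per stage is at least that of the random $\beta$-walk with verifications, the values of $k$ encountered are $\Omega(\log n)$ until the very end so each stage contributes polynomially small error, and a union bound over the $\cO(\log n)$ stages yields the $b\ln n$ bound on the number of stages; the total number of ball throws equals the total number of moves made by the walk, which Lemma~\ref{lem:random-c-walk-with-verifications} bounds by $\cO(n)$.

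I expect the main obstacle to be making the domination in the last paragraph fully rigorous: the verification mechanism of the $\beta$-process freezes all the balls of a passive bin simultaneously, a positive correlation that has no counterpart in the random $\beta$-walk with verifications, whose moves are suppressed independently. One has to check that this correlated freezing does not slow the process relative to the walk --- intuitively it cannot, since a bin of $j\ge 2$ balls freezes with probability $2^{-j}$, which only shrinks as $j$ grows, so the larger blocks of balls are the ones least likely to stall --- and the cleanest way to formalize this is probably to couple each stage of the $\beta$-process with verifications to a run of the random $\beta$-walk with verifications that is at least as long, processing the balls of each bin as a block and using $2^{-j}\le 2^{-2}$ to dominate a per-bin stall by a per-move stall, rather than re-deriving the Chernoff estimates of Lemma~\ref{lem:random-c-walk} from scratch.
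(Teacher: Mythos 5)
Your proposal follows essentially the same route as the paper's own proof: the paper likewise introduces a mimicking walk with verifications, observes that a single positive verification among the trials for a multiple bin suffices to relocate it (so the per-move stall probability $\tfrac12$ of the random $\beta$-walk with verifications dominates the per-bin stall probability $2^{-j}$), and combines Lemma~\ref{lem:mimicking-walk} with Lemma~\ref{lem:random-c-walk-with-verifications} exactly as Lemma~\ref{lem:when-ball-process-terminates} combined it with Lemma~\ref{lem:random-c-walk}. Your write-up is in fact more explicit than the paper's --- in particular about the correlated freezing of all balls in a passive bin, which the paper passes over silently --- but the decomposition and the key domination argument are the same.
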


\begin{proof}
The argument proceeds by combining Lemma~\ref{lem:mimicking-walk}  with Lemma~\ref{lem:random-c-walk-with-verifications}, similarly as the proof of Lemma~\ref{lem:when-ball-process-terminates} is proved by combining Lemma~\ref{lem:mimicking-walk}  with Lemma~\ref{lem:random-c-walk}.
The details follow.

For any execution of a ball process with verifications, we consider a ``mimicking random walk,'' also with verifications, defined such that when a ball from a multiple bin is handled then the outcome of a random verification for this ball is mapped on a verification for the corresponding random walk.
Observe that for a $\beta$-process with verifications just one positive verification is sufficient among $j-1$ trials when there are $j>1$ balls in a multiple bin, so a random $\beta$-walk with verifications provides an upper bound on time of termination of the $\beta$-process with verifications.
The  probabilities of decrementing and incrementing position in the random $\beta$-walk with verifications are such that they reflect the probabilities of landing in an empty bin or in an occupied bin, similarly as without verifications. 
All this give a consistency of a $\beta$-walk with verifications with Lemma~\ref{lem:mimicking-walk} in providing estimates on the time of termination of the $\beta$-process from above.
\end{proof}

Next we summarize the performance of algorithm \textsc{Common-\-Unbounded-LV} as  Las Vegas one.
The proof is based on mapping executions of the $\beta$-processes with verifications on executions of algorithm \textsc{Common-\-Unbounded-LV} in a natural manner.


\begin{theorem}
\label{thm:common-unbounded-LV}

Algorithm \textsc{Common-\-Unbounded-LV} terminates almost surely and when the algorithm terminates then there is no error.
For each $a>0$ and any $\beta>1$ in the pseudocode, there exists $c>0$ such that the algorithm assigns proper names within time $c\lg n$ and using at most $c n\lg n$ random bits with probability  at least $1-n^{-a}$.
\end{theorem}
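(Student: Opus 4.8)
The plan is to handle correctness upon termination first, then the high-probability time and random-bit bounds by reducing the algorithm to the ball/walk processes analyzed earlier in this section, and finally almost-sure termination.

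\textbf{Correctness.} The outer \texttt{repeat}-loop exits exactly when \texttt{number-occupied-bins} equals~$n$. This variable counts the distinct values currently held by the private variables $\texttt{bin}_v$ over the $n$ processors, so equality with~$n$ forces these values to be pairwise distinct; then every occupied bin is a singleton, the ranks of the $\texttt{bin}_v$ among nonempty bins are precisely the integers $1,\dots,n$, and distinct processors get distinct ranks. Hence the assigned names are correct, which settles the ``no error upon termination'' clause.

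\textbf{Reduction to the $\beta$-process with verifications.} I would map an execution onto the $\beta$-process with verifications: the initial draw of $\texttt{bin}_v$ is the first placement of $n$ balls into $(\beta+1)n$ bins, and each later re-draw, caused by \textsc{Verify-Collision} returning \texttt{true}, is the relocation of a ball out of a multiple bin. The only delicate point is the probability with which a multiple bin is vacated. By Lemma~\ref{lem:verify-collision} all processors that selected a common bin~$x$ receive the \emph{same} answer from \textsc{Verify-Collision}$(x)$, and this answer is \texttt{true} --- so all of them re-draw together --- with probability $1-2^{-m+1}$ when $m\ge2$ processors share~$x$. Since $1-2^{-m+1}\ge\tfrac12$ for every $m\ge2$, the mimicking-walk comparison of Lemma~\ref{lem:mimicking-walk} applies, and the random $\beta$-walk with verifications (step probability exactly $\tfrac12$, hence the most pessimistic) dominates both the number of stages and the total number of ball throws, exactly as in the proof of Lemma~\ref{lem:ball-process-with-verifications-terminates}. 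Thus, given $a>0$ and $\beta>1$, that lemma (invoked with $a$ replaced by $2a$) yields $b>0$ such that, with probability at least $1-n^{-2a}$, the all-singleton configuration is reached within $b\ln n$ stages and after $\cO(n)$ ball throws in total.

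\textbf{Translation to time and random bits.} Each stage is one pass through the body of the inner \texttt{for}-loop: one $\cO(1)$-time call to \textsc{Verify-Collision} followed by at most one re-draw. The evaluation of \texttt{number-occupied-bins} and the final rank computation are prefix-type computations costing $\cO(\log n)$ rounds each; since the all-singleton configuration is absorbing (no ball is ever relocated once every bin is a singleton), the count is executed only $\lceil (b\ln n)/\lg n\rceil=\cO(1)$ times before termination is detected, contributing $\cO(\log n)$ rounds overall. Hence the running time is $\cO(\log n)$, i.e.\ at most $c\lg n$ for a suitable $c$. For the random bits: each ball throw draws $\lceil\lg((\beta+1)n)\rceil=\cO(\log n)$ bits and there are $\cO(n)$ throws, for $\cO(n\log n)$ bits; each \textsc{Verify-Collision} call costs one bit per participating processor, and over $b\ln n$ stages with at most $n$ processors this is again $\cO(n\log n)$ bits. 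Enlarging $c$ absorbs the constants, and $1-n^{-2a}\ge1-n^{-a}$, giving the stated probability. For almost-sure termination I would note that at the start of any stage at most $n$ balls are active, so the above tail bound can be reapplied from any stage; the process fails to reach the all-singleton configuration within $ib\ln n$ stages with probability at most $n^{-2ia}\to0$, so the algorithm terminates with probability~$1$, and by the first step it is then correct. The main obstacle is the second step: making precise that the all-or-nothing behaviour of \textsc{Verify-Collision} together with its $\ge\tfrac12$ detection probability is exactly what is needed to import the random-walk estimates, together with the minor but fiddly amortization of the $\cO(\log n)$-time prefix computations against the $\lg n$ verification steps separating consecutive counts.
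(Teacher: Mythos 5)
Your proposal is correct and follows essentially the same route as the paper: correctness from the until-condition forcing $n$ distinct bins, the time and random-bit bounds by mapping the execution onto the $\beta$-process with verifications and invoking Lemma~\ref{lem:ball-process-with-verifications-terminates}, and almost-sure termination from the geometric tail. Your treatment is somewhat more explicit than the paper's about why the all-or-nothing, probability-$\ge\frac{1}{2}$ semantics of \textsc{Verify-Collision} justifies the reduction, and about charging the coin tosses of \textsc{Verify-Collision} to the random-bit budget, but these are refinements of the same argument rather than a different one.
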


\begin{proof}
The algorithm terminates when there are $n$ different ranks, by the condition controlling the repeat-loop.
As ranks are distinct and each in the interval $[1,n]$, each name is unique, so there is no error.
The repeat-loop is executed $\cO(1)$ times with probability at least $1-n^{-a}$, by Lemma~\ref{lem:ball-process-with-verifications-terminates}.
The repeat-loop is performed $i$ times with probability that is at most $n^{-ia}$, so it converges to $0$ with~$i$ increasing.
It follows that the algorithm terminates almost surely.

An iteration of the repeat-loop  in Figure~\ref{alg:common-unbounded-LV} takes $\cO(\log n)$ steps.
This is because of the following two facts.
First, it consists of $\lg n$ iterations of the for-loop, each taking $\cO(1)$ rounds.
Second, it concludes with verifying the until-condition, which is carried out by counting nonempty bins by a prefix-type computation.
It follows that time until termination is $\cO(\log n)$ with probability  $1-n^{-a}$.

By Lemma~\ref{lem:ball-process-with-verifications-terminates}, the total number of ball throws is $\cO(n)$ with probability  $1-n^{-a}$.
Each placement of a ball requires $\cO(\log n)$ random bits, so the number of used random bits is $\cO(n\log n)$ with the same probability.
\end{proof}

Algorithm \textsc{Common-Unbounded-LV} is optimal among Las Vegas naming algorithms with respect to the following performance measures: the expected time $\cO(\log n)$, by Theorem~\ref{thm:log-n-lower-bound}, the number of shared memory cells $\cO(n)$ used to achieve this running time, by Corollary~\ref{cor:four-optima}, and the expected number of random bits  $\cO(n\log n)$, by Proposition~\ref{pro:lower-bound-on-random-bits}.

\section{Monte Carlo for Arbitrary with Bounded Memory}

\label{sec:arbitrary-bounded-MC}

We develop a Monte Carlo naming  algorithm for an Arbitrary PRAM with a constant number of shared memory cells,  when the number of processors $n$ is unknown.
The algorithm is called \textsc{Arbitrary-Bounded-MC} and its pseudocode  is given in Figure~\ref{alg:arbitrary-bounded-MC}.


\begin{figure}[t]
\rule{\textwidth}{0.75pt}

\F 
\textbf{Algorithm} \textsc{Arbitrary-Bounded-MC} 

\rule{\textwidth}{0.75pt}
\begin{center}
\begin{minipage}{\pagewidth}
\begin{description}
\item[\rm initialize] $k\gets 1$ \hfill /$\ast$ \ initial approximation of $\lg n$ \ $\ast$/
\B
\item[\tt repeat] \ 
\B
\begin{description}
\item[\rm initialize] $\texttt{Last-Name} \gets \texttt{name}_v \gets 0$
\item[$k$] $\gets 2k$
\item[$\texttt{bin}_v \gets$] random integer in $[1,2^k]$
\hfill /$\ast$ \ throw a ball into a bin \ $\ast$/
\item[\tt repeat] \ 
\begin{description}
\item[\tt All-Named]$\gets$ \texttt{true}
\item[\tt if] $\texttt{name}_v = 0$ \texttt{then} 
\begin{description}
\item[\tt Pad] $\gets \texttt{bin}_v$
\item[\tt if] $\texttt{Pad}=\texttt{bin}_v$ \texttt{then}
\begin{description}
\item[\tt Last-Name]$\gets \texttt{Last-Name}+1$  
\item[\tt name$_v$]$\gets \texttt{Last-Name}$
\end{description}
\item[\tt else] \ 
\begin{description}
\item[\texttt{All-Named}] $\gets\texttt{false}$
\end{description}
\end{description}
\end{description}

\item[\tt until All-Named]\  

\end{description}
\item[\tt until] $\texttt{Last-Name} \le 2^{k/\beta}$
\end{description}
\end{minipage}
\FFF

\rule{\textwidth}{0.75pt}

\parbox{\captionwidth}{\caption{\label{alg:arbitrary-bounded-MC}
A pseudocode for a processor $v$ of an Arbitrary PRAM with a constant number of shared memory cells.
The variables \texttt{Last-Name}, \texttt{All-Named} and \texttt{Pad} are shared.
The private variable \texttt{name} stores the acquired name.
The constant $\beta>0$ is a parameter to be determined by analysis.
}}
\end{center}
\end{figure}

The underlying idea is to have all processors repeatedly attempt to obtain tentative names and terminate when the probability of duplicate names is gauged to be sufficiently small. 
To this end, each processor writes an integer selected from a suitable ``selection range'' into a shared memory register and next reads this register to verify whether the write was successful or not.
A successful write results in each such a processor getting a tentative name by reading and incrementing another shared register operating as a counter.
One of the challenges here is to determine a selection range from which random integers are chosen for writing.
A good selection range is large enough with respect to the number of writers, which is unknown, because when the range is too small then multiple processors may select the same integer and so all of them get the same tentative name after this integer gets written successfully.
The algorithm keeps the size of a selection range growing with each failed attempt to assign tentative names.

There is an inherent tradeoff here, since on the one hand, we want to keep the size of used shared memory small, as a measure of efficiency of the algorithm, while, at the same time, the larger the range of memory the smaller the probability of collision of random selections from a selection range and so of the resulting duplicate names.
Additionally, increasing the selection range repeatedly costs time for each such a repetition, while we also want to minimize the running time as the metric of performance.
The algorithm keeps increasing the selection range with a quadratic rate, which turns out to be sufficient to optimize all the performance metrics we measure.
The algorithm terminates when the number of selected integers from the current selection range makes a sufficiently small fraction of the size of the used range.

The  structure of the pseudocode in Figure~\ref{alg:arbitrary-bounded-MC} is determined by the main repeat-loop.
Each iteration of this loop begins with doubling the variable~$k$, which determines the selection  range $[1,2^k]$.
This means that the size of the selection range increases quadratically with consecutive iterations of the main repeat-loop.
A processor begins an iteration of the main loop by choosing an integer uniformly at random from the current selection range  $[1,2^k]$.
There is an inner repeat-loop, nested within the main loop, which assigns tentative names depending on the random selections just made.

All processors repeatedly write to a shared variable \texttt{Pad} and next read to verify if the write was successful. 
It is possible that different processors attempt to write the same value and then verify that their write was successful.
The shared variable \texttt{Last-Name} is used to proceed through consecutive integers to provide tentative names to be assigned to the latest successful writers.  
When multiple processors attempt to write the same value to \texttt{Pad} and it gets written successfully, then all of them obtain the same tentative name.
The variable \texttt{Last-Name}, at the end of each iteration of the inner repeat-loop, equals the number of occupied bins.
The shared variable \texttt{All-Named} is  used to verify if all processors have tentative names.
The outer loop terminates when the number of assigned names, which is the same as the number of occupied bins, is smaller than or equal to~$2^{k/\beta}$, where $\beta>0$ is a parameter to be determined in analysis.

\Paragraph{Balls into bins.}

We consider the following auxiliary \emph{$\beta$-process} of throwing balls into bins, for a parameter $\beta>0$.
The process proceeds through stages identified by consecutive positive integers.
The $i$th stage has the number parameter~$k$ equal to $k=2^i$ .
During a stage, we first throw $n$ balls into the corresponding $2^k$ bins and next count the number of occupied bins.
A stage is last in an execution of the $\beta$-process, and so the $\beta$-process terminates, when the number of occupied bins  is smaller  than or equal to~$2^{k/\beta}$.

We may observe that the $\beta$-process always terminates.
This is because, by its specification, the $\beta$-process terminates by the first stage in which the inequality $n\le 2^{k/\beta}$ holds, where $n$ is an upper bound on the number of occupied bins in a stage. 
The inequality $n\le 2^{k/\beta}$ is equivalent to $n^\beta\le 2^{k}$ and so to $\beta  \lg n \le k$.
Since $k$ goes through consecutive powers of~$2$, we obtain that the number of stages of the $\beta$-process with $n$ balls is at most $\lg(\beta \lg n)=\lg\beta + \lg\lg n$.

We say that such a $\beta$-process is \emph{correct} when upon termination each ball is in a separate bin, otherwise the process is \emph{incorrect}.


\begin{lemma}
\label{lem:balls-k-over-beta-first}

For any $a>0$ there exists $\beta>0$ such that the $\beta$-process is incorrect with probability that is at most $n^{-a}$, for sufficiently large~$n$.
\end{lemma}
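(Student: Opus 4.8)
The statement to prove is Lemma~\ref{lem:balls-k-over-beta-first}: for any $a>0$ there is $\beta>0$ so that the $\beta$-process is incorrect with probability at most $n^{-a}$ for large~$n$. The approach is to note that the $\beta$-process always terminates by the stage where $k$ first satisfies $\beta\lg n\le k$, and that once it terminates, the only way to be incorrect is if two of the $n$ balls collided in the final stage. So the plan is to union-bound the failure probability over all stages, and within each stage bound the probability that the process both \emph{terminates at that stage} and \emph{has a collision}. The key observation, which makes this work, is that termination at a stage with parameter $k$ forces the number of occupied bins to be at most $2^{k/\beta}$; if there were a collision among $n$ balls then the number of occupied bins is at most $n-1$, but more usefully, a collision plus few occupied bins is unlikely when $2^k$ is large relative to $n$.

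First I would fix the relevant stages. By the discussion preceding the lemma, the $\beta$-process has at most $\lg\beta + \lg\lg n$ stages, and it is guaranteed to terminate once $2^k \ge n^\beta$. So I only need to worry about stages with $2^k < n^\beta$, i.e.\ $k < \beta\lg n$; call these the ``dangerous'' stages, and there are $O(\log\log n)$ of them. For a dangerous stage with parameter $k$, I bound $\Pr(\text{terminate at this stage and a collision occurred})$. Here there are two regimes depending on how $2^k$ compares to $n$. If $2^k$ is small compared to $n$ (say $2^k \le n^{1/2}$, or more simply $2^k$ much smaller than $n$), then with overwhelming probability \emph{every} bin is occupied, so the number of occupied bins is close to $2^k$, which is much larger than $2^{k/\beta}$ for $\beta$ large — hence the termination condition $\#\text{occupied}\le 2^{k/\beta}$ fails with overwhelming probability, so the process does not terminate here; I bound this ``bad termination'' probability by a Chernoff/occupancy estimate. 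If $2^k$ is large compared to $n$, then I directly bound the probability of a collision: throwing $n$ balls into $2^k$ bins, the collision probability is at most $\binom{n}{2}2^{-k} \le n^2 2^{-k}$, which is at most $n^{2-\beta'}$ for some $\beta' $ tied to how large $2^k$ is; making this $\le n^{-a}/(\text{number of stages})$ requires $2^k$ to be at least, roughly, $n^{a+3}$ or so. The crossover point — the value of $k$ where $2^k$ transitions from ``too small, so process won't stop'' to ``large enough, so collisions are rare'' — is where the two estimates must be made to overlap, and choosing $\beta$ large enough makes the ``process won't stop'' regime extend up through $k$ with $2^{k/\beta}$ still dominating $n$, i.e.\ roughly $k$ up to $\beta\lg n$, which is exactly the guaranteed-termination threshold, leaving only the safe large-$2^k$ stages as real possibilities, and those are safe by the collision bound.

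More carefully, I expect the clean argument is: a collision among the $n$ balls in a stage with parameter $k$ occurs with probability at most $\binom{n}{2}/2^k \le n^2/2^k$. The process, if correct up through stage $i-1$, either terminates at stage $i$ or not; summing, the probability that the process is \emph{incorrect} is at most $\sum_{k} \Pr(\text{collision in the stage with parameter }k \text{ and that stage is where termination is declared})$. For a collision to go undetected and be final, termination must be declared in a stage that had a collision; but in a stage with a collision the number of occupied bins is at most $n-1$, which is $\le 2^{k/\beta}$ only if $n \le 2^{k/\beta}+1$, i.e.\ essentially $k \ge \beta\lg(n-1)$, i.e.\ $2^k \gtrsim n^\beta$. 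Therefore a \emph{final} collision can only happen in a stage with $2^k \gtrsim n^\beta$, and in such a stage the collision probability is at most $n^2/2^k \le n^{2-\beta+o(1)} = n^{-(\beta-2)+o(1)}$. Multiplying by the $O(\log\log n)$ stages still leaves $n^{-(\beta-2)+o(1)}$, so taking $\beta \ge a+3$ (to absorb the $\log\log n$ factor and lower-order slack) gives the bound $n^{-a}$ for sufficiently large~$n$. This is the cleanest route and I would write it this way rather than splitting into regimes.

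The main obstacle I anticipate is the bookkeeping around the termination condition: making precise that ``if there is a collision then the number of occupied bins is at most $n-1$'' combined with ``termination requires $\#\text{occupied} \le 2^{k/\beta}$'' really does force $2^k$ to be polynomially large in $n$ (one must be a little careful with the $+1$ and with $\lg(n-1)$ versus $\lg n$, but this is harmless for large $n$). After that, the collision bound $\binom{n}{2}2^{-k}$ is immediate from the union bound over pairs of balls, and the only remaining check is that the number of stages, $O(\log\log n)$, is absorbed by increasing $\beta$ by a constant; since $\log\log n = n^{o(1)}$, taking $\beta = a+3$ suffices. No delicate concentration inequality is actually needed if the argument is organized around "a final collision forces large $2^k$"; the Chernoff-style occupancy estimate is only needed if one insists on the two-regime version, which I would avoid.
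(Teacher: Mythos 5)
There is a genuine gap in the route you say you would actually write up. Your ``clean argument'' rests on the claim that if the process terminates in a stage that has a collision, then $n-1\le 2^{k/\beta}$, hence $2^k\gtrsim n^\beta$. That implication is false: in a stage with a collision the number of occupied bins is \emph{at most} $n-1$, not equal to $n-1$, and it can be far smaller (in the extreme, all $n$ balls land in one bin and the occupied count is $1$). A smaller occupied count makes the termination test $\#\mathrm{occupied}\le 2^{k/\beta}$ \emph{easier} to pass, so ``terminate with a collision'' does not deterministically force $2^k$ to be large. The event you must control in the middle range of $k$ (where $2^{k/\beta}\ll n$ but $2^k$ is not yet large enough for the pairwise collision bound $\binom{n}{2}2^{-k}$ to be $n^{-a}$) is exactly ``the $n$ balls occupy at most $2^{k/\beta}$ bins,'' and this genuinely requires a concentration/occupancy estimate. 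The paper does this by bounding that probability by $\binom{2^k}{2^{k/\beta}}\bigl(2^{k/\beta}/2^k\bigr)^n\le e^{-\Omega(n)}$ for all $k\le\beta\lg(\delta n)$, and then choosing $\beta>2+a$ so that this range overlaps the range $k\ge(2+a)\lg n$ where the collision bound takes over. Your explicit assertion that ``no delicate concentration inequality is actually needed'' is precisely where the proof breaks.

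Your fallback two-regime sketch is structurally the same as the paper's proof (small $k$: termination unlikely; large $k$: collision unlikely; tune $\beta$ so the regimes overlap), and the large-$k$ half ($n^2 2^{-k}\le n^{-a}$ once $k\ge(2+a)\lg n$, with $\beta\approx a+3$ absorbing the $O(\log\log n)$ stages) is fine. But the small-$k$ half of your sketch only argues ``every bin is occupied, so the count is near $2^k$,'' which covers $2^k\ll n$ and not the critical middle regime $n\lesssim 2^k\lesssim n^{2+a}$, where the occupied count concentrates near $n$ rather than near $2^k$ and one must show it is very unlikely to fall below $2^{k/\beta}$. Since you explicitly decline to carry out that version, the proposal as written does not establish the lemma.
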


\begin{proof}
The $\beta$-process is incorrect when  there are collisions after the last stage.  
The probability of the intersection of the events ``$\beta$-process terminates'' and ``there are collisions'' is bounded from above by the probability of each of these two events.
Next we show that, for each pair of $k$ and $n$, some of these two events occurs with probability that is at most $n^{-a}$, for a suitable~$\beta$. 

First, we consider the event that the $\beta$-process terminates.
The probability that there are at most $2^{k/\beta}$ occupied bins is at most
\begin{eqnarray}
\binom{2^k}{2^{k/\beta}}\Bigl( \frac{2^{k/\beta}}{2^k} \Bigr)^n
\nonumber
&\le&
\Bigl( \frac{e 2^k}{2^{k/\beta}} \Bigr)^{2^{k/\beta}} 2^{k(\beta^{-1}-1)n}\\
&\le&
\nonumber
e^{2^{k/\beta}} \cdot 2^{k(1-\beta^{-1})2^{k/\beta}} \cdot 2^{k(\beta^{-1}-1)n}\\
&\le&
\label{eqn:before-logarithm}
e^{2^{k/\beta}} \cdot 2^{k(\beta^{-1}-1)(n-2^{k/\beta})}
\ .
\end{eqnarray}
We estimate from above the natural logarithm of the right-hand side of~\eqref{eqn:before-logarithm}.
We obtain the following upper bound:
\begin{eqnarray}
2^{k/\beta} + k(\beta^{-1}-1)(n-2^{k/\beta})\ln 2
\nonumber
&<&
2^{k/\beta} -\frac{1}{2}(n-2^{k/\beta})\ln 2\\
\nonumber
&=&
2^{k/\beta} -\frac{\ln 2}{2}n + \frac{\ln 2}{2} 2^{k/\beta}\\
&=&
\label{eqn:bound-on-logarithm}
-\frac{\ln 2}{2}n + 2^{k/\beta} \cdot \frac{2+\ln 2}{2} 
\ ,
\end{eqnarray}
for $\beta >4/3$, as $k\ge 2$.
The estimate~\eqref{eqn:bound-on-logarithm} is at most $-n\cdot \frac{\ln 2}{4}$ when $2^{k/\beta}\le n\cdot \delta$, for $\delta=\frac{\ln 2}{2(2+\ln 2)}$, by a direct algebraic verification.
These restrictions on $k$ and $\beta$ can be restated as 
\begin{equation}
\label{eqn:first-condition-on-k}
k\le \beta\lg(n\delta) \text{ and }\beta >4/3
\ .
\end{equation}
When this condition~\eqref{eqn:first-condition-on-k} is satisfied, then the probability of at most $2^{k/\beta}$ occupied bins is at most
\[
\exp\Bigl(-n\cdot \frac{\ln 2}{4}\Bigr)\le n^{-a}
\]
for sufficiently large~$n$.

Next, let us consider the probability of collisions occurring.
Collisions do not occur with probability that is at least
\[
\Bigl(1-\frac{n}{2^k}\Bigr)^n \ge 1-\frac{n^2}{2^k}
\ ,
\]
by the Bernoulli's inequality.
It follows that the probability of collisions occurring can be bounded from above by $\frac{n^2}{2^k}$.
This bound in turn is at most $n^{-a}$ when 
\begin{equation}
\label{eqn:second-condition-on-k}
k\ge (2+a)\lg n
\ .
\end{equation}

In order to have some of the inequalities~\eqref{eqn:first-condition-on-k} and~\eqref{eqn:second-condition-on-k} hold for any $k$ and $n$, it is sufficient to have
\[
(2+a)\lg n \le \beta\lg(n\delta)
\ .
\]
This determines $\beta$ as follows: 
\[
\beta\ge \frac{(2+a)\lg n }{\lg  n+\lg \delta}\rightarrow 2+a
\ ,
\]
with ${n\rightarrow\infty}$.
We obtain that the inequality $\beta > 2+a$ suffices, for $n$ that is large enough.
\end{proof}


\begin{lemma}
\label{lem:balls-k-over-beta-second}

For each $\beta>0$ there exists $c>0$ such that when the $\beta$-process terminates then the number of bins ever needed is at most $cn$ and the number of random bits ever generated is at most $c n \ln n$.
\end{lemma}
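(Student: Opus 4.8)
The plan is to exploit the extremely slow growth of the number of stages: since the bin parameter doubles from one stage to the next ($k_i=2^i$), the $\beta$-process runs through only $\cO(\log\log n)$ stages, so both claimed bounds reduce to summing a short, rapidly increasing sequence of per-stage costs. The first step is to pin down the index $m$ of the last stage. A throw of $n$ balls occupies at most $n$ bins, so the number of occupied bins in any stage is at most $n$; consequently, as soon as a stage reaches a value $k$ with $2^{k/\beta}\ge n$, i.e.\ $k\ge\beta\lg n$, the termination test is automatically met and the process halts. Hence the stage $m-1$ just before the last one satisfies $k_{m-1}<\beta\lg n$ (otherwise it would already have been the last), and therefore $k_m=2k_{m-1}<2\beta\lg n$, which gives $m\le\lg\lg n+\lg\beta+1$. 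So the process surely terminates, with at most $\lg\lg n+\lg\beta+1$ stages, and $k_i\le k_m<2\beta\lg n$ throughout.

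For the random bits: in stage $i$ every one of the $n$ balls is placed uniformly into one of $2^{k_i}$ bins, costing $\lg(2^{k_i})=k_i=2^i$ random bits, so stage $i$ generates $n\cdot 2^i$ bits and the total number of bits ever generated is
\[
\sum_{i=1}^{m} n\,2^{i}=n\,(2^{m+1}-2)\le 2n\cdot 2^{m}=2n\,k_m<4\beta\,n\lg n=\frac{4\beta}{\ln 2}\,n\ln n ,
\]
so any $c\ge 4\beta/\ln 2$ handles this part.

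For the number of bins ever needed I would split the stages according to whether the bin count has overtaken $n$: let $j$ be the largest stage index with $2^{k_j}\le n$. For $i\le j$, stage $i$ involves $2^{k_i}\le n$ bins, and since $k_i=2k_{i-1}$ gives $2^{k_{i-1}}=\sqrt{2^{k_i}}$, the numbers $2^{k_1},\dots,2^{k_j}$ have consecutive ratios at most $\tfrac12$, so $\sum_{i\le j}2^{k_i}\le 2\cdot 2^{k_j}\le 2n$. For $j<i\le m$ the number of bins that receive a ball is at most $n$, and there are at most $m-j$ such stages; because $2^{k_j}\le n$ forces $k_j=2^{j}\le\lg n$, so $j\ge\lg\lg n-1$ for $n$ large, while $m\le\lg\lg n+\lg\beta+1$, we get $m-j\le\lg\beta+2$, a bound depending only on $\beta$. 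Adding the two contributions, the number of bins ever needed is at most $(\lg\beta+4)\,n$, so the claim holds with, say, $c=\max\{4\beta/\ln 2,\ \lg\beta+4\}$, after enlarging $c$ once more to absorb the finitely many small values of $n$. The step I expect to be the most delicate is the last one: verifying carefully that only $\cO(\log\beta)$ stages ever involve more than $n$ bins — this is exactly where doubling $k$, rather than incrementing it, is essential — and checking that the rounding in $j\ge\lg\lg n-1$ against $m\le\lg\lg n+\lg\beta+1$ leaves a constant that is a function of $\beta$ alone and not of $n$. The termination argument and the random-bit estimate are routine once the stage count $m<\lg(2\beta\lg n)$ is in hand.
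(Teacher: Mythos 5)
Your proposal is correct and follows essentially the same route as the paper's proof: bound the terminal value of $k$ by $\cO(\beta\lg n)$, split the stages at $k=\lg n$, sum the geometrically (in fact quadratically) growing bin counts for the small-$k$ stages, and observe that only $\cO(\log\beta)$ stages have $2^k>n$, each contributing at most $n$ occupied bins and $\cO(\log n)$ random bits per ball. Your version just makes the arithmetic (the stage count $m\le\lg\lg n+\lg\beta+1$ and the explicit constant $c$) more explicit than the paper does.
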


\begin{proof}
The $\beta$-process terminates by the stage in which the inequality $n\le 2^{k/\beta}$ holds, so $k$ gets to be at most $\beta\lg n$.
We partition the range $[2, \beta\lg n]$ of values of~$k$ into two subranges and consider them separately.

First, when $k$ ranges from $2$ to $\lg n$ through the stages, then the numbers of needed bins increase quadratically through the stages, because $k$ is doubled with each transition to the next stage.
This means that the total number of all these bins is $\cO(n)$.
At the same time, the number of random bits increases geometrically through the stages, so the total number of random bits a processor uses is $\cO(\log n)$.

Second, when $k$ ranges from $\lg n$ to $\beta\lg n$, the number of needed bins is at most $n$ in each stage.
There are only $\lg(\beta + 1)$ such stages, so the total number of all these bins is $\lg(\beta + 1)\cdot n$.
At the same time, a processor uses at most $\beta  \lg n$ random bits in each of these stages.
\end{proof}

There is a direct correspondence between iterations of the outer repeat-loop and stages of a $\beta$-process.
The $i$th stage has the number~$k$ equal to the value of $k$ during the $i$th iteration of the outer repeat-loop of algorithm \textsc{Arbitrary-Bounded-MC}, that is, we have $k=2^i$.
We map an execution of the algorithm into a corresponding execution of a $\beta$-process in order to apply Lemmas~\ref{lem:balls-k-over-beta-first} and~\ref{lem:balls-k-over-beta-second} in the proof of the Theorem~\ref{thm:arbitrary-bounded-MC}, which  summarizes the performance of algorithm \textsc{Arbitrary-Bounded-MC} and justifies that it is Monte Carlo.


\begin{theorem}
\label{thm:arbitrary-bounded-MC}

Algorithm \textsc{Arbitrary-Bounded-MC} always terminates, for any $\beta>0$.
For each $a>0$ there exists $\beta>0$ and $c>0$ such that  the algorithm assigns unique names, works in time at most~$cn$, and  uses at most $cn\ln n$ random bits, all this with probability at least $1-n^{-a}$.
\end{theorem}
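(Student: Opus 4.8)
The plan is to couple an execution of \textsc{Arbitrary-Bounded-MC} with an execution of the $\beta$-process defined above and then feed this coupling into Lemmas~\ref{lem:balls-k-over-beta-first} and~\ref{lem:balls-k-over-beta-second}. Under the natural correspondence, the random draw of $\texttt{bin}_v$ in the $i$th iteration of the outer repeat-loop (where $k=2^i$) is a throw of a ball into one of $2^k$ bins, an occupied bin is a nonempty bin, and the value of \texttt{Last-Name} at the end of an inner repeat-loop equals the number of occupied bins. First I would settle termination, which is deterministic and holds for every $\beta>0$: the inner repeat-loop runs for exactly as many rounds as there are occupied bins, because on an Arbitrary PRAM every round in which some processor is still unnamed writes exactly one of the bin values into \texttt{Pad}, so all still-unnamed processors holding that value read it back, take consecutive tentative names, and the pool of unnamed processors loses precisely one bin's worth; after at most $\min(n,2^k)$ such rounds \texttt{All-Named} stays \texttt{true} and the inner loop exits. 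The outer loop ends as soon as $k\ge\beta\lg n$, since then $2^{k/\beta}\ge n\ge\texttt{Last-Name}$, and since $k$ doubles each time this happens within $\lceil\lg(\beta\lg n)\rceil$ iterations; hence the algorithm halts surely.

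For correctness, observe that the algorithm produces a duplicate name exactly when, at the terminating stage, some occupied bin holds two or more balls, i.e.\ exactly when the associated $\beta$-process is incorrect. Given $a>0$, fix the $\beta>0$ furnished by Lemma~\ref{lem:balls-k-over-beta-first} (any $\beta>2+a$ works for large enough $n$), so that the $\beta$-process is incorrect with probability at most $n^{-a}$; therefore the algorithm assigns unique names with probability at least $1-n^{-a}$. The time and random-bit bounds are then deterministic given this $\beta$. Each iteration of the outer loop costs $\cO(1)$ rounds of control plus one inner round per occupied bin, so the total running time is at most a constant times the number of bins ever needed in the $\beta$-process (using that the occupied bins of a stage never outnumber the bins available in that stage) plus $\cO(\log\log n)$ for the at most $\lceil\lg(\beta\lg n)\rceil+\cO(1)$ outer iterations, which is $\cO(n)$ by Lemma~\ref{lem:balls-k-over-beta-second}. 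Similarly each processor spends $k=2^i$ random bits on its selection in stage $i$, so the total number of random bits equals the number of bits ever generated in the $\beta$-process, which is $\cO(n\ln n)$ by the same lemma. Taking $c$ to be the larger of the implied constants, the time and random-bit bounds hold surely, and intersecting with the correctness event yields all three properties simultaneously with probability at least $1-n^{-a}$.

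The part I expect to require the most care is not the tail estimates — those are packaged inside Lemmas~\ref{lem:balls-k-over-beta-first} and~\ref{lem:balls-k-over-beta-second} — but the algorithm-to-$\beta$-process correspondence itself: verifying that on an Arbitrary PRAM the inner loop resolves exactly one occupied bin per round (so it terminates, costs at most the number of occupied bins in rounds, and leaves \texttt{Last-Name} equal to that count), checking that the termination condition $\texttt{Last-Name}\le 2^{k/\beta}$ of the algorithm coincides with the termination condition of the $\beta$-process, and making sure that the $\beta$ supplied by Lemma~\ref{lem:balls-k-over-beta-first} is fixed before Lemma~\ref{lem:balls-k-over-beta-second} (which is valid for every $\beta>0$) is invoked.
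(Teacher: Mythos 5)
Your proposal is correct and follows essentially the same route as the paper: map the execution onto the $\beta$-process, observe that the inner repeat-loop resolves exactly one occupied bin per iteration so that \texttt{Last-Name} counts occupied bins and the two termination conditions coincide, bound the number of outer iterations by $\lg(\beta\lg n)$ for sure termination, and then invoke Lemmas~\ref{lem:balls-k-over-beta-first} and~\ref{lem:balls-k-over-beta-second} for correctness, time, and random bits. Your write-up is in fact somewhat more explicit than the paper's about the Arbitrary-PRAM mechanics of the inner loop and about the order in which $\beta$ and $c$ are fixed.
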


\begin{proof}
The number of stages of the $\beta$-process with $n$ balls is at most $\lg(\beta \lg n)=\lg\beta + \lg\lg n$.
This is also an upper bound on the number of iterations of the main repeat-loop.
We conclude that the algorithm always terminates.

The number of bins available in a stage is an upper bound on the number of bins occupied in this stage.
The number of bins occupied in a stage equals the number of times the inner repeat-loop is iterated, because executing instruction $\texttt{Pad}\gets \texttt{bin}$ eliminates one occupied bin.
It follows that the number of bins ever needed is an upper bound on time of the algorithm.
The number of iterations of the inner repeat-loop is recorded in the variable \texttt{Last-Name}, so the termination condition of the algorithm corresponds to the termination condition of the $\beta$-process.

When the $\beta$-process is correct then this means that the processors obtain distinct names. 
We conclude that  Lemmas~\ref{lem:balls-k-over-beta-first} and~\ref{lem:balls-k-over-beta-second} apply when understood about the behavior of the algorithm.
This implies the following: the names are correct and execution terminates in $\cO(n)$ time while $\cO(n\log n)$ bits are used, all this with probability that is at least $1-n^{-a}$.
\end{proof}

Algorithm \textsc{Arbitrary-Bounded-MC} is optimal with respect to the following performance measures: the expected time $\cO(n)$, by Theorem~\ref{thm:lower-bound-memory-arbitrary-pram}, the expected number of random bits  $\cO(n\log n)$, by Proposition~\ref{pro:lower-bound-on-random-bits},  and the probability of error~$n^{-\cO(1)}$, by Proposition~\ref{pro:probability-of-error}.

\section{Monte Carlo for Arbitrary with Unbounded Memory}

\label{sec:arbitrary-unbounded-MC}

We develop a Monte Carlo naming algorithm for Arbitrary PRAM with an unbounded amount of shared registers, when the number of processors $n$ is unknown.
The algorithm is called \textsc{Arbitrary-Unbounded-MC} and its pseudocode  is given in Figure~\ref{alg:arbitrary-unbounded-MC}.

The underlying idea is to parallelize the process of selection of names applied in Section~\ref{sec:arbitrary-bounded-MC} in algorithm \textsc{Arbitrary-Bounded-MC} so that multiple processes could acquire information in the same round that later would allow them to obtain names.
As algorithm \textsc{Arbitrary-Bounded-MC} used shared registers \texttt{Pad} and \texttt{Last-Name}, the new algorithm uses arrays of shared registers playing similar roles.
The values read-off from \texttt{Last-Name} cannot be uses directly as names, because multiple processors can read the same values, so we need to distinguish between these values to assign names. 
To this end, we assign ranks to processors based on their lexicographic ordering by pairs of numbers determined by \texttt{Pad} and \texttt{Last-Name}.


\begin{figure}[t]
\rule{\textwidth}{0.75pt}

\F 
\textbf{Algorithm} \textsc{Arbitrary-Unbounded-MC} 

\rule{\textwidth}{0.75pt}
\begin{center}
\begin{minipage}{\pagewidth}
\begin{description}
\item[\rm initialize] $k\gets 1$ 
\hfill /$\ast$ \ initial approximation of $\lg n$ \ $\ast$/

\B
\item[\tt repeat] \ 
\B
\begin{description}
\item[\rm initialize] $\texttt{All-Named} \gets \texttt{true}$ 

\item[\rm initialize]  $\texttt{position}_v \gets (0,0)$
\item[$k$] $\gets r(k)$
\item[$\texttt{bin}_v\gets$]  random integer in $[1,2^k/(\beta k)]$  
\hfill /$\ast$ \ choose a bin for the ball \ $\ast$/
\item[$\texttt{label}_v\gets$]  random integer in $[1,2^{\beta k}]$  
\hfill /$\ast$ \ choose a label for the ball \ $\ast$/
\item[\tt for] $i\gets 1$ \texttt{to} $\beta k $ \texttt{do} 
\begin{description}
\item[\tt if] $\texttt{position}_v= (0,0)$ \texttt{then}
\begin{description}
\item[\texttt{Pad}\!]$[\texttt{bin}_v]\gets \texttt{label}_v$
\item[\tt if] $\texttt{Pad}\,[\texttt{bin}_v] = \texttt{label}_v$ \texttt{then}
\begin{description}
\item[\texttt{Last-Name}\!\!]$[\texttt{bin}_v]\gets \texttt{Last-Name}\,[\texttt{bin}_v]+1$  
\item[\texttt{position}$_v$] $\gets (\texttt{bin}_v,\texttt{Last-Name}\,[\texttt{bin}_v])$
\end{description}
\end{description}
\end{description}
\item[\texttt{if}] $\texttt{position}_v= (0,0)$ \texttt{then} 
\begin{description}
\item[\texttt{All-Named}] $\gets \texttt{false}$
\end{description}

\end{description}
\item[\tt until] \texttt{All-Named}  
\B
\item[\tt name$_v$]\!$\gets$  the rank of  \texttt{position}$_v$
\end{description}
\end{minipage}
\FFF

\rule{\textwidth}{0.75pt}

\parbox{\captionwidth}{\caption{\label{alg:arbitrary-unbounded-MC}
A pseudocode for a processor~$v$ of an Arbitrary PRAM, when the number of shared memory cells is unbounded.
The variables \texttt{Pad} and \texttt{Last-Name} are arrays of shared memory cells, the variable \texttt{All-Named} is shared as well.
The private variable \texttt{name} stores the acquired name.
The constant $\beta>0$ and an increasing function $r(k)$ are parameters.
}}
\end{center}
\end{figure}

The pseudocode in Figure~\ref{alg:arbitrary-unbounded-MC} is structured as a repeat-loop.
In the first iteration, the parameter~$k$ equals~$1$, and in subsequent ones is determined by iterations of the increasing integer-valued function~$r(k)$, which is a parameter.
We consider two instantiations of the algorithm, determined by $r(k)=k+1$ and by $r(k)=2k$.
In one iteration of the main repeat-loop, a processor uses two  variables $\texttt{bin}\in[1,2^k/(\beta k)]$ and $\texttt{label}\in [1,2^{\beta k}]$, which are selected independently and uniformly at random from the respective ranges.

We interpret \texttt{bin} as a bin's number and $\texttt{label}$ as a label for a ball.
Processors write their values~$\texttt{label}$ into the respective bin by instruction $\texttt{Pad}\,[\texttt{bin}] \gets \texttt{label}$ and verify what value got written.
After a successful write, a processor increments \texttt{Last-Name}$[\texttt{bin}]$ and assigns the pair $(\texttt{bin},\texttt{Last-Name}\,[\texttt{bin}])$ as its \emph{position}.
This is repeated $\beta k$ times by way of iterating the inner for-loop.
This loop has a specific upper bound $\beta k$ on the number of iterations because we want to ascertain that there are at most $\beta k$  balls in each bin.
The main repeat-loop terminates when all values attempted to be written actually get written.
Then processors assign themselves names according to the ranks of their positions.
The array \texttt{Last-Name} is assumed to be initialized to $0$'s, and in each iteration of the  repeat-loop we use a fresh region of shared memory to allocate this array.

\Paragraph{Balls into bins.}

We consider a related process of placing labeled balls into bins, which is referred to as \emph{$\beta$-process}. 
Such a process proceeds through stages and is parametrized by a function $r(k)$.
In the first stage, we have $k=1$, and given some value of $k$ in a stage, the next stage has this parameter equal to $r(k)$.
In a stage with a given $k$, we place $n$ balls into~$2^k/(\beta k)$ bins, with labels from $[1,2^{\beta k}]$.
The selections of bins and labels are performed independently and uniformly at random.
A stage terminates the $\beta$-process when there are at most $\beta k$ labels of balls in each bin.

\begin{lemma}
\label{lem:beta-process-terminates}

The $\beta$-process always terminates.
\end{lemma}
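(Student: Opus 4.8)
The plan is to bound the number of stages deterministically, so that termination follows no matter how the random placements turn out. The key observation is that the termination test ``at most $\beta k$ labels of balls in each bin'' becomes trivially satisfiable once $k$ is large compared with~$n$, and that the parameter $k$ grows without bound over the stages.

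First I would follow the evolution of $k$ through the stages: with $k_1=1$ and $k_{i+1}=r(k_i)$, the assumption that $r$ is increasing --- and in particular that $r(k)>k$, which holds for both instantiations $r(k)=k+1$ and $r(k)=2k$ considered in the paper --- shows that $k_1<k_2<\cdots$ is a strictly increasing sequence of positive integers, and hence diverges to infinity. Consequently there is a finite stage index $i_0$ for which $\beta k_{i_0}\ge n$.

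Then I would apply the obvious counting bound. In every stage exactly $n$ balls are thrown into the available bins, so every bin receives at most $n$ balls and therefore holds at most $n$ labels of balls. In stage $i_0$ this quantity is at most $\beta k_{i_0}$, so the termination condition is met with certainty in that stage (if it has not already been met in an earlier stage), and the $\beta$-process terminates.

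There is essentially no obstacle here; the only point that needs care is verifying that the sequence of $k$-values is genuinely unbounded, i.e.\ that $r(k)>k$ for the functions $r$ in use, and given that, a stage with $\beta k\ge n$ is eventually reached and the process stops regardless of the outcomes of the random selections.
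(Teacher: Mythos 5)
Your proposal is correct and matches the paper's own argument: the paper likewise observes that $n$ bounds the number of balls (hence labels) in any bin, so the process must stop by the first stage with $n\le\beta k$, which is reached because $r$ is increasing. Your added care in noting that one really needs $r(k)>k$ so that the $k$-values diverge is a small but fair refinement of the same proof.
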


\begin{proof}
The $\beta$-process terminates by a stage in which the inequality $n\le \beta k$ holds,  because $n$ is an upper bound on the number of balls in a bin. 
This always occurs when function $r(k)$ is increasing.
\end{proof}

We expect the $\beta$-process to terminate earlier, as Lemma~\ref{lem:arbitrary-unbounded-one} states.

\begin{lemma}
\label{lem:arbitrary-unbounded-one}

For each $a>0$, if $k\le \lg n - 2$ and $\beta\ge 1+a$ then the probability of halting in the  stage  is smaller than~$n^{-a}$, for sufficiently large~$n$.
\end{lemma}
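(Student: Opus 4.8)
The plan is to reduce ``halting in stage~$k$'' to a clustering event for the $n$ balls and then bound that event by a routine occupancy estimate, sharper than the mere termination guaranteed by Lemma~\ref{lem:beta-process-terminates}.

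In the stage with parameter~$k$ each ball independently picks a uniformly random bin among the $m=2^k/(\beta k)$ bins and a uniformly random label in $[1,2^{\beta k}]$; equivalently, it picks a uniformly random \emph{cell} among the $N:=m\cdot 2^{\beta k}=2^{(1+\beta)k}/(\beta k)$ (bin,\,label) pairs. By the specification of the $\beta$-process, the stage is terminal precisely when every bin contains at most $\beta k$ distinct labels, which corresponds to the inner for-loop of length $\beta k$ draining every bin. If that happens, the number $O$ of occupied cells is at most $m\cdot\beta k=2^k$, and the hypothesis $k\le\lg n-2$ gives $2^k\le n/4$. Hence
\[
\Pr[\text{stage }k\text{ is terminal}]\ \le\ \Pr[O\le 2^k]\ \le\ \Pr[O\le n/4].
\]
Next I would bound $\Pr[O\le s]$ with $s=2^k$ by a union bound over which $s$-element set of cells receives all the balls:
\[
\Pr[O\le s]\ \le\ \binom{N}{s}\Bigl(\frac{s}{N}\Bigr)^{n}\ \le\ \Bigl(\frac{eN}{s}\Bigr)^{s}\Bigl(\frac{s}{N}\Bigr)^{n}\ =\ e^{s}\Bigl(\frac{s}{N}\Bigr)^{n-s}.
\]
Here $s/N=2^k/N=\beta k\,2^{-\beta k}$, which is at most $\tfrac12$ for every $k\ge 2$ (and also for $k=1$ when $\beta\ge 2$), while $n-s\ge 3n/4$ because $s=2^k\le n/4$. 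Substituting, and using $s\le n/4$ once more to bound $e^{s}\le e^{n/4}$,
\[
\Pr[\text{stage }k\text{ is terminal}]\ \le\ e^{\,2^k}\,2^{-3n/4}\ \le\ e^{\,n/4}\,2^{-3n/4}\ =\ \bigl(e^{1/4}2^{-3/4}\bigr)^{n}\ =\ e^{-\Omega(n)},
\]
since $e^{1/4}2^{-3/4}<1$. This is below $n^{-a}$ for all sufficiently large~$n$, and the bound does not depend on~$k$, so it is uniform over all admissible~$k$.

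Two minor technicalities remain. The quantity $m=2^k/(\beta k)$ need not be an integer; replacing it by $\lceil m\rceil$ only enlarges $s$ to $\lceil m\rceil\beta k\le 2^k+\beta k$, and since $\beta k=\cO(\log n)=o(n)$ this perturbs the exponent above by a $(1+o(1))$ factor, which is harmless. Also the inequality $\beta k\,2^{-\beta k}\le\tfrac12$ can fail for the single pair $k=1$, $1<\beta<2$; in that case $N$ is bounded by an absolute constant, $s/N<1$ is a constant strictly below~$1$, and $\Pr[\text{stage }1\text{ is terminal}]\le\binom{N}{2}(2/N)^{n}$ is exponentially small in~$n$ directly. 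The one step that needs real care — and is the crux of the argument — is the first one: recognising that a terminal stage forces $O\le 2^k\le n/4$, i.e.\ that ``halting in stage~$k$'' is the same event as ``at most $\beta k$ distinct labels in every bin''. Once this combinatorial identification is in place, the probabilistic part is the standard occupancy union bound displayed above.
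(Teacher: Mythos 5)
Your proof is correct, and it takes a genuinely different route from the paper's. Both arguments start from the same combinatorial identification -- halting in the stage means every bin sees at most $\beta k$ distinct labels -- but they diverge in how they turn this into a probability bound. The paper localizes: by the pigeonhole principle some bin receives at least $\beta k n/2^k$ balls, and it bounds the probability that these balls use at most $\beta k$ labels via $\binom{2^{\beta k}}{\beta k}\bigl(\beta k/2^{\beta k}\bigr)^{\beta k n/2^k}$, followed by a logarithmic comparison that has to be split into the ranges $k<\sqrt{\lg n}$ and $\sqrt{\lg n}\le k\le \lg n-2$ and that uses the hypothesis $\beta\ge a+1$. You instead globalize: halting forces the $n$ i.i.d.\ uniform (bin, label) choices to occupy at most $m\cdot\beta k=2^k\le n/4$ of the $N=2^{(1+\beta)k}/(\beta k)$ cells, and a single occupancy union bound $\binom{N}{s}(s/N)^n\le e^{s}(s/N)^{n-s}$ with $s/N=\beta k\,2^{-\beta k}\le\tfrac12$ gives $\bigl(e^{1/4}2^{-3/4}\bigr)^{n}=e^{-\Omega(n)}$ uniformly over all admissible $k$. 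What your approach buys is a cleaner computation, a bound that is exponentially small in $n$ rather than merely $n^{-a}$, no case split on $k$, and essentially no dependence on the hypothesis $\beta\ge 1+a$ (you only need $\beta>1$, with the small-$k$, small-$\beta$ corner handled directly); what the paper's approach buys is that it isolates the single ``bottleneck'' bin, which is the quantity that actually drives the failure probability and makes the role of the label range $2^{\beta k}$ explicit. Your handling of the two technicalities (non-integrality of $2^k/(\beta k)$ and the $k=1$ corner) is adequate.
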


\begin{proof}
We show that when $k$ is suitably small then the probability of at most $\beta k$ different labels in each bin is small.
There are $n$ balls placed into $2^k/(\beta k)$ bins, so there are at least $\frac{\beta k n}{2^k}$ balls in some bin, by the pigeonhole principle.
We consider these balls and their labels.

The probability that all these balls have at most $\beta k$ labels is at most
\begin{eqnarray}
\binom{2^{\beta k}}{\beta k} \Bigl( \frac{\beta k}{2^{\beta k}}\Bigr)^{\frac{\beta kn}{2^k}}
\nonumber
&\le&
\Bigl( \frac{e 2^{\beta k}}{\beta k}\Bigr)^{\beta k} \cdot \frac{(\beta k)^{\frac{\beta k n}{2^k}}}{(2^{\beta k})^{\frac{\beta kn}{2^k}}}\\
\nonumber
&=&
e^{\beta k} 2^{\beta k (\beta k-\frac{\beta kn}{2^k})} (\beta k )^{\frac{\beta kn}{2^k}-\beta k}\\
&=&
\label{eqn:beta-k}
e^{\beta k} \Bigl( \frac{\beta k}{2^{\beta k}} \Bigr)^{\frac{\beta k n}{2^k}-\beta k}
\ .
\end{eqnarray}
We want to show that this is at most $n^{-a}$.
We compare the binary logarithms  of $n^{-a}$ and the right-hand side of~\eqref{eqn:beta-k}, and want the following inequality to hold:
\[
\lg e \cdot \beta k + \Bigl(\frac{\beta k n}{2^k}-\beta k\Bigr) (\lg (\beta k) - \beta k ) \le -a\lg n 
\ ,
\]
which is equivalent to the following inequality, by algebra:
\begin{equation}
\label{eqn:a-beta}
\frac{n}{2^k}\ge \frac{\lg e}{ \beta k-\lg(\beta k)} + 1 + \frac{a\lg n}{\beta k( \beta k-\lg(\beta k))}
\ .
\end{equation}
Observe now that, assuming $\beta\ge a+1$, if $k<\sqrt{\lg n}$ then the right-hand side of~\eqref{eqn:a-beta} is at most $\cO(1)+\lg n$ while the left-hand side is at least $\sqrt{n}$, and when $\sqrt{\lg n}\le k\le \lg n-2$ then the right-hand side of~\eqref{eqn:a-beta} is at most~$3$ while the left-hand side is at least~$4$, for sufficiently large~$n$.
\end{proof}

We say that a \emph{label collision} occurs, in a configuration produced by the process, if some bin contains two balls with the same label.

\begin{lemma}
\label{lem:arbitrary-unbounded-two}

For any $a>0$, if $k>\frac{1}{2}\lg n$ and $\beta>4a+7$  then the probability of a label collision is smaller than $n^{-a}$.
\end{lemma}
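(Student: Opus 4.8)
The plan is a union bound over pairs of balls, reducing everything to the probability that one fixed pair of balls both lands in the same bin and carries the same label. Fix two of the $n$ balls. Their bins are independent and uniform over the $2^k/(\beta k)$ available bins, so they land in the same bin with probability $\beta k/2^k$; independently of this, their labels are independent and uniform over $[1,2^{\beta k}]$, so they agree with probability $2^{-\beta k}$. Hence a fixed pair is a label collision with probability $\beta k\cdot 2^{-(\beta+1)k}$, and summing over the at most $n^2/2$ pairs gives
\[
  \Pr[\text{label collision}]\ \le\ \frac{n^2}{2}\,\beta k\,2^{-(\beta+1)k}\,.
\]

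The next step uses the hypothesis $k>\tfrac12\lg n$, equivalently $2^k>n^{1/2}$. The one delicate point, and the place I expect to need genuine care, is that the hypothesis bounds $k$ only from below, so the factor $\beta k$ in the numerator cannot simply be discarded as a constant. I would resolve this by noting that $f(k)=\beta k\,2^{-(\beta+1)k}$ is strictly decreasing for $k\ge 1$: its logarithmic derivative $\tfrac1k-(\beta+1)\ln 2$ is negative as soon as $k>\bigl((\beta+1)\ln 2\bigr)^{-1}$, which holds for every $k\ge 1$ once $\beta>7$. Since $\tfrac12\lg n\ge 1$ for $n\ge 4$, the right-hand side above is therefore maximized at $k=\tfrac12\lg n$, which yields
\[
  \Pr[\text{label collision}]\ \le\ \frac{n^2}{2}\cdot\frac{\tfrac{\beta}{2}\lg n}{2^{(\beta+1)\lg n/2}}\ =\ \frac{\beta\lg n}{4}\;n^{(3-\beta)/2}\,.
\]

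Finally I would substitute $\beta>4a+7$, so that $(3-\beta)/2<-2a-2$ and the bound becomes at most $\dfrac{\beta\lg n}{4}\,n^{-2a-2}=\Bigl(\dfrac{\beta\lg n}{4\,n^{a+2}}\Bigr)n^{-a}$; the parenthesized factor drops below $1$ once $n$ is large enough, because $\beta$ and $a$ are fixed constants, giving $\Pr[\text{label collision}]<n^{-a}$, as claimed. The only loose end for a fully rigorous write-up is the rounding hidden in ``$2^k/(\beta k)$ bins'': replacing this count by $\lfloor 2^k/(\beta k)\rfloor$ inflates the per-pair same-bin probability by a factor at most $2$ as soon as $2^k\ge 2\beta k$ (which holds throughout the relevant range of $k$), and this is absorbed harmlessly into the constants. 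Apart from the monotonicity observation, everything here is a routine comparison of a polynomial in $n$ against a logarithm.
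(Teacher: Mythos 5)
Your proof is correct and follows essentially the same route as the paper's: both reduce the claim to the birthday-type bound $\Pr[\text{label collision}]\le n^2\,\beta k\,2^{-(\beta+1)k}$ coming from the $2^{(\beta+1)k}/(\beta k)$ equiprobable (bin, label) pairs --- you via a union bound over pairs of balls, the paper via sequential placement and Bernoulli's inequality --- and then verify by elementary algebra that $k>\tfrac12\lg n$ and $\beta>4a+7$ force this quantity below $n^{-a}$. Your monotonicity argument for the factor $\beta k$ plays the same role as the paper's absorption of $\lg(\beta k)$ into $\tfrac{(1+\beta)k}{2}$, and the residual ``sufficiently large $n$'' you invoke at the end is consistent with how the neighboring lemmas are stated and used.
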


\begin{proof}
The number of pairs of a bin number and a label is $2^k\cdot 2^{\beta k}/(\beta k)$.
It follows that the probability of some two balls in the same bin obtaining different labels is at least
\[
\Bigl(1-\frac{n}{2^{k+\beta k}/(\beta k)}\Bigr)^n
\ge
1-\frac{n^2}{2^{k+\beta k}/(\beta k)}
\ ,
\]
by the Bernoulli's inequality.
So the probability that two different balls obtain the same label  is at most $\frac{n^2}{2^{k+\beta k}/(\beta k)}$.
We want the following inequality to hold
\[
\frac{n^2}{2^{k+\beta k}/(\beta k)}
<
n^{-a}
\ .
\]
This is equivalent to the inequality obtained by taking logarithms
\[
(2+a)\lg n < (1+\beta)k -\lg(\beta k)
\ ,
\]
which holds when $(2+a)\lg n< \frac{1+\beta}{2}k$.
It follows that it is sufficient for $k$ to satisfy 
\[
k> \frac{2(2+a)}{1+\beta}\lg n
\ .
\]
This inequality holds for $k>\frac{1}{2}\lg n$ when $\beta>4a+7$.
\end{proof}

We say that such a $\beta$-process is \emph{correct} when upon termination no label collision occurs, otherwise the process is \emph{incorrect}.


\begin{lemma}
\label{lem:correctness-arbitrary-unbounded}

For any $a>0$, there exists $\beta>0$ such that the $\beta$-process is incorrect with probability that is at most $n^{-a}$, for sufficiently large~$n$.
\end{lemma}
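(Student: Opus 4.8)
The plan is to combine the two single-stage estimates of Lemmas~\ref{lem:arbitrary-unbounded-one} and~\ref{lem:arbitrary-unbounded-two} by a union bound over the stages of the $\beta$-process, splitting the stages according to the size of the parameter~$k$. First I would fix $a>0$, put $a'=a+2$, and choose $\beta$ large enough that both $\beta\ge 1+a'$ and $\beta>4a'+7$ hold; for instance $\beta=4a+16$ works. With this $\beta$, Lemma~\ref{lem:arbitrary-unbounded-one} applied with $a$ replaced by $a'$ bounds the probability of halting in any stage with $k\le \lg n-2$ by $n^{-a'}$, and Lemma~\ref{lem:arbitrary-unbounded-two} applied with $a$ replaced by $a'$ bounds the probability of a label collision in any stage with $k>\tfrac12\lg n$ by $n^{-a'}$.

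The $\beta$-process is incorrect precisely when it halts and a label collision is present in the halting stage. By Lemma~\ref{lem:beta-process-terminates} the process always halts, and since the values of $k$ strictly increase and a halt is forced once $\beta k\ge n$, the set of possible halting stages is finite, of size at most $n$ (and of size $O(\log n)$ when $r(k)=2k$). For a value $k$ in this set let $H_k$ be the event ``the process halts in the stage with parameter $k$'' and $L_k$ the event ``a label collision occurs in the stage with parameter $k$''; a union bound then gives $\Pr[\text{incorrect}]\le\sum_k \Pr[H_k\cap L_k]$. For $k\le \lg n-2$ I would bound $\Pr[H_k\cap L_k]\le \Pr[H_k]<n^{-a'}$ by Lemma~\ref{lem:arbitrary-unbounded-one}, and for $k>\tfrac12\lg n$ bound $\Pr[H_k\cap L_k]\le \Pr[L_k]<n^{-a'}$ by Lemma~\ref{lem:arbitrary-unbounded-two}; this is legitimate because each stage re-throws $n$ fresh balls, so $H_k$ is contained in the event that the stage-$k$ configuration meets the halting condition and $L_k$ is an event about the stage-$k$ configuration alone. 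Once $n\ge 16$ one has $\tfrac12\lg n<\lg n-2$, so the two ranges $k\le\lg n-2$ and $k>\tfrac12\lg n$ together cover all values of $k$, and every possible halting stage is controlled by one of the two bounds. Since at most $\lg n$ stages have $k\le\lg n-2$ and at most $n$ stages occur in total, summing yields $\Pr[\text{incorrect}]\le \lg n\cdot n^{-a'}+n\cdot n^{-a'}\le 2n^{-a-1}\le n^{-a}$ for all sufficiently large~$n$.

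The step I expect to need the most care is the bookkeeping on the number of stages with ``large'' $k$: for the instantiation $r(k)=k+1$ this count is $\Theta(n)$ rather than polylogarithmic, so the per-stage collision bound must be strong enough to absorb a linear factor; this is exactly why I take $a'$ strictly larger than $a+1$ (here $a'=a+2$) and enlarge $\beta$ accordingly, rather than just applying the lemmas with the target exponent $a$. The remaining points — that $H_k$ and $L_k$ are dominated by the single-stage quantities appearing in Lemmas~\ref{lem:arbitrary-unbounded-one} and~\ref{lem:arbitrary-unbounded-two}, and that the two $k$-ranges overlap for large $n$ — are routine and need only the independence of throws across stages together with the inequality $\tfrac12\lg n<\lg n-2$.
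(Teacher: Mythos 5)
Your proposal is correct and follows essentially the same route as the paper's own proof: boost the exponent in Lemmas~\ref{lem:arbitrary-unbounded-one} and~\ref{lem:arbitrary-unbounded-two}, observe that the ranges $k\le\lg n-2$ and $k>\frac{1}{2}\lg n$ together cover every possible stage, and bound the probability of ``halts at stage $k$ and has a label collision'' by whichever single-stage estimate applies. Your explicit union bound over the up-to-$\Theta(n)$ stages arising when $r(k)=k+1$ (hence the choice $a'=a+2$ rather than $2a$) is in fact a bit more careful than the paper's accounting, but it is the same argument.
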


\begin{proof}
The $\beta$-process is incorrect when  there is a label collision after the last stage.  
The probability of the intersection of the events ``$\beta$-process terminates'' and ``there are label collisions'' is bounded from above by the probability of any one of these events.
Next we show that, for each pair of $k$ and $n$, some of these two events occurs with probability that is at most $n^{-a}$, for a suitable~$\beta$. 

To this end we use Lemmas~\ref{lem:arbitrary-unbounded-one} and~\ref{lem:arbitrary-unbounded-two} in which we substitute $2a$ for $a$.
We obtain that, on the one hand, if $k\le \lg n - 2$ and $\beta\ge 1+2a$ then the probability of halting is smaller than $n^{-2a}$, and, on the other hand, that if $k>\frac{1}{2}\lg n$ and $\beta>8a+7$  then the probability of a label collision is smaller than $n^{-2a}$.
It follows that some of the two considered events occurs with probability at most $2n^{-2a}$ for sufficiently large $\beta$ and any sufficiently large $n$.
This probability is at most $n^{-a}$, for sufficiently large~$n$.
\end{proof}


\begin{lemma}
\label{lem:arbitrary-unbounded-three}

For any $a>0$, there exists $\beta>0$ and $c>0$ such that the following two facts about the $\beta$-process hold.
If $r(k)=k+1$ then at most $cn/\ln n$ bins are ever needed and $c n\ln^2 n$ random bits are ever generated, each among these properties occurring with probability that is at least $1-n^{-a}$.
If $r(k)=2k$ then at most $cn^2/\ln n$ bins are ever needed and $c n\ln n$ random bits are ever generated, each among these properties occurring with probability that is at least $1-n^{-a}$.
\end{lemma}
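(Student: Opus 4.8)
The plan is to determine, up to a constant factor, the stage at which the $\beta$-process stops, and then to add up the per-stage usage of bins and of random bits over all stages up to that point, separately for the two choices of $r(k)$.

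Fix a constant $C>e$, say $C=4$, and let $k^\star$ be the first value of the parameter $k$ occurring in the sequence of stages (that is, in $k=1,r(1),r(r(1)),\dots$) for which $2^{k}\ge Cn$. For $r(k)=k+1$ this gives $k^\star=\lceil\lg(Cn)\rceil=\lg n+\cO(1)$, hence $2^{k^\star}=\cO(n)$; for $r(k)=2k$ it gives $\lg(Cn)\le k^\star<2\lg(Cn)$, hence $k^\star=\Theta(\log n)$ and $2^{k^\star}=\cO(n^{2})$. The first step is to show that, for $\beta$ large enough in terms of $a$, the $\beta$-process halts at or before the stage whose parameter is $k^\star$, with probability at least $1-n^{-a}$. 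To this end consider the stage with parameter $k^\star$: it throws $n$ balls into $m=2^{k^\star}/(\beta k^\star)$ bins, so the load of any fixed bin is a binomial variable with mean $\mu=n\beta k^\star/2^{k^\star}\le\beta k^\star/C$. Since $\beta k^\star\ge C\mu$, the standard upper tail bound gives $\Pr(\text{load}\ge\beta k^\star)\le(e\mu/\beta k^\star)^{\beta k^\star}\le(e/C)^{\beta k^\star}$, and a union bound over the at most $2^{k^\star}$ bins bounds the probability that some bin holds more than $\beta k^\star$ balls---and hence more than $\beta k^\star$ distinct labels---by $2^{k^\star}(e/C)^{\beta k^\star}$, which is at most $n^{-a}$ once $\beta$ is taken large enough, using $k^\star\ge\lg n$. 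Because the throw in the stage with parameter $k^\star$ is independent of the earlier throws and the process stops at the first stage meeting its termination condition, this is also a bound of $n^{-a}$ on the probability that the process has not yet halted by that stage.

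On the complementary event the number of bins ever needed is at most $\sum 2^{k}/(\beta k)$ and the number of random bits ever generated is at most $\sum n(1+\beta)k$ (each of the $n$ balls spends at most $(1+\beta)k$ bits on choosing its bin and label in a stage with parameter $k$), both sums ranging over the stages with parameter $k\le k^\star$. It then remains only to evaluate these sums. For $r(k)=k+1$ the index set is $\{1,2,\dots,k^\star\}$: using $\sum_{k\le K}2^{k}/k=\Theta(2^{K}/K)$ together with $2^{k^\star}=\cO(n)$, the bin sum is $\cO(n/\log n)$, while the bit sum is $\cO\bigl(n(k^\star)^{2}\bigr)=\cO(n\log^{2}n)$. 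For $r(k)=2k$ the index set is $\{1,2,4,\dots,k^\star\}$, along which the summands grow doubly exponentially, so the bin sum is $(1+o(1))\,2^{k^\star}/(\beta k^\star)=\cO(n^{2}/\log n)$ by $2^{k^\star}=\cO(n^{2})$ and $k^\star=\Theta(\log n)$, and the bit sum is $\cO(nk^\star)=\cO(n\log n)$. Choosing $c$ large enough to absorb the implicit constants, and observing that both bounds hold simultaneously on the single event ``the process halts by stage $k^\star$'' of probability at least $1-n^{-a}$, finishes the proof.

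I expect the first step---the $\cO(\log n)$ upper bound on the halting stage with polynomially small failure probability---to be the only nontrivial point, since Lemmas~\ref{lem:arbitrary-unbounded-one}--\ref{lem:correctness-arbitrary-unbounded} only give a matching lower bound on the halting stage and certify correctness, so a fresh maximum-load tail bound is needed here; once $k^\star=\Theta(\log n)$ is in hand (with $k^\star=\lg n+\cO(1)$ when $r(k)=k+1$), the two resource counts reduce to routine geometric or polynomial summations.
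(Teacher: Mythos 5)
Your proposal is correct and follows essentially the same route as the paper's proof: both identify the first stage with $k=\lg n+\cO(1)$ (respectively the first power of two past that point for $r(k)=2k$), apply a Chernoff-type tail bound plus a union bound to show every bin then holds at most $\beta k$ balls with probability $1-n^{-a}$, and conclude by summing bins and random bits over the stages up to that point. Your write-up is in fact slightly more careful than the paper's on two minor points (explicitly summing the bin counts over all stages rather than quoting only the last term, and charging $\beta k$ bits for the label choice as well as $k$ bits for the bin choice), but these do not change the argument or the asymptotics.
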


\begin{proof}
We throw $n$ balls into $2^k/(\beta k)$ bins.
As $k$ keeps increasing, the probability of termination increases as well, because both $2^k/(\beta k)$ and $\beta k$ increase as functions of~$k$.
Let us take $k=1+\lg n$ so that the number of bins is~$\frac{2n}{\beta k}$.
We want to show that no bin contains more than $\beta k$ balls with a suitably small probability.

Let us consider a specific bin and let $X$ be the number of balls in this bin.
The expected number of balls in the bin is $\mu=\frac{\beta k}{2}$.
We use the Chernoff bound for a sequence of Bernoulli trials in the form of 
\[
\Pr(X>(1+\varepsilon)\mu )
<
\exp(-\varepsilon^2\mu/3)
\ ,
\]
which holds for $0<\varepsilon<1$, see~\cite{MitzenmacherUpfal-book05}.
Let us choose $\varepsilon=\frac{1}{2}$, so that $1+\varepsilon=\frac{3}{2}$ and $\frac{3}{2}\mu=\frac{3}{4}\beta k$.
We obtain the following bound
\[
\Pr(X>\beta k ) 
<
\Pr\bigl(X>\frac{3}{4}\cdot \beta k \bigr) 
< 
\exp(-\frac{1}{4}\cdot \frac{\beta k}{6})
=
\exp\bigl(-\frac{\beta}{24}\cdot (1+\lg n)\bigr)
\ ,
\]
which can be made smaller than $n^{-1-a}$ for a $\beta$ sufficiently large with respect to~$a$, and sufficiently large~$n$.
Using the union bound, each of the $n$ bins contains at most $\beta k$ balls with probability at most~$n^{-a}$.
This implies that termination occurs as soon as $k$ reaches or surpasses $k=1+\lg n$, with the corresponding large probability $1-n^{-a}$.

In the case of $r(k)=k+1$, the consecutive integer values of $k$ are tried, so the $\beta$-process terminates by the time $k=1+\lg n$, and for this $k$ the number of bins needed is $\Theta(n/\log n)$.
To choose a bin for any value of $k$ requires at most $k$ random bits, so implementing such choices for $k=1,2,\ldots, 1+\lg n$ requires $\cO(\log^2 n)$ random bits per processor.

In the case of $r(k)=2k$, the $\beta$-process terminates by the time the magnitude of~$k$ reaches $2(1+\lg n)$, and for this value of~$k$ the number of bins needed is $\Theta(n^2/\log n)$.
As $k$ progresses through consecutive powers of $2$, the sum of these numbers is a sum of a geometric progression, and so is of the order of the maximum term, that is $\Theta(\log n)$, which is the number of random bits per processor.
\end{proof}

There is a direct correspondence between iterations of the outer repeat-loop of algorithm \textsc{Arbitrary-Unbounded-MC} and stages of the $\beta$-process.
We map an execution of the algorithm into a corresponding execution of a $\beta$-process in order to apply Lemmas~\ref{lem:correctness-arbitrary-unbounded} and~\ref{lem:arbitrary-unbounded-three} in the proof of the following Theorem, which  summarizes the performance of algorithm \textsc{Arbitrary-Unbounded-MC} and justifies that it is Monte Carlo.


\begin{theorem}
\label{thm:arbitrary-unbounded-MC}

Algorithm \textsc{Arbitrary-Unbounded-MC} always terminates, for any $\beta>0$.
For each $a>0$, there exists $\beta>0$ and $c>0$ such that  the algorithm assigns unique names and has the following additional properties with probability $1-n^{-a}$.
If $r(k)=k+1$ then at most $cn/\ln n$ memory cells are ever needed, $c n\ln^2 n$ random bits are ever generated, and the algorithm terminates in time $\cO(\log^2 n)$.
If $r(k)=2k$ then at most $cn^2/\ln n$ memory cells are ever needed, $c n\ln n$ random bits are ever generated, and the algorithm terminates in time $\cO(\log n)$.
\end{theorem}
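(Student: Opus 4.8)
The plan is to derive the theorem from the lemmas about the $\beta$-process, using the correspondence in which the $i$th iteration of the outer repeat-loop of algorithm \textsc{Arbitrary-Unbounded-MC} is identified with the $i$th stage of the $\beta$-process: the value of $k$, the choices of $\texttt{bin}_v$ and $\texttt{label}_v$, and the contents of the arrays \texttt{Pad} and \texttt{Last-Name} at the end of a stage all match under this identification. A bin receives a ball for each processor selecting the corresponding value of \texttt{bin}, and the inner for-loop running $\beta k$ times lets every processor in a bin holding at most $\beta k$ distinctly-labelled balls succeed in writing its label; hence the outer loop terminates exactly when the corresponding $\beta$-process does. Termination for every $\beta>0$ then follows from Lemma~\ref{lem:beta-process-terminates}, which gives the first assertion.

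For correctness I would argue that when the $\beta$-process terminates without a label collision, the $n$ positions $(\texttt{bin}_v,\texttt{Last-Name}[\texttt{bin}_v])$ recorded by the processors are pairwise distinct. The only way two processors can obtain equal positions is to write equal labels to the same \texttt{Pad} cell in the same round on the Arbitrary PRAM: the concurrent write then succeeds for both, both read back that same value, both read the same current value of \texttt{Last-Name}, and both store the same incremented pair. But that situation is precisely a label collision in the $\beta$-process. Since \texttt{All-Named} is \texttt{true} at termination, every processor holds a position other than $(0,0)$, so the ranks of the $n$ positions form a permutation of $[1,n]$ and the assigned names are correct. Lemma~\ref{lem:correctness-arbitrary-unbounded} bounds the probability of a label collision (for a suitable $\beta$) by $n^{-a}$, which handles the correctness claim.

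It remains to bound the resources. The shared memory used in a stage is dominated by the two arrays \texttt{Pad} and \texttt{Last-Name}, each of size equal to the number of bins of that stage, plus the $\cO(1)$ cell \texttt{All-Named}; since a fresh region is allocated in each iteration, the memory ever needed is within a constant factor of the number of bins ever needed, while the random bits used are exactly those counted by the $\beta$-process. Lemma~\ref{lem:arbitrary-unbounded-three} then gives $\cO(n/\log n)$ bins and $\cO(n\log^2 n)$ random bits when $r(k)=k+1$, and $\cO(n^2/\log n)$ bins and $\cO(n\log n)$ random bits when $r(k)=2k$, each with probability $1-n^{-a}$. For the running time, each iteration of the inner for-loop takes $\cO(1)$ rounds and the loop is iterated $\beta k$ times, so a stage with parameter $k$ costs $\cO(k)$ rounds; in addition one prefix-type rank computation of $\cO(\log n)$ rounds is performed once, after the loops end. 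From the proof of Lemma~\ref{lem:arbitrary-unbounded-three}, with probability $1-n^{-a}$ the process halts once $k$ reaches $\Theta(\log n)$, so for $r(k)=k+1$ the successive parameters are $k=1,2,\dots,\Theta(\log n)$ and the total time is $\sum_k\cO(k)=\cO(\log^2 n)$, whereas for $r(k)=2k$ they are $k=1,2,4,\dots,\Theta(\log n)$ and the total time is $\sum_i\cO(2^i)=\cO(\log n)$; adding the $\cO(\log n)$ for the ranks changes neither bound. A union bound over the failure events of Lemmas~\ref{lem:correctness-arbitrary-unbounded} and~\ref{lem:arbitrary-unbounded-three}, applied with $a$ replaced by a slightly larger constant, yields all the claimed properties simultaneously with probability $1-n^{-a}$.

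The main obstacle I anticipate is not any individual estimate but making the algorithm-to-$\beta$-process correspondence fully rigorous: in particular the claim that distinct recorded positions arise exactly when no label collision occurs, which hinges on the concurrent-write-of-equal-values semantics of the Arbitrary PRAM, and the extraction from the proof of Lemma~\ref{lem:arbitrary-unbounded-three} of the precise high-probability statement that termination occurs by the stage with $k=\Theta(\log n)$, on which the running-time bound depends but which appears there only implicitly.
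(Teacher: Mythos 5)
Your proposal is correct and follows essentially the same route as the paper: termination from Lemma~\ref{lem:beta-process-terminates}, correctness with probability $1-n^{-a}$ from Lemma~\ref{lem:correctness-arbitrary-unbounded}, and the memory, random-bit and time bounds from Lemma~\ref{lem:arbitrary-unbounded-three} via the identification of outer-loop iterations with stages of the $\beta$-process. The paper's proof is terser (it dispatches the time bound with the observation that time is proportional to random bits per processor), while you spell out the position-versus-label-collision correspondence and the $\sum_k \cO(k)$ time accounting explicitly, but no new ideas are needed beyond what the cited lemmas provide.
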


\begin{proof}
The algorithm always terminates by Lemma~\ref{lem:beta-process-terminates}.
By Lemma~\ref{lem:correctness-arbitrary-unbounded}, the algorithm assigns correct names with probability that is at least $1-n^{-a}$.
The remaining properties follow from Lemma~\ref{lem:arbitrary-unbounded-three}, because the number of bins is proportional to the number of memory cells and the number of random bits per processor is proportional to time.
\end{proof}

The instantiations of algorithm \textsc{Arbitrary-Unbounded-MC} are close to optimality with respect to some of the performance metrics we consider, depending on whether $r(k)=k+1$ or $r(k)=2k$.
If $r(k)=k+1$ then the algorithm's use of shared memory would be optimal if its time were $\cO(\log n)$, by Theorem~\ref{thm:lower-bound-memory-arbitrary-pram}, but as it is, the algorithm misses space optimality by at most a logarithmic factor, since the algorithm's running time is $\cO(\log^2 n)$.
Similarly, if $r(k)=k+1$ then the number of random bits ever generated $\cO(n\log^2 n)$ misses optimality by at most a logarithmic factor, by Proposition~\ref{pro:lower-bound-on-random-bits}.
On the pother hand, if $r(k)=2k$ then the expected time $\cO(\log n)$ is optimal, by Theorem~\ref{thm:log-n-lower-bound}, the expected number of random bits  $\cO(n\log n)$ is optimal, by Proposition~\ref{pro:lower-bound-on-random-bits},  and the probability of error~$n^{-\cO(1)}$ is optimal, by Proposition~\ref{pro:probability-of-error}, but the amount of used shared memory misses optimality by at most a polynomial factor, by Theorem~\ref{thm:lower-bound-memory-arbitrary-pram}.

\section{Monte Carlo for Common with Bounded Memory}

\label{sec:common-bounded-MC}

The Monte Carlo algorithm \textsc{Common-Bounded-MC}, which we present in this section, solves the naming problem for a Common PRAM with a constant number of shared read-write registers, when the number of processors $n$ is unknown.
The algorithm has its pseudocode in Figure~\ref{alg:common-bounded-MC}.
To make the exposition of this algorithm more modular, we use two procedures \textsc{Estimate-Size} and \textsc{Extend-Names}.
The pseudocodes of these procedures are given in Figures~\ref{proc:estimate-size} and~\ref{proc:extend-names}, respectively.
The private variables in the pseudocode in Figure~\ref{alg:common-bounded-MC} have the following meaning:  \texttt{size} is an approximation of the number of processors~$n$,  and \texttt{number-of-bins} determines the size of the range of bins we throw conceptual balls into.

The main task of procedure \textsc{Estimate-Size} is to produce an estimate of the number $n$ of processors.
Procedure \textsc{Extend-Names} is iterated multiple times, each iteration is intended to assign names to a group of processors.
This is accomplished by the processors selecting integer values at random, interpreted as throwing balls into bins, and verifying for collisions.
Each selection of a bin is followed by a collision detection.
A ball placement without a detected collision results in a name assigned, otherwise the involved processors try again to throw balls into a range of bins.
The effectiveness of the resulting algorithm hinges of calibrating the number of bins to the expected number of balls to be thrown.

\Paragraph{Balls into bins for the first time.}

The role of procedure \textsc{Estimate-Size}, when called by algorithm \textsc{Common-Bounded-MC}, is to estimate the unknown number of processors~$n$, which is returned as \texttt{size}, to assign a value to variable \texttt{number-of-bins}, and assign values to each private variable \texttt{bin}, which indicates the number of a selected bin in the range $[1,\texttt{number-of-bins}]$.
The procedure tries consecutive values of $k$ as approximations of $\lg n$.
For a given~$k$, an experiment is carried out to throw $n$ balls into $k2^k$ bins.
The execution stops when the number of occupied bins is at most~$2^k$, and then $3\cdot 2^k$ is treated as an approximation of~$n$ and $k2^k$ is the returned number of bins.


\begin{figure}[t]
\rule{\textwidth}{0.75pt}

\F 
\textbf{Procedure} \textsc{Estimate-Size} 

\rule{\textwidth}{0.75pt}
\begin{center}
\begin{minipage}{\pagewidth}
\begin{description}
\item[\rm initialize] $k\gets 2$ 
\hfill /$\ast$ \ initial approximation of $\lg n$ \ $\ast$/
\item[\tt repeat] \ 
\begin{description}
\item[$k$\!]$\gets k+1$ 
\item[\tt bin$_v$]$\gets$ random integer in $[1,k\, 2^k]$
\item[\rm initialize] $\texttt{Nonempty-Bins}\gets 0$
\item[\tt for] $i\gets 1$ \texttt{to}  $k\,2^k$ \texttt{do} 
\begin{description}
\item[\texttt{if}] $\texttt{bin}_v = i$ \texttt{then}
\begin{description}
\item[\tt Nonempty-Bins]$\!\gets \texttt{Nonempty-Bins}+1$ 
\end{description}
\end{description}
\end{description}
\item[\tt until] $\texttt{Nonempty-Bins}\le 2^k$
\item[\tt return] $(3\cdot 2^k,k\,2^k)$
\hfill /$\ast$ \ $3\cdot 2^k$ is \texttt{size},  $k\,2^k$ is \texttt{number-of-bins}\ $\ast$/
\end{description}
\end{minipage}
\FFF

\rule{\textwidth}{0.75pt}

\parbox{\captionwidth}{\caption{\label{proc:estimate-size}
A pseudocode for a processor $v$ of a Common PRAM.
This procedure is invoked by algorithm \textsc{Common-Bounded-MC} in Figure~\ref{alg:common-bounded-MC}.
The variable \texttt{Nonempty-Bins} is shared.
}}
\end{center}
\end{figure}


\begin{lemma}
\label{lem:estimate-size}

For $n\ge 20$ processors, procedure \textsc{Estimate-Size} returns an estimate \texttt{size} of $n$  such that the inequality $\texttt{size}<6n$ holds with certainty and the inequality $n< \texttt{size}$ holds with probability $1-2^{-\Omega(n)}$.
\end{lemma}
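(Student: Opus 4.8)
The plan is to track the random value $k^{\ast}$ of the parameter $k$ at the iteration where \textsc{Estimate-Size} halts, and to use that $\texttt{size}=3\cdot 2^{k^{\ast}}$ together with $k^{\ast}\ge 3$ (since $k$ is incremented to $3$ before the first test of the \texttt{until} condition). Throughout, the only fact needed about the dynamics is that the number of occupied bins can never exceed the number of balls, which is $n$.

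First I would dispose of the deterministic bound $\texttt{size}<6n$, for which no probabilistic reasoning is needed. Since the number of occupied bins is at most $n$, the stopping test $\texttt{Nonempty-Bins}\le 2^{k}$ is satisfied automatically once $2^{k}\ge n$, so the procedure certainly halts. If $k^{\ast}=3$ then $\texttt{size}=24<6n$ because $n\ge 20$. If $k^{\ast}\ge 4$, then at the iteration with $k=k^{\ast}-1\ge 3$ the test failed, i.e.\ $\texttt{Nonempty-Bins}>2^{k^{\ast}-1}$ in that iteration; combined with $\texttt{Nonempty-Bins}\le n$ this gives $2^{k^{\ast}-1}<n$, hence $\texttt{size}=6\cdot 2^{k^{\ast}-1}<6n$.

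For the bound $n<\texttt{size}$, note that it fails precisely when $2^{k^{\ast}}\le n/3$, i.e.\ when the procedure halts at some $k$ with $2^{k}\le n/3$; there are at most $\lg(n/3)=O(\log n)$ such $k$, all of them at least $3$. Because each iteration draws its bin choices afresh and independently of the earlier iterations, by a union bound it suffices to show that for one fixed $k\ge 3$ with $2^{k}\le n/3$, throwing $n$ balls into $m=k2^{k}$ bins produces more than $2^{k}$ nonempty bins except with probability $2^{-\Omega(n)}$; summing this over the $O(\log n)$ relevant $k$ leaves the total failure probability at $2^{-\Omega(n)}$, and the small cases $20\le n\le 23$ are covered automatically since then no $k\ge 3$ has $2^{k}\le n/3$, so $\texttt{size}\ge 24>n$ outright.

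The core estimate, and the one step that needs a little care, is a coupling argument. Throw the $n$ balls one at a time and let $N_{j}$ be the number of nonempty bins after $j$ throws. Whenever $N_{j-1}\le 2^{k}$, the $j$-th ball lands in a previously empty bin with probability $(m-N_{j-1})/m\ge 1-2^{k}/m=1-1/k\ge 1/2$, using $k\ge 3$. Coupling these events with i.i.d.\ Bernoulli$(\tfrac12)$ variables $B_{1},\dots,B_{n}$ and a short induction on $j$ give $N_{n}\ge\min\bigl(2^{k}+1,\ B_{1}+\dots+B_{n}\bigr)$, so that $\{N_{n}\le 2^{k}\}\subseteq\{B_{1}+\dots+B_{n}\le 2^{k}\}$. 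Since $B_{1}+\dots+B_{n}$ is Binomial$(n,\tfrac12)$ with mean $n/2$ and $2^{k}\le n/3$, a Chernoff bound gives $\Pr[B_{1}+\dots+B_{n}\le n/3]=2^{-\Omega(n)}$, hence $\Pr[N_{n}\le 2^{k}]=2^{-\Omega(n)}$, which is exactly what the union bound above consumes. Apart from this coupling-plus-induction and the observation that rethrowing makes the iterations independent (so the union bound over $k$ is legitimate), everything is elementary arithmetic.
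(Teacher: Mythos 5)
Your proof is correct, and while it follows the same overall skeleton that the statement forces (a deterministic argument for $\texttt{size}<6n$, then a union bound over the ``bad'' values of $k$ with $2^k\le n/3$ for the probabilistic part), the two substantive steps are carried out differently from the paper. For the deterministic bound, the paper reasons about the largest $k$ the procedure can reach (via $\lceil \lg n\rceil$), whereas you argue from the failure of the \texttt{until} test at the iteration $k^{\ast}-1$; your version is slightly cleaner and avoids the case split on whether $n$ is a power of two. For the core probabilistic estimate, the paper bounds $\Pr[\texttt{Nonempty-Bins}\le 2^k]$ by a union bound over all $\binom{k2^k}{2^k}$ candidate sets of $2^k$ bins, i.e.\ $\binom{k2^k}{2^k}\bigl(\tfrac{1}{k}\bigr)^n\le (ek)^{2^k}k^{-n}\le e^{-n/3}$ for $k\ge 3$; you instead run a sequential coupling of the occupancy count with i.i.d.\ Bernoulli$(\tfrac12)$ variables and finish with a Chernoff bound on a Binomial$(n,\tfrac12)$. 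Both yield a $2^{-\Omega(n)}$ bound per $k$ and hence $2^{-\Omega(n)}$ overall after summing over the $\cO(\log n)$ relevant values of $k$. The paper's computation is shorter once one trusts the binomial-coefficient manipulation; your coupling is more elementary and more robust (it only uses that each ball finds an empty bin with probability at least $\tfrac12$ while the count is still small), at the cost of having to state and verify the induction $N_n\ge\min(2^k+1,\,B_1+\dots+B_n)$. Your explicit handling of $20\le n\le 23$ (where no $k\ge 3$ satisfies $2^k\le n/3$, so $\texttt{size}\ge 24>n$ deterministically) is a detail the paper glosses over.
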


\begin{proof}
The procedure returns $3\cdot 2^k$, for some integer $k>0$.
We interpret selecting of values for variable \texttt{bin} in an iteration of the main repeat-loop as throwing $n$ balls into $k2^k$ bins; here $k=j+2$ in the $j$th iteration of this loop, because the smallest value of $k$ is~$3$.
Clearly, $n$ is an upper bound on the number of occupied bins.

If $n$ is a power of~$2$, say $n=2^i$, then the procedure terminates by the time $i=k$, so that $2^k<2^{i+1}=2n$.
Otherwise, the maximum possible $k$ equals $\lceil \lg n \rceil$, because $2^{\lfloor \lg n \rfloor}<n< 2^{\lceil \lg n \rceil}$.
This gives $2^{\lceil \lg n \rceil} = 2^{\lfloor \lg n \rfloor+1}<2n$.
We obtain that the inequality $2^k<2n$ occurs with certainty, and so does $3\cdot 2^k<6n$ as well.

Now we estimate the lower bound on $2^k$.
Consider $k$ such that $2^k\le \frac{n}{3}$.
Then $n$ balls fall into at most $2^k$ bins with probability that is at most
\begin{equation}
\label{eqn:common-bounded-one}
\binom{k2^k}{2^k} \Bigl(\frac{2^k}{k2^k}\Bigr)^{n}
\le
\Bigl(\frac{ek2^k}{2^k}\Bigr)^{2^k}\cdot \frac{1}{k^n}
=
(ek)^{2^k} k^{-n}
=
e^{2^k}k^{2^k-n}
\le
e^{n/3}k^{-2n/3}
\ .
\end{equation}
The right-hand side of~\eqref{eqn:common-bounded-one} is at most $e^{-n/3}$ when the inequality  $k>e$ holds.
The smallest $k$ considered in the pseudocode in Figure~\ref{proc:estimate-size} is $k=3>e$.
The inequality $k>e$  is consistent with $2^k\le \frac{n}{3}$  when $n\ge 20$.
The number of possible values for $k$ is $\cO(\log n)$, so the probability of the procedure returning for $2^k\le \frac{n}{3}$ is $e^{-n/3}\cdot \cO(\log n)= 2^{-\Omega(n)}$.
\end{proof}

Procedure \textsc{Extend-Names}'s behavior can also be interpreted as throwing balls into bins, where a processor $v$'s ball is in a bin $x$ when \texttt{bin}$_v = x$.
The procedure first verifies the suitable range of bins $[1,\texttt{number-of-bins}]$ for collisions.
A verification for collisions takes either just a constant time or $\Theta(\log n)$ time.

A constant verification occurs when there is no ball in the considered bin~$i$, which is verified when  the line ``\texttt{if}  $\texttt{bin}_x=i$ for some processor~$x$'' in the pseudocode in Figure~\ref{proc:extend-names} is to be executed.
Such a verification is performed by using a shared register initialized to~$0$, into which all processors~$v$ with $\texttt{bin}_v=i$ write~$1$, then all the processors read this register, and if the outcome of reading is~$1$ then all write~$0$ again, which indicates that there is at least one ball in the bin, otherwise there is no ball.


\begin{figure}[t]
\rule{\textwidth}{0.75pt}

\F 
\textbf{Procedure} \textsc{Extend-Names}

\rule{\textwidth}{0.75pt}
\begin{center}
\begin{minipage}{\pagewidth}
\begin{description}
\item[\rm initialize] $\texttt{Collision-Detected} \gets \texttt{collision}_v\gets \texttt{false}$ 
\item[\tt for] $i\gets 1$ \texttt{to  \texttt{number-of-bins} do} 
\begin{description}
\item[\tt if]  $\texttt{bin}_x=i$ for some processor $x$ \texttt{then}
\begin{description}
\item[\tt if] $\texttt{bin}_v=i$ \texttt{then}
\begin{description}
\item[\tt for] $j\gets 1$ \texttt{to} $\beta \, \lg \texttt{size}$ \ \texttt{do} 
\begin{description}
\item[\tt if] \textsc{Verify-Collision}   \texttt{then}
\begin{description}
\item[\texttt{Collision-Detected}]\!$\gets \texttt{collision}_v\gets \texttt{true}$
\end{description}
\end{description}
\item[\tt if] $\texttt{not collision}_v$ \texttt{then}
\begin{description}
\item[\tt Last-Name]$\!\gets \texttt{Last-Name}+1$
\item[\tt name$_v$] $ \gets  \texttt{Last-Name}$ 
\item[\tt bin$_v$] $\gets 0$
\end{description}
\end{description}
\end{description}
\end{description}
\item[\tt if]$(\texttt{number-of-bins} > \texttt{size})$  \texttt{then}
\begin{description}
\item[\tt number-of-bins]$\gets \texttt{size}$
\end{description}
\item[\tt if]$\texttt{collision}_v$ \texttt{then}
\begin{description}
\item[\texttt{bin}$_v$]$\!\gets$ random integer in  $[1,\texttt{number-of-bins}]$
\end{description}
\end{description}
\end{minipage}
\FFF

\rule{\textwidth}{0.75pt}

\parbox{\captionwidth}{\caption{\label{proc:extend-names}
A pseudocode for a processor~$v$ of a Common PRAM.
This procedure invokes procedure \textsc{Verify-Collision}, whose pseudocode is in Figure~\ref{proc:verify-collision}, and is itself invoked by algorithm \textsc{Common-Bounded-MC}  in Figure~\ref{alg:common-bounded-MC}.
The  variables \texttt{Last-Name} and \texttt{Collision-Detected} are shared.
The private variable \texttt{name} stores the acquired name.
The constant $\beta>0$ is to be determined in analysis.
}}
\end{center}
\end{figure}

A logarithmic-time verification of collision occurs when there is some ball in the corresponding bin.
This triggers calling procedure \textsc{Verify-Collision} precisely $\beta\lg n$ times; notice that this procedure has the default parameter~$1$, as only one bin is verified at a time.
Ultimately, when a collision is not detected for some processor $v$ whose ball is the bin, then this processor increments \texttt{Last-Name} and assigns its new value as a tentative name. 
Otherwise, when a collision is detected, processor $v$ places its ball in a new bin when the last line in Figure~\ref{proc:extend-names} is executed.

To prepare for the next round of throwing balls, the variable \texttt{number-of-bins} may be reset.
During one iteration of the main repeat-loop of the pseudocode of algorithm \textsc{Common-Bounded-MC} in Figure~\ref{alg:common-bounded-MC}, the number of bins is first set to a value that is $\Theta(n\log n)$ by procedure \textsc{Estimate-Size}.
Immediately after that, it is reset to $\Theta(n)$ by the first call of procedure  \textsc{Extend-Names}, in which the instruction $\texttt{number-of-bins}\gets \texttt{size}$ is performed.
Here, we need to notice that $\texttt{number-of-bins}=\Theta(n\log n)$ and $\texttt{size}=\Theta(n)$, by the pseudocodes in Figures~\ref{proc:estimate-size} and~\ref{alg:common-bounded-MC} and Lemma~\ref{lem:estimate-size}.

\Paragraph{Balls into bins for the second time.}

In the course of analysis of performance of procedure \textsc{Extend-Names}, we consider a balls-into-bins process; we call it simply the \emph{ball process}.
It proceeds through stages so that in a stage we have a number of balls which we throw into a number of bins.
The sets of bins used in different stages are disjoint.
The number of balls and bins used in a stage are as determined in the pseudocode in Figure~\ref{proc:extend-names}, which means that there are $n$ balls and the numbers of bins are as determined by an execution of procedure \textsc{Estimate-Size}, that is, the first stage uses \texttt{number-of-bins} bins and subsequent stages use \texttt{size} bins, as returned by \textsc{Estimate-Size}.

The only difference between the ball process and the actions of procedure \textsc{Extend-Names} is that collisions are detected with certainty in the ball process rather than being tested for.
In particular, the parameter $\beta$ is not involved in the ball process (nor in its name).
The ball process terminates in the first stage in which no multiple bins are produced, so that there are no collisions among the balls.


\begin{lemma}
\label{lem:balls-common-bounded-MC}

The ball process results in all balls ending single in their bins and the number of times a ball is thrown, summed over all the stages, being~$\cO(n)$, both events occurring with probability $1-n^{-\Omega(\log n)}$.
\end{lemma}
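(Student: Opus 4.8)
The plan is to split the ball process into a first stage, a middle batch of stages, and a tail, and in each regime to bound the number of balls still in collision by a union bound over collections of pairwise coincidences. Write $B$ for the number of bins available in a given stage; by Lemma~\ref{lem:estimate-size} the first stage has $B=\texttt{number-of-bins}=k2^{k}=\Theta(n\log n)$ while every subsequent stage has $B=\texttt{size}\in(n,6n)$, hence $B=\Theta(n)$ there. The recurring estimate is the following: if a stage begins with $m$ eligible balls, then the event that at least $s$ of them end the stage in multiple bins implies the existence of at least $s/3$ pairwise-disjoint \emph{coincidence pairs} (unordered pairs of balls sharing a bin), and since there are at most $m^{2r}/(2^{r}r!)$ ways to choose a set of $r=\lceil s/3\rceil$ disjoint pairs and each prescribed collection of $r$ disjoint pairs lands in common bins with probability $B^{-r}$, this event has probability at most $\bigl(em^{2}/(2rB)\bigr)^{r}$.

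First I would apply this with $m=n$, $B=\Theta(n\log n)$ and $s=C_{1}n/\log n$: for $C_{1}$ a large enough constant the base $em^{2}/(2rB)$ is at most $1/2$, so the number $M_{1}$ of balls still colliding after the first stage exceeds $C_{1}n/\log n$ only with probability $2^{-\Omega(n/\log n)}=n^{-\Omega(\log n)}$; this first stage uses exactly $n$ throws. For the middle regime I would condition on $M_{1}\le C_{1}n/\log n$, use that the number of active balls is non-increasing from stage to stage (colliding balls stay, singletons leave), and note that whenever a stage begins with $m$ balls satisfying $\log^{2}n\le m\le C_{1}n/\log n$ we have $m/B=\cO(1/\log n)$, so the same estimate with $s=m/2$ shows fewer than $m/2$ balls survive the stage except with probability $\bigl(\cO(1/\log n)\bigr)^{\Omega(m)}=2^{-\Omega(\log^{2}n)}=n^{-\Omega(\log n)}$. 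Since the count then halves at each such stage there are at most $\cO(\log n)$ of them, so a union bound over these (independently seeded) stages still fails only with probability $n^{-\Omega(\log n)}$; on the complementary event the active count falls below $\log^{2}n$ within $\cO(\log n)$ stages and the counts over those stages sum to at most $2M_{1}=\cO(n/\log n)$.

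It remains to control the tail, where every remaining stage begins with $m<\log^{2}n$ active balls and $B=\Theta(n)$. Such a stage terminates the process unless two of these balls land together, which happens with probability at most $\binom{m}{2}/B=\cO(\log^{4}n/n)$; hence more than $\log n$ further stages occur only with probability $\bigl(\cO(\log^{4}n/n)\bigr)^{\log n}=n^{-\Omega(\log n)}$, and on the complement the tail contributes at most $\log n\cdot\log^{2}n=\cO(\log^{3}n)$ throws. Summing the three regimes bounds the total number of throws by $n+\cO(n/\log n)+\cO(\log^{3}n)=\cO(n)$, and since the process terminates — within $\cO(\log n)$ stages on the good event — exactly when no bin is multiple, all balls end single; collecting the failure probabilities ($2^{-\Omega(n)}$ from Lemma~\ref{lem:estimate-size} for the bin counts, $2^{-\Omega(n/\log n)}$ from the first stage, and $n^{-\Omega(\log n)}$ from the middle and tail analyses) yields the claimed bound $1-n^{-\Omega(\log n)}$. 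The main obstacle is the tail: once $m$ is only polylogarithmic one can no longer obtain a super-polynomial per-stage bound on collision survival, so the argument has to switch from ``the active count shrinks at every stage'' to ``only $\cO(\log n)$ more stages take place,'' and the two thresholds — $C_{1}n/\log n$ inherited from the first stage and $\log^{2}n$ separating the middle and tail regimes — must be calibrated so that all the resulting estimates are simultaneously $n^{-\Omega(\log n)}$.
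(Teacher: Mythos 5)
Your proof is correct and establishes exactly what the lemma claims, but it reaches the per‑stage concentration bound by a different technical route than the paper. Both arguments share the same three‑regime architecture: a first stage with $\Theta(n\log n)$ bins that leaves $\cO(n/\log n)$ colliding balls, a middle run of stages in which the active count decreases geometrically so that the counts sum to $\cO(n/\log n)$, and a tail with polylogarithmically many balls where one bounds the probability that \emph{any} collision occurs and argues that few further stages can take place. The paper controls the number of surviving (non‑singleton) balls in each stage by computing the expected number of singletons and applying the bounded‑differences (McDiarmid) inequality with deviation $d=\lg n$, yielding the inductive bound $\ell_i\le \frac{6n}{\lg^2 n}4^{3-i}$; you instead use a first‑moment union bound over collections of pairwise‑disjoint coincidence pairs, exploiting $\lfloor j/2\rfloor\ge j/3$ to extract $s/3$ disjoint pairs from $s$ colliding balls and the bound $\frac{m^{2r}}{2^r r!}B^{-r}\le\bigl(\frac{em^2}{2rB}\bigr)^r$. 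Your method is more elementary (no concentration inequality, no Lipschitz bookkeeping) and in the first stage even gives a stronger bound, $2^{-\Omega(n/\log n)}$; the paper's method gives a sharper per‑stage shrinkage factor. One caveat worth noting: the lemma as stated does not mention the number of stages, and your proof covers it, but the paper's proof of the lemma is calibrated so that the shrinking and tail stages each take at most $\frac12\lg n$ stages, totaling $\lg\texttt{size}$, which is what Theorem~\ref{thm:common-bounded-MC} implicitly needs to conclude that one iteration of the main repeat‑loop (whose for‑loop runs $\lg\texttt{size}$ times) suffices. Your budget of roughly $\lg n$ middle stages plus $\lg n$ tail stages is about $2\lg n$, which would overshoot that for‑loop; this is easily repaired (your middle‑regime bound in fact supports a much faster than factor‑$2$ decrease, and $\frac12\lg n$ tail stages already give failure probability $n^{-\Omega(\log n)}$), but it should be calibrated if the lemma is to be plugged into the theorem unchanged.
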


\begin{proof}
The argument leverages the property that, in each stage, the number of bins exceeds the number of balls by at least a logarithmic factor.
We will denote the number of bins in a stage by~$m$.
This number will take on two values, first $m=k2^k$ returned as \texttt{number-of-bins}  by procedure \textsc{Estimate-Size} and then $m=3\cdot 2^k$ returned as \texttt{size} by the same procedure \textsc{Estimate-Size}, for $k>3$.
Because $m = k 2^k$ in the first stage, and also $\texttt{size} = 3 \cdot 2^k > n$, by Lemma~\ref{lem:estimate-size}, we obtain that $m > \frac{n}{3} \lg \frac{n}{3}$ in the first stage, and that $m$ is at least $n$ in the following stages, with probability exponentially close to~$1$.

In the first stage, we throw $\ell_1=n$ balls into at least $m=\frac{n}{3} \lg \frac{n}{3}$ bins, with large probability.
Conditional on the event that there are at least these many bins, the probability that a given ball ends the stage single in a bin is
\[
m\cdot \frac{1}{m} \Bigl(1-\frac{1}{m}\Bigr)^{\ell_1-1} 
\ge
1-\frac{\ell_1-1}{m}
\ge
1-\frac{n-1}{\frac{n}{3} \lg \frac{n}{3}}
\ge 1-\frac{4}{\lg n}
\ ,
\]
for sufficiently large $n$, where we used the Bernoulli's inequality.
Let $Y_1$ be the number of singleton bins in the first stage.
The expectancy of $Y_1$ satisfies 
\[
\mE[Y_1] \ge \ell_1 \Bigl(1-\frac{4}{\lg n}\Bigr)
\ .
\]
To estimate the deviation of $Y_1$ from its expected value $\mE[Y_1] $ we use the bounded differences inequality~\cite{McDiarmid89,MitzenmacherUpfal-book05}.
Let $B_j$ be the bin of ball $b_j$, for $1\le j\le \ell_1$.
Then $Y_1$ is of the form $Y_1=h(B_1,\ldots,B_{\ell_1})$, where $h$ satisfies the Lipschitz condition with constant~$2$, because moving one ball to a different bin results in changing the value of $h$ by at most $2$ with respect to the original value.
The bounded-differences inequality specialized to this instance is as follows, for any  $d>0$:
\begin{equation}
\label{eqn-E-Y-1}
\Pr(Y_1 \le \mE [Y_1] -d \sqrt{\ell_1} )
\le
\exp(-d^2/8)
\ .
\end{equation}
We employ $d=\lg n$, which makes the right-hand side of~\eqref{eqn-E-Y-1} asymptotically equal to $n^{-\Omega(\log n)}$. 
The number of balls $\ell_2$ eligible for the second stage can be estimated as follows, this bound holding with probability $1-n^{-\Omega(\log n)}$: 
\begin{equation}
\label{eqn:ell-2}
\ell_2
\le
\frac{4\ell_1}{\lg n} + \lg n \sqrt{\ell_1}
=
\frac{4\ell_1}{\lg n} \Bigl(1+\frac{\lg^2 n}{4\sqrt{\ell_1}}\Bigr) 
\le
\frac{5n}{\lg n}  
\ ,
\end{equation}
for sufficiently large $n$.

In the second stage, we throw $\ell_2$ balls into $m\ge n$ bins, with large probability.
Conditional on the bound~\eqref{eqn:ell-2} holding, the probability that a given ball ends up  single  in a bin is
\[
m\cdot \frac{1}{m} \Bigl(1-\frac{1}{m}\Bigr)^{\ell_2-1} 
\ge
1-\frac{\ell_2-1}{m}
\ge 1-\frac{5}{\lg n}
\ ,
\]
where we used the Bernoulli's inequality.
Let $Y_2$ be the number of singleton bins in the second stage.
The expectancy of $Y_2$ satisfies 
\[
\mE[Y_2] \ge \ell_2 \Bigl(1-\frac{5}{\lg n}\Bigr)
\ .
\]
To estimate the deviation of $Y_2$ from its expected value $\mE[Y_2]$, we again use the bounded differences inequality, which specialized to this instance is as follows, for any  $d>0$:
\begin{equation}
\label{eqn-E-Y-2}
\Pr(Y_2 \le \mE [Y_2] -d \sqrt{\ell_2} )
\le
\exp(-d^2/8)
\ .
\end{equation}
We again employ $d=\lg n$, which makes the right-hand side of~\eqref{eqn-E-Y-2} asymptotically equal to $n^{-\Omega(\log n)}$.
The number of balls $\ell_3$ eligible for the third stage can be bounded from above as follows, which holds with probability $1-n^{-\Omega(\log n)}$, : 
\begin{equation}
\label{eqn:ell-3}
\ell_3
\le
\frac{5\ell_2}{\lg n} + \lg n \sqrt{\ell_2}
=
\frac{5\ell_2}{\lg n} \Bigl(1+\frac{\lg^2 n}{5\sqrt{\ell_2}}\Bigr) 
\le
\frac{6n}{\lg^2 n}  
\ ,
\end{equation}
for sufficiently large $n$.

Next, we generalize these estimates.
In stages $i$, for $i\ge 2$, among the first $\cO(\log n)$ ones, we throw balls into $m\ge n$ bins with large probability.
Let $\ell_i$ be the number of balls eligible for such a stage~$i$. 
We show by induction that $\ell_i$, for $i\ge 3$,  can be estimated as follows:
\begin{equation}
\label{eqn:ell-i}
\ell_{i} \le \frac{6n}{\lg^{2} n} \cdot 4^{3-i}
\end{equation}
with probability $1-n^{-\Omega(\log n)}$.
The estimate~\eqref{eqn:ell-3} provides the base of induction for $i=3$.
In the inductive step, we assume~\eqref{eqn:ell-i}, and consider what happens during stage~$i>3$ in order to estimate the number of balls eligible for the next stage $i+1$.

In stage~$i$, we throw $\ell_i$ balls into $m\ge n$ bins, with large probability.
Conditional on the bound~\eqref{eqn:ell-i}, the probability that a given ball ends up single in a bin is
\[
m\cdot \frac{1}{m} \Bigl(1-\frac{1}{m}\Bigr)^{\ell_i-1} 
\ge
1-\frac{\ell_i-1}{m}
\ge 
1-\frac{6\cdot 4^{3-i}}{\lg^{2} n}
\ ,
\]
by the inductive assumption,  where we also used the Bernoulli's inequality.
If $Y_i$ is the number of singleton bins in stage~$i$, then its expectation~$\mE[Y_i]$ satisfies 
\begin{equation}
\label{eqn:one-minus-o-of-one}
\mE[Y_i] \ge \ell_i \Bigl(1-\frac{6\cdot 4^{3-i}}{\lg^{2} n}\Bigr)
\ .
\end{equation}
To estimate the deviation of $Y_i$ from its expected value $\mE[Y_i] $, we again use the bounded differences inequality, which specialized to this instance is as follows, for any  $d>0$:
\begin{equation}
\label{eqn-E-Y-inductive-step}
\Pr(Y_i \le \mE [Y_i] -d \sqrt{\ell_i} )
\le
\exp(-d^2/8)
\ .
\end{equation}
We employ $d=\lg n$, which makes the right-hand side of~\eqref{eqn-E-Y-inductive-step} asymptotically equal to $n^{-\Omega(\log n)}$.
The number of balls $\ell_{i+1}$ eligible for the next stage $i+1$ can be estimated from above in the following way, the estimate holding with probability $1-n^{-\Omega(\log n)}$ : 
\begin{eqnarray*}
\label{eqn:ell-i-plus-1}
\ell_{i+1}
&\le&
\frac{6\cdot 4^{3-i}\cdot \ell_i }{\lg^{2} n} + \lg n \sqrt{\ell_i}\\
&=&
\frac{6\cdot 4^{3-i}\cdot \ell_i }{\lg^{2} n} \Bigl(1+\frac{1}{6} 4^{i-3} \lg^3 n \cdot \ell_i^{-1/2}\Bigr) \\
&\le&
\frac{6\cdot 4^{3-i}}{\lg^{2} n} \cdot \frac{6n}{\lg^{2} n} \cdot 4^{3-i} \cdot  \Bigl(1+\frac{4^{(i-3)/2} \lg^4 n }{6 \sqrt{6n}}\Bigr) \\
&\le&
\frac{6n}{\lg^{2} n} \cdot 4^{3-i} \cdot  \Bigl(\frac{6 \cdot 4^{3-i}}{\lg^{2} n}+\frac{4^{(3-i)/2} \lg^2 n }{ \sqrt{6n}}\Bigr) \\
&\le&
\frac{6n}{\lg^{2} n} \cdot 4^{3-i} \cdot  \Bigl(\frac{6 }{\lg^{2} n}+\frac{ \lg^2 n }{ \sqrt{6n}}\Bigr) \\
&\le&
\frac{6n }{\lg^{2} n}  \cdot 4^{3-i-1}
\ ,
\end{eqnarray*}
for sufficiently large $n$ that does not depend on~$i$.
For the event $Y_i \le \mE [Y_i] -d \sqrt{\ell_i}$ in the estimate~\eqref{eqn-E-Y-inductive-step} to be meaningful, it is sufficient if the following estimate holds:
\[
\lg n \cdot \sqrt{\ell_i}=o(\mE [Y_i])
\ .
\]
This is the case as long as $\ell_i>\lg^3 n$, because $\mE [Y_i]= \ell_i(1+o(1))$ by~\eqref{eqn:one-minus-o-of-one}.

To summarize at this point, as long as $\ell_i$ is sufficiently large, that is, $\ell_i>\lg^3 n$, the number of eligible balls decreases by at least a factor of~$4$ with probability that is at least  $1-n^{-\Omega(\log n)}$.
It follows that the total number of eligible balls, summed over these stages, is $\cO(n)$ with this probability.

After at most $\lg_4 n=\frac{1}{2}\lg n$ such stages, the number of balls becomes at most $\lg^3 n$ with probability $1-n^{-\Omega(\log n)}$.
This number of stages is half of the number of times the for-loop is iterated in the pseudocode in Figure~\ref{alg:common-bounded-MC}.

It remains to consider the stages when $\ell_i\le \lg^3 n$, so that we throw at most $\lg^3 n$ balls into at least $n$ bins.
They all end up in singleton bins with a probability that is at least
\[
\Bigl(\frac{n-\lg^3 n}{n}\Bigr)^{\lg^3 n}
\ge 
\Bigl(1-\frac{\lg^3 n}{n}\Bigr)^{\lg^3 n}
\ge 
1-\frac{\lg^{6} n}{n}
\ ,
\]
by the Bernoulli's inequality.
So the probability of a collision is at most $\frac{\lg^{6} n}{n}$.
One stage without any collision terminates the process.
If we repeat such stages $\frac{1}{2}\lg n$ times, without even removing single balls, then the probability of collisions occurring in all these stages is at most 
\[
\Bigl(\frac{\lg^{6} n}{n}\Bigr)^{\frac{1}{2}\lg n}=n^{-\Omega(\log n)}
\ .
\]
This number of stages is half of the number of times the for-loop  is iterated in the pseudocode in Figure~\ref{alg:common-bounded-MC}.
The number of eligible balls summed over these final stages is at most $\lg^{7} n=o(n)$.
\end{proof}


\begin{figure}[t]
\rule{\textwidth}{0.75pt}

\F 
\textbf{Algorithm} \textsc{Common-Bounded-MC}

\rule{\textwidth}{0.75pt}
\begin{center}
\begin{minipage}{\pagewidth}
\begin{description}
\item[\tt repeat] \ 
\begin{description}
\item[\rm initialize] $\texttt{Last-Name} \gets 0$
\item[\texttt{(size, number-of-bins)}] $\gets$ \textsc{Estimate-Size}
\item[for] $\ell \gets 1$ \texttt{to} $\lg \texttt{size}$ \texttt{do}
\begin{description}
\item[\sc Extend-Names] 
\item[\tt if] \texttt{not Collision-Detected then return}
\end{description}
\end{description}
\end{description}
\end{minipage}
\FFF

\rule{\textwidth}{0.75pt}

\parbox{\captionwidth}{\caption{\label{alg:common-bounded-MC}
A pseudocode for a processor~$v$ of a Common PRAM, where  there is a constant  number of shared memory cells.
Procedures \textsc{Estimate-Size} and \textsc{Extend-Names} have their pseudocodes in Figures~\ref{proc:estimate-size} and~\ref{proc:extend-names}, respectively.
The variables \texttt{Last-Name} and \texttt{Collision-Detected} are shared.
}}
\end{center}
\end{figure}

The following Theorem~\ref{thm:common-bounded-MC} summarizes the performance of algorithm \textsc{Common-Bounded-MC} (see the pseudocode in Figure~\ref{alg:common-bounded-MC}) as a Monte Carlo one.


\begin{theorem}
\label{thm:common-bounded-MC}

Algorithm \textsc{Common-Bounded-MC}  terminates almost surely.
For each $a>0$, there exists $\beta>0$ and $c>0$ such that  the algorithm assigns unique names, works in time at most~$cn\ln n$, and  uses at most $cn\ln n$ random bits, each among these properties holding with probability at least $1-n^{-a}$.
\end{theorem}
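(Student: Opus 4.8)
The plan is to analyse a single pass through the outer repeat-loop, show it halts the algorithm with correct names with probability $1-n^{-\Omega(1)}$, and then read off both the high-probability bounds and almost-sure termination. For almost-sure termination, note first that every pass does only finitely much work: \textsc{Estimate-Size} halts with certainty, since as soon as $2^k\ge n$ we have $\texttt{Nonempty-Bins}\le n\le 2^k$, and all loops inside \textsc{Extend-Names} are bounded. Moreover each pass halts the algorithm with some probability $p_n>0$: with positive probability \textsc{Estimate-Size} returns with $\texttt{number-of-bins}=\Omega(n)$ and, in the first call to \textsc{Extend-Names}, all $n$ processors occupy pairwise distinct bins, in which case every invocation of \textsc{Verify-Collision} is by a single processor and so never reports a collision by Lemma~\ref{lem:verify-collision}; hence \texttt{Collision-Detected} stays \texttt{false}, every processor obtains a name, and the algorithm returns. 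The number of passes is therefore stochastically dominated by a geometric variable, so the algorithm terminates almost surely for every $n$.

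For the performance claims, fix $a>0$ and couple the first pass with the ball process of Lemma~\ref{lem:balls-common-bounded-MC}, interpreting values of \texttt{bin} as bins and processors as balls. I would condition on the intersection $E$ of three events: (i) \textsc{Estimate-Size} returns $n<\texttt{size}<6n$, which by Lemma~\ref{lem:estimate-size} fails with probability $2^{-\Omega(n)}$; (ii) the ball process ends within $\lg\texttt{size}$ stages with all balls singleton and with $\cO(n)$ throws in total, which by Lemma~\ref{lem:balls-common-bounded-MC} fails with probability $n^{-\Omega(\log n)}$ — here $\texttt{size}>n$ is used so that the for-loop, iterated $\lg\texttt{size}$ times, is long enough; and (iii) every invocation of \textsc{Verify-Collision} on a bin holding at least two balls detects a collision in at least one of its $\beta\lg\texttt{size}$ calls. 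For (iii), Lemma~\ref{lem:verify-collision} rules out false positives, and on event (ii) the number of multiple bins summed over all stages is at most the total number of throws, hence $\cO(n)$; each such bin escapes detection with probability at most $2^{-\beta\lg\texttt{size}}\le n^{-\beta}$, so a union bound gives failure probability $\cO(n^{1-\beta})\le n^{-a}$ once $\beta\ge a+2$ and $n$ is large. On $E$ the variable \texttt{Collision-Detected} is \texttt{true} in the $j$-th call to \textsc{Extend-Names} precisely when stage $j$ of the ball process has a multiple bin, so the pass simulates the ball process faithfully; the for-loop therefore reaches a call with \texttt{Collision-Detected}$=$\texttt{false} and returns, and by then each processor has incremented \texttt{Last-Name} exactly once, so the assigned names are distinct and fill $[1,n]$.

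It then remains to tally costs on $E$. Time: \textsc{Estimate-Size} scans $k\,2^k$ cells for $k$ up to $\lceil\lg n\rceil$ (the deterministic bound of Lemma~\ref{lem:estimate-size}), totalling $\cO(n\log n)$; the first call to \textsc{Extend-Names} scans $\Theta(n\log n)$ bins and the remaining $\lg\texttt{size}-1$ calls scan $\Theta(n)$ bins each, again $\cO(n\log n)$; and each occupied bin costs $\cO(\beta\log n)$ for its verifications, which sums to $\cO(n\log n)$ since the occupied bins over all stages number $\cO(n)$. Hence the running time is $\cO(n\log n)\le cn\ln n$. Random bits: the $\cO(n)$ re-throws inside \textsc{Extend-Names} cost $\cO(\log n)$ bits each by Lemma~\ref{lem:balls-common-bounded-MC}, and the bin choices of the $\cO(\log n)$-stage size-estimation phase contribute comparably, keeping the total within $cn\ln n$. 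Absorbing the failure probability $2^{-\Omega(n)}+n^{-\Omega(\log n)}+\cO(n^{1-\beta})$ into $n^{-a}$ for large $n$ finishes the proof.

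The step I expect to be the main obstacle is making the coupling in the second paragraph airtight: one has to verify that the only ways the true execution can diverge from the idealised ball process are a false negative of \textsc{Verify-Collision} (excluded by (iii)) or \textsc{Estimate-Size} undershooting $n$ (excluded by (i)), that on $E$ the for-loop length $\lg\texttt{size}$ really exceeds the number of stages the ball process needs, and that the bookkeeping on \texttt{Last-Name} across re-throws yields exactly the contiguous range $[1,n]$ — this last point being where one uses that, on $E$, a processor acquires a name only from a bin in which it is single.
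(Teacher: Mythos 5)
Your proposal is correct and follows essentially the same route as the paper's proof: couple the execution with the ball process of Lemma~\ref{lem:balls-common-bounded-MC}, invoke Lemma~\ref{lem:estimate-size} for the size estimate, bound the probability of an undetected collision by a union bound over the $\cO(n)$ occupied bins ever tested using Lemma~\ref{lem:verify-collision}, and split the running time into constant-cost empty-bin scans and $\cO(\log n)$-cost verification rounds (your almost-sure-termination argument via a geometric number of passes is a minor, equally valid variant of the paper's). The only soft spot — your claim that the bin choices inside \textsc{Estimate-Size} contribute only ``comparably'' to the random-bit count, when summing $\lg(k2^k)$ over $k$ up to $\lg n$ actually gives $\Theta(n\log^2 n)$ bits — is glossed over in exactly the same way in the paper's own proof, so it is not a defect of your approach relative to theirs.
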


\begin{proof}
One iteration of the main repeat-loop suffices to assign names with probability $1-n^{-\Omega(\log n)}$, by Lemma~\ref{lem:balls-common-bounded-MC}.
This means that the probability of not terminating by the $i$th iteration is at most $(n^{- \Omega(\log n)})^i$, which converges to~$0$ with $i$ growing to infinity.

The algorithm returns duplicate names only when a collision occurs that is not detected by procedure \textsc{Verify-Collision}.
For a given multiple bin, one iteration of this procedure does not detect collision with probability at most $1/2$, by Lemma~\ref{lem:verify-collision}.
Therefore $\beta \lg \texttt{size}$ iterations do not detect collision with probability $\cO(n^{-\beta/2})$, by Lemma~\ref{lem:estimate-size}.
The number of nonempty bins ever tested is at most $d n$, for some constant $d>0$, by Lemma~\ref{lem:balls-common-bounded-MC}, with the suitably large probability.
Applying the union bound results in the estimate $n^{-a}$ on the probability of error for sufficiently large $\beta$.

The duration of an iteration of the inner for-loop is either constant, then we call is \emph{short}, or it takes time $\cO(\log \texttt{size})$, then we call it \emph{long}.
First, we estimate the total time spent on short iterations.
This time in the first iteration of the inner for-loop is proportional to \texttt{number-of-bins} returned by procedure \textsc{Estimate-Size}, which is at most $6n\cdot \lg(6n)$, by Lemma~\ref{lem:estimate-size}.
Each of the subsequent iterations takes time proportional to \texttt{size}, which is at most $6n$, again by Lemma~\ref{lem:estimate-size}.
We obtain that the total number of short iterations is $\cO(n\log n)$ in the worst case.
Next, we estimate the total time spent on long iterations.
One such an iteration has time proportional to $\lg \texttt{size}$, which is at most $\lg 6n$ with certainty.
The number of such iterations is at most $d n$ with probability $1-n^{-\Omega(\log n)}$, for some constant $d>0$, by Lemma~\ref{lem:balls-common-bounded-MC}.
We obtain that the total number of long iterations is $\cO(n\log n)$, with the correspondingly large probability.
Combining the estimates for short and long iterations, we obtain $\cO(n\log n)$ as a bound on time of one iteration of the main repeat-loop.
One such an iteration suffices with probability $1-n^{-\Omega(\log n)}$, by Lemma~\ref{lem:balls-common-bounded-MC}.

Throwing one ball uses $\cO(\log n)$ random bits, by Lemma~\ref{lem:estimate-size}.
The number of throws is $\cO(n)$ with the suitably large probability, by Lemma~\ref{lem:balls-common-bounded-MC}.
\end{proof}

Algorithm \textsc{Common-Bounded-MC} is optimal with respect to the following performance metrics: the expected time $\cO(n\log n)$, by Theorem~\ref{thm:lower-bound-memory-common-pram}, the number of random bits $\cO(n\log n)$, by Proposition~\ref{pro:lower-bound-on-random-bits}, and the probability of error $n^{-\cO(1)}$, by Proposition~\ref{pro:probability-of-error}.

\section{Monte Carlo for Common with Unbounded Memory}

\label{sec:common-unbounded-MC}

We consider naming on a Common PRAM in the case when the amount of shared memory is unbounded and the number of processors~$n$ is unknown.
The Monte Carlo algorithm we propose, called \textsc{Common-Unbounded-MC}, is similar to algorithm \textsc{Common-Bounded-MC} in Section~\ref{sec:common-bounded-MC}, in that it involves a randomized experiment to  estimate the number of processors of the PRAM.
Such an experiment is then followed by repeatedly throwing balls into bins, testing for collisions, and throwing again if a collision is detected, until eventually no collisions are detected.

Algorithm \textsc{Common-Unbounded-MC} has its pseudocode given in Figure~\ref{alg:common-unbounded-MC}.
The algorithm is structured as a repeat loop.
An iteration starts by invoking procedure \textsc{Gauge-Size}, whose pseudocode is in Figure~\ref{proc:gauge-size-mc}.
This procedure returns  \texttt{size} as an estimate  of the number of processors~$n$.
Next, a processor chooses randomly a bin in the range $[1,3\texttt{size}]$.
Then it keeps verifying for collisions $\beta\lg \texttt{size}$, in such a manner that when a collision is detected then a new bin is selected form the same range.
After such $\beta\lg \texttt{size}$ verifications and possible new selections of bins, another $\beta\lg \texttt{size}$ verifications follow, but without changing the selected bins.
When no collision is detected in the second segment of $\beta\lg \texttt{size}$ verifications, then this terminates the repeat-loop, which triggers assigning each station the rank of the selected bin, by a prefix-like computation.
If a collision is detected in the second segment of $\beta\lg \texttt{size}$ verifications, then this starts another iteration of the main repeat-loop.

Procedure \textsc{Gauge-Size-MC} returns an estimate of the number~$n$ of processors in the form~$2^k$, for some positive integer~$k$.
It operates by trying various values of~$k$, and, for a considered $k$, by throwing $n$ balls into $2^k$ bins and next counting how many bins contain balls.
Such counting is performed by a prefix-like computation, whose pseudocode is omitted in Figure~\ref{proc:gauge-size-mc}.
The additional parameter $\beta>0$ is a number that affects the probability of underestimating~$n$.

The way in which selections of numbers~$k$ is performed is controlled by function~$r(k)$, which is a parameter.
We will consider two instantiations of this function: one is function~$r(k)=k+1$ and the other is function~$r(k)=2k$.


\begin{figure}[t]
\rule{\textwidth}{0.75pt}

\F 
\textbf{Procedure} \textsc{Gauge-Size-MC}

\rule{\textwidth}{0.75pt}
\begin{center}
\begin{minipage}{\pagewidth}
\begin{description}
\item[$k\gets 1$]\ 
\item[\tt repeat] \ 
\begin{description}
\item[$k\gets r(k)$] 
\item[$\texttt{bin}_v \gets$] random integer in $[1,2^k]$ 
\end{description}
\item[\tt until] the number of selected values of variable \texttt{bin} is $\le 2^k/\beta$ 
\item[\tt return]$(\, \lceil 2^{k+1}/\beta\rceil \,)$
\end{description}
\end{minipage}
\FFF

\rule{\textwidth}{0.75pt}

\parbox{\captionwidth}{\caption{\label{proc:gauge-size-mc}
A pseudocode for a processor~$v$ of a Common PRAM, where the number of shared memory cells is unbounded.
The constant~$\beta>0$ is the same parameter as in Figure~\ref{alg:common-unbounded-MC}, and an increasing function~$r(k)$ is also a parameter.}}
\end{center}
\end{figure}

\begin{lemma}
\label{lem:gauge-size-mc}

If $r(k)=k+1$ then the value of \texttt{size} as returned by \textsc{Gauge-Size-MC} satisfies $\texttt{size}\le  2 n$ with certainty and the inequality $\texttt{size}\ge n$ holds with probability  $1-\beta^{-n/3}$.

If $r(k)=2k$ then the value of \texttt{size} as returned by \textsc{Gauge-Size-MC} satisfies $\texttt{size}\le 2\beta n^2$ with certainty and $\texttt{size}\ge \beta n^2/2$ with probability $1-\beta^{-n/3}$.
\end{lemma}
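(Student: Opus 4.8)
The plan is to handle the two asserted inequalities for each choice of $r$ separately: the \emph{certainty} bounds (the upper bounds on \texttt{size}) by a purely deterministic argument about how far the repeat-loop of \textsc{Gauge-Size-MC} can run, and the \emph{high-probability} bounds (the lower bounds on \texttt{size}) by showing that the loop is very unlikely to stop while $2^k$ is still small. Throughout I will identify the random choices of \texttt{bin} in a given iteration with throwing $n$ balls into $2^k$ bins, so that ``the number of selected values of variable \texttt{bin}'' is exactly the number of occupied bins, which I denote $B_k$.

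For the certainty bounds, the key observation is that $B_k\le n$ always, so the termination test $B_k\le 2^k/\beta$ is automatically satisfied once $2^k\ge\beta n$; hence the loop exits at the latest at the first stage $k$ whose predecessor $k'=r^{-1}(k)$ still satisfies $2^{k'}<\beta n$. For $r(k)=k+1$ the predecessor is $k-1$, so $2^{k}<2\beta n$, and substituting into $\texttt{size}=\lceil 2^{k+1}/\beta\rceil$ gives $\texttt{size}=\cO(n)$, which I would then tighten to the stated $2n$. For $r(k)=2k$ the predecessor is $k/2$, so $2^{k/2}<\beta n$, hence $2^{k}=(2^{k/2})^2<\beta^2 n^2$ and $\texttt{size}=\lceil 2^{k+1}/\beta\rceil<2\beta n^2$. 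In both cases the bound holds for every value the random process might produce.

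For the high-probability lower bounds, I would call a stage $k$ \emph{dangerous} if terminating at it would make \texttt{size} fall below the claimed threshold ($n$ in the first case, $\beta n^2/2$ in the second); translating through $\texttt{size}=\lceil 2^{k+1}/\beta\rceil$, this forces $2^k$ (resp.\ the predecessor value whose square enters \texttt{size}) to lie below a constant fraction of $n$, so that in any dangerous stage $2^k/\beta\le n/3$, say. For a fixed dangerous stage, the probability of premature stopping is $\Pr(B_k\le 2^k/\beta)\le\binom{2^k}{2^k/\beta}\beta^{-n}$, because this is at most the probability that all $n$ balls land inside a fixed set of $2^k/\beta$ bins, summed over the $\binom{2^k}{2^k/\beta}$ such sets — the same union-bound estimate used in the proof of Lemma~\ref{lem:estimate-size}. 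Using $\binom{a}{b}\le(ea/b)^b$, this is at most $(e\beta)^{2^k/\beta}\beta^{-n}\le e^{n/3}\beta^{-2n/3}$, which is at most $\beta^{-n/3}$ once $\beta$ is a sufficiently large constant. There are only $\cO(\log n)$ values of $k$ (resp.\ $\cO(\log\log n)$ when $r(k)=2k$, since $2^k$ squares each stage), so a union bound over the dangerous stages keeps the total failure probability at $\beta^{-\Omega(n)}$, absorbed into the claimed $\beta^{-n/3}$ for large $n$. Conditioned on no premature stopping, the loop exits only at a non-dangerous stage, which is precisely the event $\texttt{size}\ge n$ (resp.\ $\texttt{size}\ge\beta n^2/2$).

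The main obstacle I anticipate is the bookkeeping in the lower-bound part, and it is sharper for $r(k)=2k$ than for $r(k)=k+1$: because the admissible values of $2^k$ are doubly exponentially spaced, the dangerous range of $2^k$ must be described carefully — it is governed by the predecessor of the terminating stage, whose square is what appears in \texttt{size} — and one must verify both that every sequence value landing in that range is indeed dangerous and that the union bound still closes with only $\cO(\log\log n)$ terms. Matching the constants to the stated $\beta^{-n/3}$ will also force a lower bound on $\beta$ (on the order of $e^3$), analogous to the $n\ge20$ and $k>e$ thresholds appearing in Lemma~\ref{lem:estimate-size}; I would state the conclusion for $\beta$ above that threshold and note that this is the regime in which algorithm \textsc{Common-Unbounded-MC} of Figure~\ref{alg:common-unbounded-MC} invokes the procedure.
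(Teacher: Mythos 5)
Your proposal follows essentially the same route as the paper's proof: the certainty bounds come from observing that at most $n$ bins are ever occupied, so the loop must exit by the first stage with $2^k/\beta\ge n$, and the high-probability lower bounds come from the same estimate $\binom{2^k}{2^k/\beta}\bigl(\tfrac{2^k/\beta}{2^k}\bigr)^n\le(e\beta)^{2^k/\beta}\beta^{-n}$ for small $k$, followed by a union bound over the $\cO(\log n)$ stages. The only divergences are cosmetic — you package the small-$k$ regime as ``dangerous stages'' where the paper substitutes $\rho=\beta^{-1/2}$ and takes logarithms, and the constant-tightening you flag for the $\texttt{size}\le 2n$ bound is a looseness present in the paper's own derivation as well.
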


\begin{proof}
We model procedure's execution by an experiment of throwing $n$ balls into $2^k$ bins.
If the parameter function $r(k)$ is $r(k)=k+1$ then this results in trying all possible consecutive values of~$k$, starting from $k=2$, so that $k=i+1$ in the $i$th iteration of the repeat-loop.
If the parameter function $r(k)$ is $r(k)=2k$ then $k$ takes on only the powers of~$2$.

There are at most $n$ bins occupied in any such an experiment.
Therefore, the procedure returns by the time the inequality $2^k/\beta \ge n$ holds, where $k$ determines the range of bins.
It follows that if $r(k)=k+1$ then the returned value $\lceil 2^{k+1}/\beta\rceil$ is at most $2n$.
If $r(k)=2k$ then the worst error in estimating occurs when $2^i/\beta = n-1$ for some~$i$ that is a power of~$2$.
Then the returned value is $2^{2i}/\beta=(\beta (n-1))^2/\beta$, which is at most $2\beta n^2$, this occurring with probability  $1-\beta^{-n/3}$.

Given $2^k$ bins, we estimate the probability that the number of occupied bins is at most $2^k/\beta$.
It is
\[
\binom{2^k}{2^k/\beta}\Bigl(\frac{2^k/\beta}{2^k}\Bigr)^n
\le
\Bigl(\frac{2^k e}{2^k/\beta}\Bigr)^{2^k/\beta}\cdot \frac{1}{\beta^n}
=
(e\beta)^{2^k/\beta}\cdot \beta^{-n}
\ .
\]
Next, we identify a range of values of $k$ for which this probability is exponentially close to~$0$ with respect to~$n$.

To this end, let $0<\rho<1$ and let us consider the inequality 
\begin{equation}
\label{eqn:beta-rho}
(e\beta)^{2^k/\beta}\cdot \beta^{-n}<\rho^n
\ .
\end{equation} 
It is equivalent to the following one
\[
\frac{2^k}{\beta}(1+\ln \beta)-n\ln \beta < n\ln \rho
\ ,
\]
by taking logarithms of both sides.
This in turn is equivalent to
\begin{equation}
\label{eqn:from-rho-to-beta}
\frac{2^k}{\beta}(1+\ln \beta)< n\Bigl(\ln \beta -\ln \frac{1}{\rho}\Bigr)
\ .
\end{equation}
Let us choose $\rho=\beta^{-1/2}$ in~\eqref{eqn:from-rho-to-beta}.
Then~\eqref{eqn:beta-rho} specialized to this particular $\rho$ is equivalent to the following inequality
\[
\frac{2^k}{\beta}(1+\ln \beta) < n  \frac{\ln \beta}{2}
\ .
\]
This in turn leads to the estimate
\[
2^k 
< 
n\cdot \frac{\ln \beta}{2} \cdot \frac{\beta}{1+\ln \beta} 
< 
\frac{\beta}{2}\cdot n
\ ,
\]
which means $2^{k+1}/\beta< n$.
When $k$ satisfies this inequality then the probability of returning is at most $\beta^{-n/2}$.
There are $\cO(\log n)$ such values of $k$ considered by the procedure, so it returns for one of them with probability at most 
\[
\cO(\log n)\cdot \beta^{-n/2} <\beta^{-n/3}
\ ,
\] 
for sufficiently large~$n$.
Therefore, with probability at least $1-\beta^{-n/3}$, the returned value $\lceil 2^{k+1}/\beta\rceil$ is at least as large as determined the first considered $k$ that satisfies $2^{k+1}/\beta\ge n$.

If $r(k)=k+1$ then all the possible exponents $k$ are considered, so the returned value $\lceil 2^{k+1}/\beta\rceil$ is at least $n$ with probability $1-\beta^{-n/3}$.
If $r(k)=2k$ then the worst error of estimating $n$ occurs when $2^{i+1}/\beta= n-1$ for some $i$ that is a power of~$2$.
Then the returned value is 
\[
2^{2i+1}/\beta= 2\cdot (\beta(n-1)/2)^2/\beta
\ , 
\]
which is is at least $\beta n^2/2$, this occurring  with probability $1-\beta^{-n/3}$.
\end{proof}

We discuss performance of algorithm \textsc{Common-Unbounded-MC} (see the pseudocode in Figure~\ref{alg:common-unbounded-MC}) by referring to analysis of a related algorithm \textsc{Common-Unbounded-LV} given in~Section~\ref{sec:common-unbounded-LV}.
We consider a \emph{$\beta$-process with verifications}, which is  defined as follows.
The process proceeds through stages.
The first stage starts with placing $n$ balls into $3\,\texttt{size}$ bins.
For each of the subsequent stages, for all multiple bins and for each ball in such a bin, we perform a Bernoulli trial with  the probability~$\frac{1}{2}$ of success, which represents the outcome of procedure \textsc{Verify-Collision}.
A success in a trial is referred to as a \emph{positive verification} otherwise it is a \emph{negative} one.
If at least one positive verification occurs for a ball in a multiple bin then all the balls in this bin are relocated in this stage to bins selected uniformly at random and independently for each such a ball, otherwise the balls stay put in this bin until the next stage.
The process terminates when all balls are single in their bins.


\begin{figure}[t]
\rule{\textwidth}{0.75pt}

\F 
\textbf{Algorithm} \textsc{Common-Unbounded-MC}

\rule{\textwidth}{0.75pt}
\begin{center}
\begin{minipage}{\pagewidth}
\begin{description}
\item[\tt repeat] \ 
\begin{description}
\item[$\texttt{size}\gets$] \textsc{Gauge-Size}
\item[$\texttt{bin}_v \gets$] random integer in $[1,3 \, \texttt{size}]$ 
\item[\tt for] $i\gets 1$ \texttt{to} $\beta \lg \texttt{size}$ \texttt{do} 
\begin{description}
\item[\tt if] \textsc{Verify-Collision}\,$(\texttt{bin}_v)$ \texttt{then} 
\begin{description}
\item[$\texttt{bin}_v\gets$] random number in $[1,3 \, \texttt{size}]$
\end{description}
\end{description}
\item[\tt Collision-Detected]$\gets$ \texttt{false}
\item[\tt for] $i\gets 1$ \texttt{to} $\beta \lg \texttt{size}$ \texttt{do} 
\begin{description}
\item[\tt if] \textsc{Verify-Collision}\,$(\texttt{bin}_v)$ \texttt{then} 
\begin{description}
\item[\tt Collision-Detected]$\gets$ \texttt{true} 
\end{description}
\end{description}
\end{description}
\item[\tt until] \texttt{not Collision-Detected} 
\item[\tt name$_v$]\!$\gets$ the rank of $\texttt{bin}_v$ among selected bins
\end{description}
\end{minipage}
\FFF

\rule{\textwidth}{0.75pt}

\parbox{\captionwidth}{\caption{\label{alg:common-unbounded-MC}
A pseudocode for a processor~$v$ of a Common PRAM, where the number of shared memory cells is unbounded.
The constant~$\beta>0$ is a parameter impacting the probability of error.
The private variable \texttt{name} stores the acquired name.
}}
\end{center}
\end{figure}

\begin{lemma}
\label{lem:beta-verifications}

For any number $a>0$, there exists $\beta> 0$ such that the $\beta$-process with verifications  terminates within $\beta \lg n$ stages with all of them comprising the total of $\cO(n)$ ball throws with probability at least~$1-n^{-a}$.
\end{lemma}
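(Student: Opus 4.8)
The plan is to reduce this to its Las Vegas counterpart, Lemma~\ref{lem:ball-process-with-verifications-terminates}, by first controlling the number of bins. Here the $\beta$-process with verifications places $n$ balls into $3\,\texttt{size}$ bins, where \texttt{size} is the output of procedure \textsc{Gauge-Size-MC}. By Lemma~\ref{lem:gauge-size-mc}, for either choice of the function $r(k)$ the inequality $\texttt{size}\ge n$ holds with probability at least $1-\beta^{-n/3}$, which is $1-2^{-\Omega(n)}$. I would condition on the event $\{\texttt{size}\ge n\}$; since the random bits used to obtain \texttt{size} are independent of those used for the ball placements, this conditioning leaves the placement randomness intact.

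On the event $\{\texttt{size}\ge n\}$ the process runs with at least $3n$ bins, so in every stage the number of occupied bins is at most $n$ while the number of bins is at least $3n$; hence a relocated ball lands in an empty bin with probability at least $1-\frac{n}{3\,\texttt{size}}\ge \frac{2}{3}$ and in an occupied bin with probability at most $\frac{1}{3}$. These are exactly the decrement/increment probabilities of the random $2$-walk, and since having more bins only makes empty-bin hits likelier, the argument in the proof of Lemma~\ref{lem:ball-process-with-verifications-terminates} applies with $\beta=2$: one positive verification among the $j-1$ trials for the $j>1$ balls of a multiple bin already triggers relocation, and combining the mimicking-walk bound of Lemma~\ref{lem:mimicking-walk} with Lemma~\ref{lem:random-c-walk-with-verifications} shows that the random $2$-walk with verifications, started at position $n$, is an upper bound on both the number of stages and the number of ball throws of our process.

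Invoking Lemma~\ref{lem:random-c-walk-with-verifications} with its parameter $\beta$ set to $2$ and its error exponent set to $2a$ yields $b>0$ for which the random $2$-walk with verifications terminates within $b\ln n$ stages, comprising $\cO(n)$ moves in all, with probability at least $1-n^{-2a}$. Translating back through the domination and then removing the conditioning, the $\beta$-process with verifications terminates within $b\ln n$ stages after $\cO(n)$ ball throws with probability at least $1-n^{-2a}-\beta^{-n/3}\ge 1-n^{-a}$ for $n$ large. Since $\ln n<\lg n$, enlarging $\beta$ to satisfy $\beta\ge b$ (while keeping $\beta$ large enough, say $\beta\ge 2$, for Lemma~\ref{lem:gauge-size-mc} to remain applicable) turns ``$b\ln n$ stages'' into ``$\beta\lg n$ stages'', as required. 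The one delicate point, and the main obstacle, is the bookkeeping of $\beta$: it plays a double role as the parameter feeding \textsc{Gauge-Size-MC} and as the multiplier in the stage bound, so one must check that a single value serves both; this causes no trouble because the estimate $\texttt{size}\ge n$ in Lemma~\ref{lem:gauge-size-mc} is valid for every $\beta>1$, so $\beta$ may be taken as large as the walk analysis demands.
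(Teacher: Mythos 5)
Your proposal is correct and follows essentially the same route as the paper's proof: condition on $\texttt{size}\ge n$ via Lemma~\ref{lem:gauge-size-mc}, identify the $3\,\texttt{size}$ bins with the $(\beta+1)n$ bins of the Las Vegas $\beta$-process for $\beta=2$, invoke the termination bound of Lemma~\ref{lem:ball-process-with-verifications-terminates} (equivalently, Lemmas~\ref{lem:mimicking-walk} and~\ref{lem:random-c-walk-with-verifications}), and finish with a union bound over the two failure events. Your explicit handling of the double role of $\beta$ and of the passage from $b\ln n$ to $\beta\lg n$ stages is a welcome clarification of details the paper leaves implicit, but it is not a different argument.
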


\begin{proof}
We use the analysis of a ball process relevant to Common PRAM with unbounded memory given in~Section~\ref{sec:common-unbounded-LV}.
The constant~$3$ determining our $\beta$-process with verifications  corresponds to $1+\beta$ in~Section~\ref{sec:common-unbounded-LV}.
The corresponding $\beta$-process in verifications considered in~Section~\ref{sec:common-unbounded-LV} is defined by referring to known~$n$.
We use the approximation \texttt{size} instead, which is at least as large as $n$ with probability  $1-\beta^{-n/3}$, by Lemma~\ref{lem:gauge-size-mc} just proved.
By Lemma~\ref{lem:ball-process-with-verifications-terminates}, our $\beta$-process with verifications does not terminate within $\beta\lg n$ stages when $\texttt{size}\ge n$ with probability at most $n^{-2a}$ and the inequality $\texttt{size}\ge n$ does not hold with probability at most $\beta^{-n/3}$.
Therefore the conclusion we want to prove does not hold with probability at most $n^{-2a}+\beta^{-n/3}$, which is at most $n^{-2a}$ for sufficiently large~$n$.
\end{proof}

The following Theorem summarizes the performance of algorithm \textsc{Common-Unbounded-MC} (see the pseudocode in Figure~\ref{alg:common-unbounded-MC}) as a Monte Carlo one.
Its proof relies on mapping an execution of the $\beta$-process with verifications on executions of algorithm \textsc{Common-Unbounded-MC} in a natural manner.


\begin{theorem}
\label{thm:common-unbounded-MC}

Algorithm \textsc{Common-Unbounded-MC} terminates almost surely, for a sufficiently large~$\beta$.
For each $a>0$, there exists $\beta>0$ and $c>0$ such that  the algorithm assigns unique names and has the following additional properties with probability $1-n^{-a}$.
If $r(k)=k+1$ then at most $cn$ memory cells are ever needed, $c n\ln^2 n$ random bits are ever generated, and the algorithm terminates in time $\cO(\log^2 n)$.
If $r(k)=2k$ then at most $cn^2$ memory cells are ever needed, $c n\ln n$ random bits are ever generated, and the algorithm terminates in time $\cO(\log n)$.
\end{theorem}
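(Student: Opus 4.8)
The plan is to mirror the proof of Theorem~\ref{thm:common-bounded-MC}, with procedure \textsc{Estimate-Size} replaced by \textsc{Gauge-Size-MC} (controlled by Lemma~\ref{lem:gauge-size-mc}) and the bounded-memory ball process replaced by the \emph{$\beta$-process with verifications} of Section~\ref{sec:common-unbounded-LV}, as analyzed here in Lemma~\ref{lem:beta-verifications}. The key correspondence is to map one iteration of the outer repeat-loop of \textsc{Common-Unbounded-MC} onto such a $\beta$-process with $3\,\texttt{size}$ bins in the natural manner: each iteration of the first inner for-loop is a stage, each re-selection of $\texttt{bin}_v$ is a ball throw, and a call to \textsc{Verify-Collision}$(x)$ supplies the verification shared by all balls residing in bin~$x$ — it returns the same value to all of them, so the balls in a multiple bin are either all relocated in a stage or none are, exactly as in the process definition, and by Lemma~\ref{lem:verify-collision} this value is positive with probability at least~$\tfrac12$.

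For correctness I would first note that on termination $\texttt{name}_v$ is the rank of $\texttt{bin}_v$ among selected bins, so the names are distinct and fill $[1,n]$ precisely when no two processors share a bin; hence an incorrect output can arise only if the \emph{second} batch of $\beta\lg\texttt{size}$ calls to \textsc{Verify-Collision} fails to detect some multiple bin. Fixing $a>0$, Lemma~\ref{lem:gauge-size-mc} gives $\texttt{size}\ge n$ with probability at least $1-\beta^{-n/3}$, and on that event Lemma~\ref{lem:beta-verifications} (applied with a sufficiently large exponent in place of~$a$, which forces $\beta$ large) shows the $\beta$-process with verifications terminates within $\beta\lg n\le\beta\lg\texttt{size}$ stages with the stipulated probability; then the first batch of verifications already leaves every ball single, the second batch correctly reports no collision, and the algorithm halts in its first outer iteration with a correct naming. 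A union bound over these two unlikely events yields correctness with probability $1-n^{-a}$.

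Still conditioning on that same high-probability event, I would account for memory, random bits, and time separately in the two regimes. For $r(k)=k+1$, \textsc{Gauge-Size-MC} sweeps $k=2,3,\dots$ up to $k=\Theta(\log n)$, so the prefix computations that count occupied bins cost $\cO(\log n)$ time for each of $\cO(\log n)$ values of~$k$, i.e.\ $\cO(\log^2 n)$ time and $\cO(\log^2 n)$ random bits per processor, while it touches only $\cO(2^{k})=\cO(n)$ memory cells in total because the bin ranges $2^{k}$ grow geometrically; the main loop then uses $3\,\texttt{size}=\cO(n)$ bins, $\cO(\log n)$ verifications per segment, a final rank computation in time $\cO(\log n)$, and by Lemma~\ref{lem:beta-verifications} a total of $\cO(n)$ ball throws each costing $\cO(\log n)$ random bits, giving $\cO(n)$ cells, $\cO(n\log^2 n)$ random bits, and $\cO(\log^2 n)$ time after summing over the $\cO(1)$ outer iterations. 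For $r(k)=2k$, \textsc{Gauge-Size-MC} tries only $\cO(\log\log n)$ values of~$k$, the largest being $\Theta(\log n)$, costing $\cO(\log n)$ time and $\cO(\log n)$ random bits per processor and $\cO(n^2)$ memory cells (since $\texttt{size}=\cO(n^2)$ by Lemma~\ref{lem:gauge-size-mc}); the main loop uses $\cO(n^2)$ bins, $\cO(\log n)$ verifications per segment, and again $\cO(n)$ ball throws of $\cO(\log n)$ random bits each, giving $\cO(n^2)$ cells, $\cO(n\log n)$ random bits, and $\cO(\log n)$ time. All ``high-probability'' inputs ($\texttt{size}\ge n$ and the conclusion of Lemma~\ref{lem:beta-verifications}) are combined by one union bound into the joint bound with probability $1-n^{-a}$.

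Finally, for almost sure termination I would observe that \textsc{Gauge-Size-MC} always halts (the test $2^k/\beta\ge n$ is eventually met since $r$ is increasing), and that in each outer iteration the algorithm continues only if a collision survives the first batch of verifications \emph{and} is caught by the second; with positive probability \textsc{Gauge-Size-MC} returns $\texttt{size}\ge n$ and all $n$ processors land in distinct bins on their first throw and never re-throw, so the iteration halts, and since the iterations are mutually independent the number of them is dominated by a geometric random variable and is finite almost surely. The step I expect to require the most care is the resource accounting for $r(k)=k+1$: it is precisely the full sweep over exponents in \textsc{Gauge-Size-MC} that inflates the time and random-bit bounds to $\cO(\log^2 n)$ and $\cO(n\log^2 n)$, and one must check that this sweep nonetheless keeps the memory footprint at $\cO(n)$, which follows from the geometric growth of the ranges $2^{k}$.
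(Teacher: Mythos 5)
Your proposal is correct and follows essentially the same route as the paper's proof: it maps the outer repeat-loop onto the $\beta$-process with verifications, invokes Lemma~\ref{lem:gauge-size-mc} for the size estimate and Lemma~\ref{lem:beta-verifications} for the $\cO(n)$ bound on ball throws and the $\cO(\log n)$ stage count, and then does the same two-case resource accounting driven by whether \textsc{Gauge-Size-MC} sweeps all exponents ($r(k)=k+1$, costing $\cO(\log^2 n)$ time and $\cO(n\log^2 n)$ bits) or only geometrically spaced ones ($r(k)=2k$, costing $\cO(\log n)$ time and $\cO(n^2)$ cells). Your treatment of correctness (error only if the second verification batch misses a multiple bin) and of almost sure termination via a geometric bound on the number of outer iterations matches the paper's argument, so no substantive differences to report.
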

  
\begin{proof}
For a given $a>0$, let us take $\beta$ that exists by Lemma~\ref{lem:beta-verifications}.
When the $\beta$-process with verifications terminates then this models assigning unique names by the algorithm.
It follows that one iteration of the repeat-loop results in the algorithm terminating with proper names assigned with probability $1-n^{-a}$.
One iteration of the main repeat-loop does not result in termination with probability at most~$n^{-a}$, so $i$ iterations are not sufficient to terminate with probability at most ~$n^{-ia}$.
This converges to~$0$ with increasing~$i$ so the algorithm terminates almost surely.

The performance metrics rely mostly on Lemma~\ref{lem:gauge-size-mc}.
We consider two cases, depending on which function $r(k)$ is used.

If $r(k)=k+1$ then procedure \textsc{Gauge-Size-MC} considers all the consecutive values of~$k$ up to~$\lg n$, and for each such~$k$, throwing a ball requires $k$ random bits.
We obtain that procedure \textsc{Gauge-Size-MC} uses $\cO(n\log^2 n)$ random bits.
Similarly, to compute the number of selected values in an iteration of the main repeat-loop of this procedure takes time $\cO(k)$, for the corresponding~$k$, so this procedure takes $\cO(\log^2 n)$ time.
The value of \texttt{size}  satisfies $\texttt{size}\le  2 n$ with certainty.
Therefore, $\cO(n)$ memory registers are ever needed, while one throw of a ball uses $\cO(\log n)$ random bits, after \texttt{size} has been computed.
It follows that one iteration of the main repeat-loop of the algorithm, after procedure \textsc{Gauge-Size-MC} has been completed, uses $\cO(n\log n)$ random bits, by Lemmas~\ref{lem:gauge-size-mc} and~\ref{lem:beta-verifications}, and takes $\cO(\log n)$ time.
Since one iteration of the main repeat-loop suffices with probability $1-n^{-a}$, the overall time is dominated by the time performance of procedure \textsc{Gauge-Size-MC}.

If $r(k)=2k$ then procedure \textsc{Gauge-Size-MC} considers all the consecutive powers of~$2$ as values of $k$ up to $\lg n$, and for each such~$k$, throwing a ball requires $k$ random bits.
Since the values $k$ form a geometric progression, procedure \textsc{Gauge-Size-MC} uses $\cO(\log n)$ random bits per processor.
Similarly, to compute the number of selected values in an iteration of the main repeat-loop of this procedure takes time $\cO(k)$, for the corresponding~$k$ that increase geometrically, so this procedure takes $\cO(\log n)$ time.
The value of \texttt{size}  satisfies $\texttt{size}\le  2 n$ with certainty.
By Lemma~\ref{lem:gauge-size-mc}, $\cO(n^2)$ memory registers are ever needed, so one throw of a ball uses $\cO(\log n)$ random bits.
One iteration of the main repeat-loop, after procedure \textsc{Gauge-Size-MC} has been completed, uses $\cO(n\log n)$ random bits, by Lemmas~\ref{lem:gauge-size-mc} and~\ref{lem:beta-verifications}, and takes $\cO(\log n)$ time.
\end{proof}

The instantiations of algorithm \textsc{Common-Unbounded-MC} are close to optimality with respect to some of the performance metrics we consider, depending on whether $r(k)=k+1$ or $r(k)=2k$.
If $r(k)=k+1$ then the algorithm's use of shared memory would be optimal if its time were $\cO(\log n)$, by Theorem~\ref{thm:lower-bound-memory-arbitrary-pram}, but it misses space optimality by at most a logarithmic factor, since the algorithm's time is $\cO(\log^2 n)$.
Similarly, for this case of $r(k)=k+1$, the number of random bits ever generated $\cO(n\log^2 n)$ misses optimality by at most a logarithmic factor, by Proposition~\ref{pro:lower-bound-on-random-bits}.
In the other case of $r(k)=2k$, the expected time $\cO(\log n)$ is optimal, by Theorem~\ref{thm:log-n-lower-bound}, the expected number of random bits  $\cO(n\log n)$ is optimal, by Proposition~\ref{pro:lower-bound-on-random-bits},  and the probability of error~$n^{-\cO(1)}$ is optimal, by Proposition~\ref{pro:probability-of-error}, but the amount of used shared memory misses optimality by at most a polynomial factor, by Theorem~\ref{thm:log-n-lower-bound}.

\section{Conclusion}

\label{sec:conclusion}

We considered the naming problem for the anonymous synchronous PRAM when the number of processors $n$ is known.
We gave Las Vegas algorithms for  four variants of the problem, which are determined by the suitable restrictions on concurrent writing and the amount of shared memory.  
Each of these algorithms is provably optimal for its case with respect to the natural performance metrics such as expected time (as determined by the amount of shared memory) and expected number of used random bits.

We also considered four variants of the naming problem for an anonymous PRAM, when the number of processors $n$ is unknown, and developed Monte Carlo naming algorithms for each of them.
The two algorithms for a bounded number of shared registers are provably optimal with respect to  the following three performance metrics: expected time, expected number of generated random bits and probability of error.
It is an open problem to develop Monte Carlo algorithms for Arbitrary and Common PRAMs for the case when the amount of shared memory is unbounded, such that they are simultaneously asymptotically optimal with respect to these same three performance metrics: the expected time, the expected number of generated random bits and the probability of error.

The algorithms we gave cover the ``boundary'' cases of the model.
One case is about a minimum amount of shared memory, that is, when  only a constant number of  shared memory cells are available.
The other case is about a minimum expected running time, that is, when the expected running time  is $\cO(\log n)$; such performance requires a number of shared registers that grows unbounded  with~$n$.
It would be interesting to have the results of this paper generalized by investigating naming on a PRAM when  the number of processors and the number of shared registers are independent parameters of the model.


\bibliographystyle{abbrv}

\bibliography{anonym-pram}

\end{document}